\theoremstyle{plain}
\newtheorem{thm}{Theorem}
\newtheorem{lem}{Lemma}
\theoremstyle{definition}
\newtheorem{defn}{Definition}
\newtheorem{remark}{Remark}
\title{Learning the LMP-Load Coupling From Data:\\ A Support Vector Machine Based Approach}
\author[1]{Xinbo Geng\thanks{gengxbtamu@tamu.edu}}
\author[1]{Le Xie\thanks{le.xie@tamu.edu}}
\affil[1]{Department of Electrical and Computer Engineering, Texas A\&M University}
\begin{document}
\maketitle

\begin{abstract}
This paper investigates the fundamental coupling between loads and locational marginal prices (LMPs) in security-constrained economic dispatch (SCED). 
Theoretical analysis based on multi-parametric programming theory points out the unique one-to-one mapping between load and LMP vectors. 
Such one-to-one mapping is depicted by the concept of system pattern region (SPR) and identifying SPRs is the key to understanding the LMP-load coupling.
Built upon the characteristics of SPRs, the SPR identification problem is modeled as a classification problem from a market participant's viewpoint, and a Support Vector Machine based data-driven approach is proposed. 
It is shown that even without the knowledge of system topology and parameters, the SPRs can be estimated by learning from historical load and price data.
Visualization and illustration of the proposed data-driven approach are performed on a 3-bus system as well as the IEEE 118-bus system. 
\end{abstract}

\section{Introduction} 
\label{sec:introduction}
A fundamental issue with electricity market operation is to understand the impact of operating conditions (e.g. load levels at each bus) on the locational marginal prices (LMPs).  
This paper examines this key issue of the relationship between nodal load levels and LMPs. 
This issue is further compounded by the increasing levels of demand response and variable resources in the grid.

In the power systems literature, reference \cite{Conejo2005} is among the pioneering works that uses perturbation techniques to compute the sensitivities of the dual variables in SCED (e.g. LMPs) with respect to parameters (e.g. the nodal load levels).
This sensitivity calculation method is widely used in subsequent researches. However, this approach is valid only for small changes and the marginal generator stays the same. Reference \cite{Li2007} observed the ``step changes'' of LMPs with respect to increasing system load level and discovered that new binding constraints (transmission or generation) are the reason of the ``step changes''.
This is followed by further analysis on identifying the critical load levels (CLLs) that trigger such step changes of LMPs \cite{Li2009}, \cite{Bo2009}, \cite{Bo2011}. 
This line of work assumes that the system load change is distributed to each bus proportional to the base case load, which, in many instances, do not necessarily represent the real-world situations. 
Reference \cite{Zhou2011} analyzed this problem using quadratic-linear programming (QLP) and the concepts of \emph{system patterns} and \emph{system pattern regions (SPRs)} were first introduced. The SPRs depict the relationship between loads and LMPs in the whole load space, which is not confined in a small neighborhood of an operating point or constrained by a specific load distribution pattern. 
This paper is inspired by \cite{Zhou2011} but focuses on the case of piecewise linear generation costs, instead of the quadratic cost case in \cite{Zhou2011}.
The reason that we study the piecewise linear cost case is that piecewise linear cost curves are often quite representative of the market practice in the real world. In addition, some new theoretical results based on piecewise linear cost curves are derived, and are generalizable towards quadratic cost cases.

Characterizing the SPRs would provide important insights to both system operators and market participants. Reference \cite{Ji} advances the theory of SPR from system operator's perspective where the knowledge of system topology and parameters is available. For market participants, such knowledge is not necessarily available. Our previous work \cite{Geng2015} examines the issue from market participant's viewpoint and applies the geometric features of SPRs to identify them. 

This paper significantly advances our previous work by (1) completing the theoretical characterization of SPRs as a function of nodal load levels; (2) proposing a computational algorithm to identify SPRs using historical data; (3) introducing the posterior probabilities of SPRs with the presence of uncertain system parameters such as transmission limits; and (4) extending the algorithm to consider practical factors such as partial load information and loss component of LMPs.
 

The rest of the paper is organized as follows. 
Section \ref{sec:theoretical_analysis} provides the analysis of LMP-load coupling in SCED problem from the viewpoint of MLP theory, with an illustrative example. 
Section \ref{sec:sprs_with_varying_parameters} illustrates the changes of SPRs given changes of system parameters such as transmission limits. 
Based on the theoretical analysis, a data-driven algorithm for market participants to identify SPRs is described in Section \ref{sec:extended_data_driven_approach}. 
Section \ref{sec:case_study} illustrates the performance of the algorithm on the IEEE 118-bus system.
Section explores the impact of nodal load information, and Section \ref{sec:discussions} provides critical assessment
of the proposed method. Concluding remarks and future works
are presented in Section \ref{sec:conclusion}.




\section{Theoretical Analysis} 
\label{sec:theoretical_analysis}
\subsection{Notations} 
\label{sub:notations}
The notations of this paper are summarized below: mathematical symbols in hollowed-out shapes (e.g. $\mathbb{R}$) represent spaces and symbols in Calligra font (e.g. $\mathcal{S_{\pi}}$) stand for sets. The superscript ``$^*$'' indicates the variable is optimal, ``$\hat{\quad}$'' denotes estimated values (e.g. $\hat{\lambda}$). Variables with ``$\bar{\quad}$'' are expectations or average values (e.g. $\bar{\lambda}$). ``$^\intercal $'' denotes the transpose of a vector or matrix (e.g. $\mathbf{1}_{n}^\intercal $). The subscript ``$_i$'' represents the $i$th element of the vector (e.g. $P_{G_i}$), and the superscript ``$^{(i)}$'' represents the $i$th element in a set (e.g. $P_D^{(i)} $).
The vector of $n\times 1$ ones, matrix of $m\times n$ zeros and the $n\times n$ identity matrix are denoted by $\mathbf{1}_{n}$ and $\mathbf{0}_{m\times n}$ and $\mathbf{I}_n$ respectively.


\subsection{Security Constrained Economic Dispatch} 
\label{par:reivew_sced}
In real-time energy market operations, the LMPs are the results from the security-constrained economic dispatch (SCED), which is formulated as follows: 
\begin{subequations}
\label{eqn:ED_primal}
\begin{align}
& \underset{P_G^{(k)}}{\min} && \sum_{i=1}^{n_b} c_i(P_{G_i}^{(k)}) & \label{eqn:static_sced_obj} \\
& \text{s.t.} && \sum_{i = 1}^{n_b}{P_{G_i}^{(k)}} = \sum_{j=1}^{n_b}{P_{D_j}^{(k)}} & & :\lambda_1 \label{eqn:static_sced_balance} \\
& && -F^+ \le H(P_G^{(k)}-P_D^{(k)}) \le F^+ & & :\mu^+, \mu^- \label{eqn:static_sced_transmission}\\
& && P_G^- \le P_G^{(k)} \le P_G^+ & & : \eta^+, \eta^- \label{eqn:static_sced_generation}
\end{align}
\end{subequations}
where $P_G^{(k)}$ is the generation vector at time $k$, and $P_D^{(k)}$ is the load vector at time $k$. We assume there are both generation and load at each bus. Let $n_b$ denote the number of buses and $n_l$ denote the number of transmission lines, then $P_G^{(k)}, P_D^{(k)} \in \mathbb{R}^{n_b}$. 
$H \in \mathbb{R}^{n_l\times n_b} $ is the shift factor matrix.

This formulation considers each snapshot independently, therefore it is called \emph{static SCED} in this paper. For simplicity, we write $P_G^{(k)}$ and $P_D^{(k)}$ as $P_G$ and $P_D$ when discussing the static SCED.
 

The objective of SCED is to minimize the total generation cost and satisfy the transmission and generation capacity constraints while keeping the real-time balance between supply and demand. The generation cost function $c_i(P_{G_i}^{(k)})$ of generator $i$ is increasing and convex, and it is usually regarded as a  quadratic function or approximated by a piecewise linear function. 
To better reflect the current practice in electricity markets, this paper studies the SCED problem with piecewise linear generator bidding functions. And for the consideration of simplicity, the simplest form, i.e. $\sum_{i=1}^{n_b} c_i(P_{G_i}) = c^\intercal  P_G$ is being considered in this paper. 

A fundamental concept in electricity markets is the \emph{Locational Marginal Price. The LMP $\lambda_i$ at bus $i$ is defined as the change of total system cost if the demand at node $i$ is increased by 1 unit \cite{Kirschen2005}.} According to \cite{Wu1996}, the LMP vector $\lambda$ can be calculated by the following equation:
\begin{equation}
\label{eqn:LMP}
\lambda = \lambda_1 \mathbf{1}_{n_b} + H^\intercal (\mu^+ - \mu^-)
\end{equation}

For better understanding, we start with the simplest case of
static SCED. More elaborated SCED formulations are in Section
\ref{sub:ramp_constraints}. Since the line losses are not explicitly modeled the
SCED formulation, the LMPs in this paper do not contain the
loss components. Further discussions on the loss component
are in Section \ref{sub:lmps_with_loss_components}.

\subsection{SCED Analysis via MLP} 
\label{sub:SCED_analysis_via_MLP}
In real-world market operations, the parameters associated with the SCED above are typically time-varying. Therefore, it is essential to 
understand the effects of parameters on the optimality of the problem. Multi-parametric Programming (MP) problem aims at exploring the characteristics of an optimization problem which depending on a \emph{vector of parameters} \cite{Borrelli2003}. Multi-parametric Linear Programming (MLP) theory, which is the foundation of this paper, pays special attention to Linear Programming (LP) problems.

In this paper, we would like to understand the impact of parameters (i.e., load levels, line capacities, etc) on the outcome of SCED (namely, the prices). We pose the problem in view of MLP, and analyze the theoretical properties.

In the reality, the LMP vector depends upon a number of factors, including: (1) the loads in the system; (2) line flow limits; (3) ramp constraints; (4) generation offer prices; (5) topology of the system; (6) unit commitment results. We first focus on the relationship between loads and LMPs assuming the other five factors remain unchanged; then Section \ref{sec:sprs_with_varying_parameters} takes the line flow limits and ramp constraints into account; the influence of generation offer prices is explored in Section \ref{sub:on_generation_offer_prices}. Future work will investigate the impacts of unit commitment results and the system topology changes on the prices.

Consider the static SCED in the standard MLP form\footnote{In other references (e.g. \cite{Adler1992a, Gal1972}), the primal form of the MLP problem is different. For the consideration of convenience of analyzing SCED problem, we follow the formulations in \cite{Borrelli2003}. Those two forms are interchangeable.}:
\begin{subequations}
\label{eqn:SCED_MLP_Primal}
\begin{align}
  \text{Primal:} & \min\{c^\intercal P_G: AP_G + s = b + WP_D, s\ge 0\} \\
  \label{eqn:SCED_MLP_Dual}
  \text{Dual:} &\max\{-(b+WP_D)^\intercal y: A^\intercal y=-c, y\ge 0\} 
\end{align}
\end{subequations}
where:
\begin{eqnarray}
\label{eqn:details_of_the_A_W_b}
  A = 
  \begin{bmatrix}
     \mathbf{1}_{n_b}^\intercal  \\ -\mathbf{1}_{n_b}^\intercal  \\
    H\\ - H\\
    \mathbf{I}_{n_b} \\
    -\mathbf{I}_{n_b}
  \end{bmatrix},
      b = 
  \begin{bmatrix}
    0 \\0  \\ F^+ \\ -F^+ \\ P_G^+ \\ -P_G^-
  \end{bmatrix},
  W = 
  \begin{bmatrix}
    \mathbf{1}_{n_b}^\intercal \\
    -\mathbf{1}_{n_b}^\intercal\\ 
    H \\ -H \\
    \mathbf{0}_{n_b \times n_b} \\ \mathbf{0}_{n_b \times n_b}
  \end{bmatrix}
\end{eqnarray}
The load vector $P_D$ is the vector of parameters $\theta$, and the load space $\mathbb{D}$ is the parameter space $\Theta$. Since not every $P_D$ in the load space leads to a feasible SCED problem, $\mathcal{D} \in \mathbb{D}$  denotes the set of all feasible vectors of loads. \cite{Gal1972} shows that $\mathcal{D}$ is a convex polyhedron in $\mathbb{D}$.

\begin{defn}[Optimal Partition/System Pattern]
\label{defn:optimal_partition}
For a load vector $P_D \in \mathcal{D}$, we could find a finite optimal solution $P_G^*$ and $s^*$. Let $\mathcal{J} = \{1,2,\cdots, n_c\}$ denote the index set of constraints where $n_c = 2+2n_l+2n_g$ for Eqn. (\ref{eqn:SCED_MLP_Primal}). The \emph{optimal partition} $\pi = (\mathcal{B},\mathcal{N})$ of the set $\mathcal{J}$ is defined as follows:
\begin{subequations}
\begin{align}
  \mathcal{B}(P_D) &:= \{i: s_i^* = 0 \text{ for } P_D \in \mathcal{D}  \} \\
  \mathcal{N}(P_D) &:= \{j: s_j^* > 0 \text{ for } P_D \in \mathcal{D} \} 
\end{align}
\end{subequations}
Or in the dual form:
\begin{subequations}
\begin{align}
  \mathcal{B}(P_D) &:= \{i: y_i^* > 0 \text{ for } P_D \in \mathcal{D}  \} \\
  \mathcal{N}(P_D) &:= \{j: y_j^* = 0 \text{ for } P_D \in \mathcal{D} \} 
\end{align}
\end{subequations}
Obviously, $\mathcal{B}\cap \mathcal{N} = \emptyset$ and $\mathcal{B}\cup \mathcal{N} = \mathcal{J}$. The optimal partition $\pi = (\mathcal{B}, \mathcal{N})$ divides the index set into two parts: binding constraints $\mathcal{B}$ and non-binding constraints $\mathcal{N}$.
In SCED, the \emph{optimal partition} represents the status of the system (e.g. congested lines, marginal generators), and is called \emph{system pattern}.
\end{defn}

\begin{defn}[Critical Region/System Pattern Region]
The concept \emph{critical region} refers to the set of vectors of parameters which lead to the same optimal partition (system pattern) $\pi = (\mathcal{B}_{\pi}, \mathcal{N}_{\pi})$:
\begin{equation}
\mathcal{S}_{\pi} := \{P_D \in \mathcal{D}: \mathcal{B}(P_D) = \mathcal{B}_{\pi}\}
\end{equation}
For the consideration of consistency, the \emph{critical region} is called \emph{system pattern region} (SPR) in this paper.
\end{defn}
According to the definitions, each SPR is one-to-one mapped to a system pattern, the SPRs are therefore disjoint and the union of all the SPRs is the feasible set of vectors of loads: $\cup_i \mathcal{S}_{\pi_i} = \mathcal{D}$.
All the SPRs together represent a specific partition of the load space. The features of SPRs, which directly inherit from critical regions in MLP theory, are summarized as follows:
\begin{thm}
\label{thm:convexSPR}
The load space could be decomposed into many SPRs. Each SPR is a convex polytope. The relative interiors of SPRs are disjoint convex sets and each corresponds to a unique system pattern \cite{Zhou2011}. 
There exists a separating hyperplane between any two SPRs \cite{Geng2015}.
\end{thm}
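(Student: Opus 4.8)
The plan is to read off the first and third assertions directly from LP duality and to concentrate the real work on polyhedrality (the second assertion) and on separation (the fourth). \textbf{Step 1 (decomposition and uniqueness).} For every $P_D\in\mathcal{D}$ the primal in \eqref{eqn:SCED_MLP_Primal} is feasible with finite value, so strong duality and the Goldman--Tucker theorem (existence of a strictly complementary optimal pair) supply primal optimal $(P_G^*,s^*)$ and dual optimal $y^*$ with $s^*+y^*>0$ componentwise. The index sets $\mathcal{B}=\{i:s_i^*=0\}$ and $\mathcal{N}=\{i:y_i^*=0\}$ are then independent of the chosen pair, since the zero-pattern of $s$ is constant on the relative interior of the optimal primal face, and likewise for $y$ on the dual side. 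Hence $P_D\mapsto\pi(P_D)$ is single-valued, the regions $\mathcal{S}_\pi$ are pairwise disjoint, and $\bigcup_\pi\mathcal{S}_\pi=\mathcal{D}$, which is the decomposition claim together with the uniqueness-of-pattern claim.

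\textbf{Step 2 ($\mathcal{S}_\pi$ is a convex polytope).} Fix $\pi=(\mathcal{B},\mathcal{N})$. A load $P_D$ lies in $\mathcal{S}_\pi$ exactly when the optimality system
\begin{equation*}
\begin{aligned}
&AP_G+s=b+WP_D,\qquad A^\intercal y=-c,\qquad s\ge 0,\ y\ge 0,\\
&s_{\mathcal{B}}=0,\ y_{\mathcal{N}}=0,\qquad s_{\mathcal{N}}>0,\ y_{\mathcal{B}}>0
\end{aligned}
\end{equation*}
is solvable in $(P_G,s,y)$ (complementary slackness is then automatic). The solution set of this system is convex, and averaging two solutions at $P_D^{(1)},P_D^{(2)}$ yields a solution at $\tfrac12(P_D^{(1)}+P_D^{(2)})$ with the same support pattern, so $\mathcal{S}_\pi$ is convex. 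Relaxing the two strict inequalities to $\ge 0$ turns the system into a closed polyhedron in $(P_D,P_G,s,y)$, whose image under the coordinate projection onto $P_D$ is again a polyhedron by Fourier--Motzkin elimination; for a realized pattern this image equals $\overline{\mathcal{S}_\pi}$. Because $\mathcal{D}$ is a polyhedron \cite{Gal1972} and is bounded (nodal loads are limited by finite generation capacity and line limits), $\overline{\mathcal{S}_\pi}$ is a bounded polyhedron, i.e. a polytope, whose relative interior is disjoint from that of every other SPR by Step~1. Note the dual block $\{y\ge 0:A^\intercal y=-c,\ y_{\mathcal{N}}=0\}$ carries no $P_D$, so it only decides whether $\mathcal{S}_\pi$ is empty.

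\textbf{Step 3 (separating hyperplane).} Let $\pi_1\ne\pi_2$ both have nonempty SPRs. By Step~1 the convex sets $\mathcal{S}_{\pi_1}$ and $\mathcal{S}_{\pi_2}$ are disjoint, and two disjoint nonempty convex subsets of $\mathbb{R}^{n_b}$ always admit a proper separating hyperplane $\{P_D:w^\intercal P_D=\gamma\}$ (their relative interiors are disjoint). Equivalently, the polytopes $\overline{\mathcal{S}_{\pi_1}}$ and $\overline{\mathcal{S}_{\pi_2}}$ have disjoint interiors, so a supporting hyperplane of one at a point of $\overline{\mathcal{S}_{\pi_1}}\cap\overline{\mathcal{S}_{\pi_2}}$ separates them; when the two regions are facet-adjacent, the affine hull of their common facet is an explicit separator.

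The step I expect to demand the most care is the bookkeeping around strict complementarity in Steps~1--2: verifying that the optimal partition is genuinely unique (so the SPRs are disjoint and their closures meet only in lower-dimensional faces) and keeping straight that the polytope claim refers to $\overline{\mathcal{S}_\pi}$, since the strictly-complementary region itself need not be closed. Once that is pinned down, the projection and separation arguments are routine.
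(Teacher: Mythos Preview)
Your argument is sound: Goldman--Tucker for a well-defined optimal partition, convexity by averaging strictly complementary pairs, polyhedrality of the closure via Fourier--Motzkin projection, and separation of disjoint convex sets are all correct and fit together as you describe. The paper, however, does not actually prove Theorem~\ref{thm:convexSPR}; it records the result as a summary of known MLP facts and delegates the arguments to \cite{Zhou2011} and \cite{Geng2015}. The closest thing to an in-paper proof is the subsequent Lemma on the analytical form, Eqn.~\eqref{eqn:SPR_analytical_form}, which---under the non-degeneracy assumption that $A_{\mathcal{B}}$ is square and invertible---exhibits each SPR explicitly as a finite system of strict linear inequalities in $P_D$, making convexity and polyhedrality immediate. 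Your route is thus more general (it does not rely on basis invertibility and handles degeneracy through the optimal-partition definition) and fully self-contained, at the cost of being less explicit about the facets. One small point worth tightening: your boundedness claim for $\mathcal{D}$ (needed to upgrade ``polyhedron'' to ``polytope'') is correct here because the map $P_D\mapsto(\mathbf{1}^\intercal P_D,\,HP_D)$ is injective for a connected network, but that deserves the one line of justification rather than an appeal to intuition about ``finite capacity.''
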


\begin{lem}[Complementary Slackness]
\label{lem:complementary_slackness}
According to complementary slackness:
\begin{subequations}
\label{eqn:complementary_slackness}
\begin{align}
A_{\mathcal{B}} P_G^* &= (b+WP_D)_{\mathcal{B}} \label{eqn:B_primal}\\
A_{\mathcal{N}}P_G^* &< (b+WP_D)_{\mathcal{N}} \label{eqn:N_primal}\\
A_{\mathcal{B}}^\intercal y_{\mathcal{B}} &= -c, y_{\mathcal{B}} > 0 \label{eqn:determine_y}\\
y_{\mathcal{N}} &= 0  
\end{align}
\end{subequations}
where the $(\cdot)_{\mathcal{B}}$ is the sub-matrix or the sub-vector whose row indices are in set $\mathcal{B}$, same meaning applies for $(\cdot)_{\mathcal{N}}$.
\end{lem}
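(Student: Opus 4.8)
The plan is to obtain the four relations directly from linear–programming strong duality and complementary slackness, specialized to the primal--dual pair \eqref{eqn:SCED_MLP_Primal}--\eqref{eqn:SCED_MLP_Dual}, and then to rewrite the complementarity conditions using the index sets $\mathcal{B}$ and $\mathcal{N}$ furnished by Definition~\ref{defn:optimal_partition}. Since $P_D \in \mathcal{D}$ the primal is feasible, and because Definition~\ref{defn:optimal_partition} already presumes a finite optimum, strong duality applies: there exist a primal optimizer $(P_G^*, s^*)$ and a dual optimizer $y^*$ with equal objective values, and complementary slackness gives $(y^*)^\intercal s^* = 0$.

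First I would observe that $s^* \ge 0$, $y^* \ge 0$, and $(y^*)^\intercal s^* = 0$ together force $s_i^* y_i^* = 0$ for every $i \in \mathcal{J}$. Invoking the Goldman--Tucker strict complementarity theorem, I may further choose the optimizers so that $s_i^* + y_i^* > 0$ for every $i$; this makes the partition $\pi = (\mathcal{B}, \mathcal{N})$ of Definition~\ref{defn:optimal_partition} well defined and reconciles its primal and dual forms, i.e. $i \in \mathcal{B} \iff s_i^* = 0 \iff y_i^* > 0$ and $j \in \mathcal{N} \iff s_j^* > 0 \iff y_j^* = 0$.

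Next I would substitute these facts into the two equality systems. Restricting the primal equality $A P_G^* + s^* = b + W P_D$ to the rows indexed by $\mathcal{B}$ and using $s_{\mathcal{B}}^* = 0$ yields $A_{\mathcal{B}} P_G^* = (b + W P_D)_{\mathcal{B}}$, i.e. \eqref{eqn:B_primal}; restricting it to the rows indexed by $\mathcal{N}$ and using $s_{\mathcal{N}}^* > 0$ gives $A_{\mathcal{N}} P_G^* = (b + W P_D)_{\mathcal{N}} - s_{\mathcal{N}}^* < (b + W P_D)_{\mathcal{N}}$, i.e. \eqref{eqn:N_primal}. For the dual, $y_{\mathcal{N}}^* = 0$ is immediate, and splitting $A^\intercal y^* = -c$ as $A_{\mathcal{B}}^\intercal y_{\mathcal{B}}^* + A_{\mathcal{N}}^\intercal y_{\mathcal{N}}^* = -c$ and discarding the zero block leaves $A_{\mathcal{B}}^\intercal y_{\mathcal{B}}^* = -c$ with $y_{\mathcal{B}}^* > 0$, i.e. \eqref{eqn:determine_y}.

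The only genuinely delicate point is the well-definedness and consistency of the partition: a generic optimal pair satisfies merely $s_i^* y_i^* = 0$, which is too weak to guarantee the strict inequalities in \eqref{eqn:N_primal} or the strict positivity $y_{\mathcal{B}}^* > 0$, and the primal- and dual-based descriptions of $\mathcal{B}$ could in principle disagree. So the crux is the appeal to Goldman--Tucker strict complementarity; I would also remark that, since the partition is constant on the relative interior of the SPR $\mathcal{S}_\pi$ by Theorem~\ref{thm:convexSPR}, the relations hold uniformly there, while the rest of the argument is routine bookkeeping.
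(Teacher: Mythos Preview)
The paper does not actually prove this lemma: it is stated with the preamble ``According to complementary slackness'' and is treated as an immediate consequence of standard LP duality, with no further argument given. Your proposal is correct and supplies precisely the details the paper omits; in particular, your appeal to Goldman--Tucker strict complementarity is the right way to justify the strict inequalities $s_{\mathcal{N}}^* > 0$ and $y_{\mathcal{B}}^* > 0$ and to reconcile the primal and dual descriptions of $(\mathcal{B},\mathcal{N})$ in Definition~\ref{defn:optimal_partition}, a subtlety the paper passes over in silence.
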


\begin{remark}
The supply-demand balance equality constraint in is rewritten as two inequalities in Eqn. (\ref{eqn:SCED_MLP_Primal}). These two inequalities will always be binding and appear in the binding constraint set $\mathcal{B}$ at the same time. One of them is redundant and therefore eliminated from the set $\mathcal{B}$. \emph{In the remaining part of the paper, set $\mathcal{B}$ denotes the set after elimination. }
\end{remark}
\begin{remark}
If the problem is not degenerate, the cardinality of binding constraint set $\mathcal{B}$ is equal to the number of decision variables (i.e. number of generators $n_g$) \footnote{This is consistent with the statement that the number of marginal generators equals to the number of congested lines plus one.}.
The matrix $A_{\mathcal{B}}$ is invertible and $P_G^*$ is uniquely determined by $A_{\mathcal{B}}^{-1}(b+WP_D)_{\mathcal{B}}$.
\end{remark}

\begin{remark}
SCED problems with different generation costs will have different SPRs.
For a system pattern $\pi = (\mathcal{B},\mathcal{N})$, its SPR would remain the same as
long as the generation cost vector $c$ satisfies Eqn. (\ref{eqn:determine_y}).
\end{remark}

\begin{lem}
Within each SPR, the vector of LMPs is unique \cite{Ji}\cite{Geng2015}.
\end{lem}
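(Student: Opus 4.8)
The plan is to reduce the claim to a single observation: inside a fixed SPR the \emph{entire} dual optimal vector is determined by the system pattern and does not move with the load. First I would note that by \eqref{eqn:LMP} the price vector $\lambda$ is a fixed linear image of the dual multipliers $\lambda_1$, $\mu^+$, $\mu^-$, and these are simply blocks of the dual variable $y$ of \eqref{eqn:SCED_MLP_Dual}. Hence it suffices to show that $y$ takes the same value at every $P_D \in \mathcal{S}_{\pi}$ — uniqueness of $\lambda$ then follows immediately from \eqref{eqn:LMP}.

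Next, fix the pattern $\pi = (\mathcal{B}_{\pi},\mathcal{N}_{\pi})$. By complementary slackness (Lemma \ref{lem:complementary_slackness}), any dual optimal $y$ associated with a load $P_D \in \mathcal{S}_{\pi}$ satisfies $y_{\mathcal{N}_{\pi}} = 0$ and $A_{\mathcal{B}_{\pi}}^\intercal y_{\mathcal{B}_{\pi}} = -c$, exactly as in \eqref{eqn:determine_y}. The key point I would stress is that this linear system is completely independent of $P_D$: the load enters only the primal relations \eqref{eqn:B_primal}--\eqref{eqn:N_primal}, never the stationarity condition \eqref{eqn:determine_y}. Invoking the non-degeneracy remark, $|\mathcal{B}_{\pi}| = n_g$ and $A_{\mathcal{B}_{\pi}}$ is invertible, so $y_{\mathcal{B}_{\pi}} = -(A_{\mathcal{B}_{\pi}}^\intercal)^{-1}c$ is the unique solution; combined with $y_{\mathcal{N}_{\pi}} = 0$ this fixes $y$ over all of $\mathcal{S}_{\pi}$, and therefore fixes $\lambda$. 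The single balance inequality that was discarded in the earlier remark is harmless here, since it affects neither the scalar $\lambda_1$ nor the term $H^\intercal(\mu^+-\mu^-)$.

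The main obstacle is degeneracy. When the SCED is degenerate, $|\mathcal{B}_{\pi}|$ need not equal $n_g$, the matrix $A_{\mathcal{B}_{\pi}}$ is no longer square, and the set of dual optimizers is a face of the fixed polyhedron $\{y \ge 0 : A^\intercal y = -c\}$ that may be higher-dimensional, so $y$ is not literally unique (and, correspondingly, the textbook LMP is not per se well-defined pointwise). To close this gap I would either (i) show directly that the linear map $y \mapsto \lambda_1\mathbf{1}_{n_b} + H^\intercal(\mu^+-\mu^-)$ is constant on that face, which amounts to analyzing the directions $d$ in its lineality space (those with $A^\intercal d = 0$, $d_{\mathcal{N}_{\pi}} = 0$) and checking they are annihilated by the map; or (ii) run the non-degenerate argument on the relatively open, full-measure part of $\mathcal{S}_{\pi}$ where $A_{\mathcal{B}_{\pi}}$ is invertible and extend to the remainder of $\mathcal{S}_{\pi}$ using the closedness/continuity structure of SPRs from Theorem \ref{thm:convexSPR} (or, as in the cited references \cite{Ji,Geng2015}, adopt a fixed selection convention for LMPs). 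I expect (i) to be where the genuine work lies; the non-degenerate core of the proof is just bookkeeping with \eqref{eqn:complementary_slackness}.
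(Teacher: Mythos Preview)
Your proposal is correct and follows essentially the same approach as the paper: the paper's proof is a one-line pointer to Eqn.~(\ref{eqn:determine_y}), observing that the system pattern fixes $y^*$ independently of $P_D$ and hence fixes $\lambda$ via Eqn.~(\ref{eqn:LMP}). Your treatment is in fact more careful than the paper's, which tacitly relies on non-degeneracy (so that $A_{\mathcal{B}_\pi}$ is invertible) without flagging the degenerate case you rightly identify as the only real obstacle.
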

The proof of this lemma follows Eqn. (\ref{eqn:determine_y}) (dual form of system pattern definition).
Since the system pattern $\pi$ is unique within an SPR $S_{\pi}$, therefore the solution $y^*$ is unique for any $P_D \in \mathcal{S}_{\pi}$
And the vector of LMPs can be calculated using Eqn. (\ref{eqn:LMP}). This lemma also illustrates that the LMP vectors are discrete by nature in the case of linear costs.

\begin{thm}
\label{thm:diff_LMPs}
If the SCED problem is not degenerate, then different SPRs have different LMP vectors.
\end{thm}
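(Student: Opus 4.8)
The plan is to argue by contradiction. Suppose $\mathcal{S}_{\pi_1}$ and $\mathcal{S}_{\pi_2}$ are two distinct (full-dimensional) SPRs, with system patterns $\pi_1=(\mathcal{B}_{\pi_1},\mathcal{N}_{\pi_1})$ and $\pi_2=(\mathcal{B}_{\pi_2},\mathcal{N}_{\pi_2})$, whose LMP vectors coincide; call the common vector $\lambda$. I will show this forces $\mathcal{B}_{\pi_1}=\mathcal{B}_{\pi_2}$, hence $\mathcal{S}_{\pi_1}=\mathcal{S}_{\pi_2}$, a contradiction.

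Two preliminary facts do most of the work. First, by the non-degeneracy remarks following Lemma \ref{lem:complementary_slackness}, the dual optimizer $y^{(\pi)}$ is unique and constant on each SPR $\mathcal{S}_\pi$ (it solves $A_{\mathcal{B}_\pi}^\intercal y_{\mathcal{B}_\pi}=-c$, $y_{\mathcal{N}_\pi}=0$), and the dual form of the optimal-partition definition gives $\mathcal{B}_\pi=\{i:y^{(\pi)}_i>0\}$; in particular $y^{(\pi_1)}=y^{(\pi_2)}$ would already imply $\mathcal{B}_{\pi_1}=\mathcal{B}_{\pi_2}$. Second, let $z^*(P_D)$ be the common optimal value of the primal and dual in (\ref{eqn:SCED_MLP_Primal}). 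By strong duality $z^*(P_D)=-(b+WP_D)^\intercal y^{(\pi)}$ for $P_D\in\mathcal{S}_\pi$, so $z^*$ is affine there with gradient $-W^\intercal y^{(\pi)}$; a short computation using the block form of $W$ in (\ref{eqn:details_of_the_A_W_b}) and the LMP formula (\ref{eqn:LMP}) shows $W^\intercal y^{(\pi)}=\lambda^{(\pi)}$, so the gradient of $z^*$ on $\mathcal{S}_\pi$ is the constant vector $-\lambda^{(\pi)}$.

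The core of the proof is to upgrade equality of LMP vectors to equality of dual optimizers. For $i=1,2$ put $\ell_i(P_D):=-(b+WP_D)^\intercal y^{(\pi_i)}$. Since the dual feasible set $\{y\ge 0:A^\intercal y=-c\}$ does not depend on $P_D$, each $y^{(\pi_i)}$ is dual feasible for every $P_D$, and weak duality gives $\ell_i\le z^*$ on all of $\mathcal{D}$, with equality on $\mathcal{S}_{\pi_i}$. By the second fact, $\ell_1$ and $\ell_2$ have the same gradient $-\lambda$, so $\ell_1-\ell_2$ is a constant; evaluating it at a point of $\mathcal{S}_{\pi_1}$ (where $\ell_1=z^*\ge\ell_2$) forces the constant to be nonnegative, and at a point of $\mathcal{S}_{\pi_2}$ forces it to be nonpositive, so $\ell_1\equiv\ell_2$. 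Now pick $P_D'$ in the relative interior of $\mathcal{S}_{\pi_2}$: there $\ell_1(P_D')=\ell_2(P_D')=z^*(P_D')$, so the dual-feasible $y^{(\pi_1)}$ attains the optimal dual value at $P_D'$ and is therefore dual optimal there. But at $P_D'$ the dual optimum is unique and equals $y^{(\pi_2)}$, so $y^{(\pi_1)}=y^{(\pi_2)}$, and by the first fact $\mathcal{B}_{\pi_1}=\mathcal{B}_{\pi_2}$.

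I expect the hard part to be precisely this lifting from ``same $\lambda$'' to ``same $y$''. One cannot simply invert the map $y\mapsto\lambda$: by (\ref{eqn:LMP}) the LMP vector only records $\lambda_1\mathbf{1}_{n_b}+H^\intercal(\mu^+-\mu^-)$ and leaves, for instance, which transmission constraints carry the congestion underdetermined across different patterns, so distinct system patterns could in principle yield the same $\lambda$ by linear algebra alone. The fix is global rather than local: parameter-independence of the dual feasible region together with strong/weak duality makes the two patterns compete as affine minorants of the single convex function $z^*$, and two non-empty full-dimensional SPRs cannot coexist with the same LMP. A secondary point requiring care is the exact strength of the non-degeneracy hypothesis: it is used both to make the dual optimizer on each SPR unique (so that $\mathcal{B}_\pi$ equals the support of $y^{(\pi)}$) and to ensure the SPRs are full-dimensional, so that an affine function agreeing with $z^*$ on an SPR is uniquely determined and the relative interior invoked above is nonempty.
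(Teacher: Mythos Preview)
Your argument is correct and takes a genuinely different route from the paper's. Both proofs begin with the same observation---that the optimal value function $z^*$ is convex, piecewise affine, and has gradient $\pm\lambda^{(\pi)}$ on each SPR---so ``same LMP'' means ``two parallel affine pieces of $z^*$''. From there the approaches diverge. The paper invokes a geometric lemma on convex piecewise linear functions (Lemma~\ref{lem:convex_piecewise_linear_function_parallel_segments}) to reduce to two \emph{adjacent} SPRs with the same LMP; by Lemma~\ref{lem:adjacent_SPRs} these differ in exactly one binding constraint, and a case analysis ensues. The generation-constraint case is dispatched via degeneracy, but the transmission-constraint case requires a fairly intricate block-matrix computation with $A_{\mathcal{B}_i},A_{\mathcal{B}_k}$ and their inverses, ultimately yielding $h_1\mathbf{1}_{n_b}=0$ and appealing to a structural property of the shift factor matrix to obtain a contradiction.

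Your approach sidesteps all of that. The key move---exploiting that the dual feasible set $\{y\ge 0:A^\intercal y=-c\}$ is independent of $P_D$, so each $y^{(\pi_i)}$ furnishes a global affine minorant $\ell_i$ of $z^*$---lets you compare $\ell_1$ and $\ell_2$ directly without ever passing to adjacent SPRs. Equal LMPs force $\ell_1-\ell_2$ constant; the two-sided inequality $\ell_i\le z^*$ with equality on $\mathcal{S}_{\pi_i}$ pins that constant to zero; and dual uniqueness (from non-degeneracy, via the invertibility of $A_{\mathcal{B}}$) finishes the job. This is shorter, and notably it uses nothing specific to SCED: your argument shows that \emph{any} non-degenerate right-hand-side-parameterized LP has distinct dual optimizers on distinct critical regions, whereas the paper's Case~2 leans on the particular structure of $H$. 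The paper's route does yield, as a byproduct, the explicit characterization of adjacent SPRs (Lemma~\ref{lem:adjacent_SPRs}) that is used elsewhere in the paper, but for the theorem itself your argument is cleaner.
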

The proof of Theorem \ref{thm:diff_LMPs} turns out to be non-trivial, and is described as follows. 
If two SPRs have the same LMP: $\lambda^{(i)} = \lambda^{(j)}$, their energy components are the same because of the entry-wise equality, then Eqn. (\ref{eqn:LMP}) suggests that the congestion components should also be the same: $H^\intercal (\mu^{(i)} - \mu^{(j)}) = 0$. Given the fact that the null space of $H^\intercal$ is always non-empty\footnote{$\text{dim}(N(H^\intercal)) = n_l - n_b +1 \ge 0$. The equality holds if and only if the topology of the system is a tree, where $n_l = n_b -1$.}, a critical question arises: ``is it possible that $\mu^{(i)} - \mu^{(j)}$ belongs to the null space of $H^\intercal$?'' Or equivalently, ``is it possible that different congestion patterns have the same LMP vector?''
We show that the answer is ``no''. A complete proof of the theorem is provided in Appendix \ref{sec:proof_diff_LMP_diff_SPR}.

\subsection{An Illustrative Example} 
\label{sub:an_illustrative_example}
The 3-bus system in Fig. \ref{fig:3Bus2GeneSystem} serves as an illustrative example in this paper. It was first analyzed using the Multi-Parametric Toolbox 3.0 (MPT 3.0) \cite{Herceg2013}, results are shown in Fig. \ref{fig:SPR_3bus_mpt}. A Monte-Carlo simulation is conducted, with load vectors colored according to their LMPs. The theoretical results are verified by the Monte-Carlo simulation results.
Notice that $P_{D_2}$ and $P_{D_3}$ could be negative. This is for the consideration of renewable resources in the system, which are typically considered as negative loads. 

\begin{figure}[htbp]
  \centering
  \includegraphics[width=0.6\linewidth]{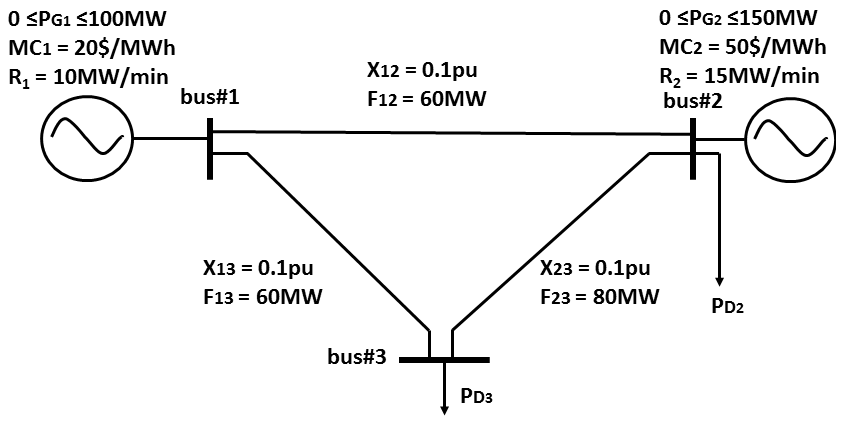}
  \caption{3-bus System}
  \label{fig:3Bus2GeneSystem}
\end{figure}

\begin{figure}[htbp]
  \centering
  \begin{subfigure}[t]{0.49\linewidth}
  \centering
  \includegraphics[width=\linewidth]{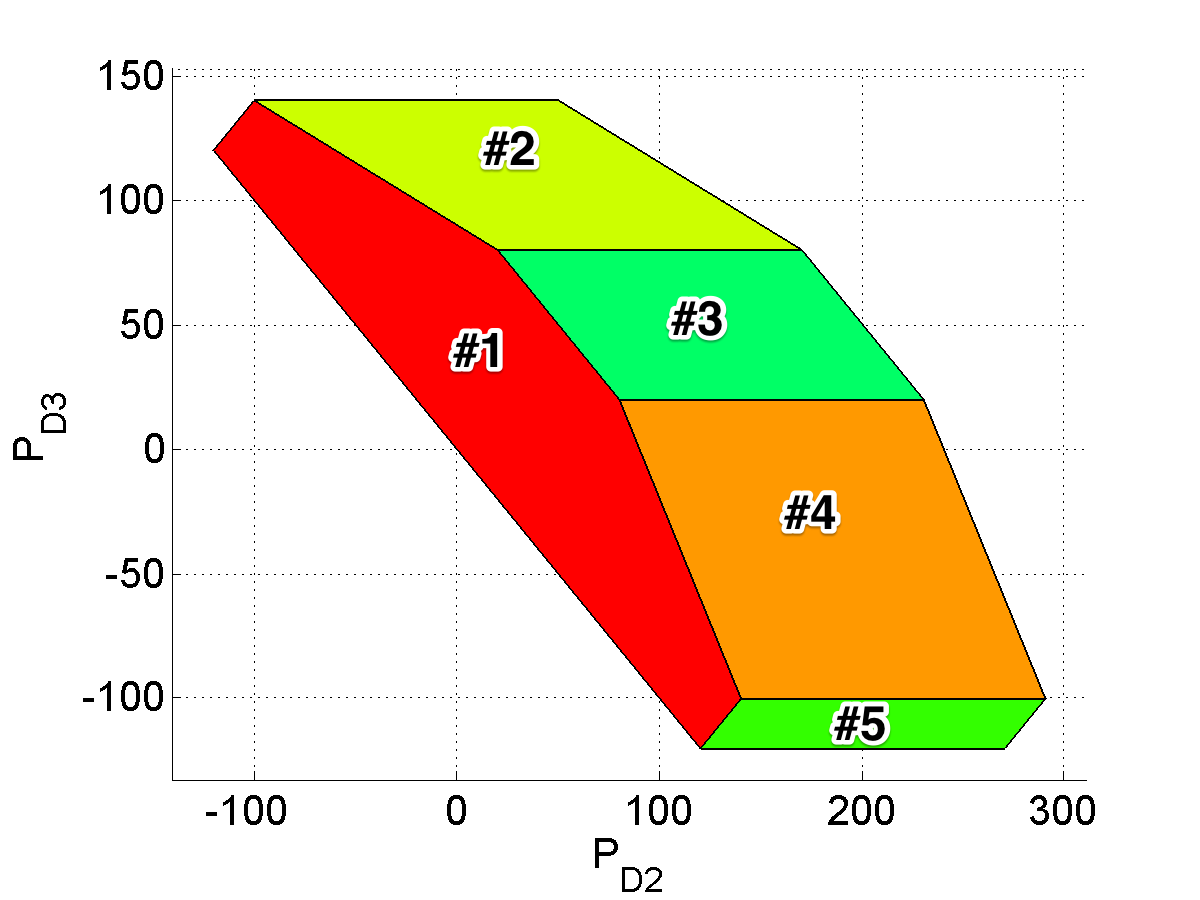}
  \caption{Theoretical Results Using MPT 3.0}
  \label{fig:SPR_3bus_mpt}
  \end{subfigure}
  \begin{subfigure}[t]{0.49\linewidth}
  \centering
  \includegraphics[width=\linewidth]{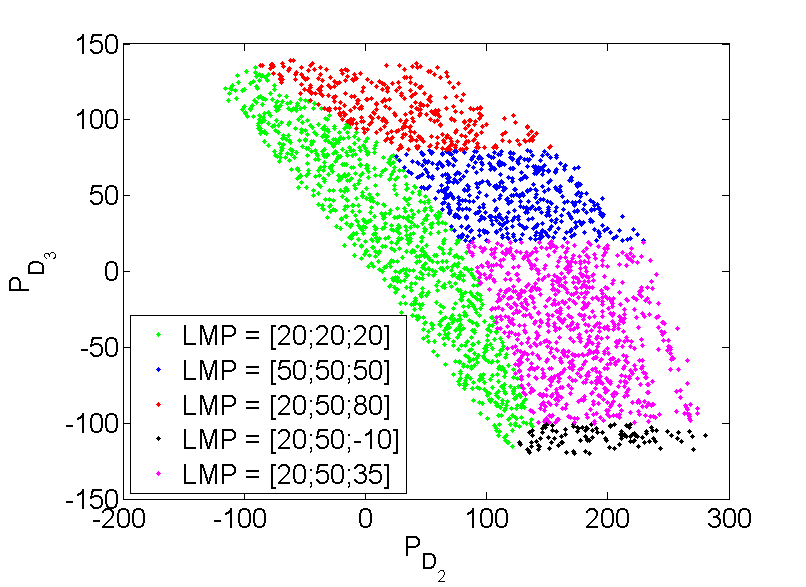} 
  \caption{Monte-Carlo Simulation}
  \label{fig:SPR_3busNO_mc}
  \end{subfigure}
  \caption{SPRs of the 3-bus System (Static SCED)}
\end{figure}  

\section{SPRs with Varying Parameters} 
\label{sec:sprs_with_varying_parameters}

Section \ref{sub:SCED_analysis_via_MLP} shows construction properties of the load space with fixed parameters of the system (e.g. transmission constraints). However, these parameters might be time-varying due to reasons like dynamic ratings or active ramping constraints. This subsection reveals more features of SPRs with respect to varying factors in the system.

\begin{lem}[Analytical Form of SPRs]
Let $\mathbf{I}_{\mathcal{B}} \cdot (b+WP_D)$ represent the sub-vector $(b+WP_D)_{\mathcal{B}}$, where $\mathbf{I}_{\mathcal{B}}$ is the sub-matrix of the identity matrix  whose row indices are in set $\mathcal{B}$.
Then the analytical form of the SPRs could be solved from Eqn. (\ref{eqn:B_primal}) and Eqn. (\ref{eqn:N_primal}) as follows:
\begin{equation}
\label{eqn:SPR_analytical_form}
(\mathbf{I}_{\mathcal{N}}A\cdot (\mathbf{I}_{\mathcal{B}}A)^{-1}\mathbf{I}_{\mathcal{B}}-\mathbf{I}_{\mathcal{N}})(b+W P_D) < 0 
\end{equation}
\end{lem}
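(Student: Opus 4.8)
The plan is to derive the analytical form directly from the complementary slackness conditions in Lemma \ref{lem:complementary_slackness}, specifically Eqn. (\ref{eqn:B_primal}) and Eqn. (\ref{eqn:N_primal}), by eliminating the primal variable $P_G^*$. First I would invoke the non-degeneracy remark to conclude that $A_{\mathcal{B}} = \mathbf{I}_{\mathcal{B}}A$ is square and invertible, so that Eqn. (\ref{eqn:B_primal}) can be solved uniquely for the optimal generation:
\begin{equation}
P_G^* = (\mathbf{I}_{\mathcal{B}}A)^{-1}(b+WP_D)_{\mathcal{B}} = (\mathbf{I}_{\mathcal{B}}A)^{-1}\mathbf{I}_{\mathcal{B}}(b+WP_D). \notag
\end{equation}
The key observation is that membership of a load vector $P_D$ in the SPR $\mathcal{S}_{\pi}$ is characterized entirely by the sign pattern of the slacks: the binding constraints are automatically satisfied with equality by this choice of $P_G^*$, so the only remaining requirement is that the non-binding constraints be strictly slack.

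Next I would substitute this expression for $P_G^*$ into the strict inequality Eqn. (\ref{eqn:N_primal}), which reads $A_{\mathcal{N}}P_G^* < (b+WP_D)_{\mathcal{N}}$, i.e. $\mathbf{I}_{\mathcal{N}}A\,P_G^* < \mathbf{I}_{\mathcal{N}}(b+WP_D)$. Plugging in gives
\begin{equation}
\mathbf{I}_{\mathcal{N}}A(\mathbf{I}_{\mathcal{B}}A)^{-1}\mathbf{I}_{\mathcal{B}}(b+WP_D) < \mathbf{I}_{\mathcal{N}}(b+WP_D), \notag
\end{equation}
and moving everything to one side yields exactly Eqn. (\ref{eqn:SPR_analytical_form}). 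I would then note that this is a system of linear inequalities in $P_D$, consistent with Theorem \ref{thm:convexSPR}'s claim that each SPR is a (convex) polytope, and remark that the derivation also shows the SPR is fully determined once the binding set $\mathcal{B}$ (the system pattern) is fixed.

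The main subtlety — rather than a hard obstacle — is justifying that the sign conditions on the slacks are both necessary and sufficient for $P_D$ to lie in $\mathcal{S}_{\pi}$, and that no separate dual feasibility condition on $P_D$ is needed. For necessity, one uses the definition of the optimal partition directly. For sufficiency, one checks that whenever Eqn. (\ref{eqn:SPR_analytical_form}) holds, the pair $(P_G^*, s^*)$ built from the formula above together with the dual vector $y$ determined by Eqn. (\ref{eqn:determine_y}) — which does not depend on $P_D$ at all — satisfies primal feasibility, dual feasibility, and complementary slackness, hence is optimal with the prescribed partition; the dual feasibility $A_{\mathcal{B}}^\intercal y_{\mathcal{B}} = -c$, $y_{\mathcal{B}}>0$ is a condition purely on the system pattern and the cost vector, so it plays no role in carving out the region in load space. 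A minor point to mention is that degeneracy is explicitly excluded, which is what licenses treating $\mathbf{I}_{\mathcal{B}}A$ as invertible; in the degenerate case the same inequalities describe $\mathcal{S}_{\pi}$ but the representation of $P_G^*$ is no longer unique.
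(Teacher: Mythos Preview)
Your proposal is correct and follows essentially the same approach as the paper, which simply presents the lemma as the direct result of solving Eqn.~(\ref{eqn:B_primal}) for $P_G^*$ (using the non-degeneracy remark that $A_{\mathcal{B}}$ is invertible) and substituting into Eqn.~(\ref{eqn:N_primal}). Your additional discussion of necessity and sufficiency, and the observation that dual feasibility is $P_D$-independent, goes beyond what the paper states explicitly but is consistent with it.
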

We can calculate the analytical expressions of the SPRs using Eqn. (\ref{eqn:SPR_analytical_form}). An illustrative example with complete details is provided in Appendix \ref{sec:analytical_form_of_the_sprs_of_the_3_bus_system_in_fig_}.

\begin{remark}
Eqn. (\ref{eqn:SPR_analytical_form}) could be written as:
\begin{equation}
\label{eqn:SPR_shape}
  (\mathbf{I}_{\mathcal{N}}A (\mathbf{I}_{\mathcal{B}}A)^{-1}\mathbf{I}_{\mathcal{B}}-\mathbf{I}_{\mathcal{N}})\cdot W P_D < (\mathbf{I}_{\mathcal{N}}-\mathbf{I}_{\mathcal{N}}A (\mathbf{I}_{\mathcal{B}}A)^{-1}\mathbf{I}_{\mathcal{B}})b
\end{equation}
This indicates the shape of the SPR $\mathcal{S}_{\pi}$ only depends on two factors: (1) the corresponding system pattern $\pi = (\mathcal{B},\mathcal{N})$; (2) matrices $A$ and $W$, namely the shift factor matrix $H$ according to Eqn. (\ref{eqn:details_of_the_A_W_b}).
Small changes of vector $b$ only parallel-shift the SPRs' boundaries.
\end{remark}

\subsection{Dynamic Line Rating} 
\label{sub:dynamic_line_rating}
\emph{Dynamic line rating} (DLR), contrary to the \emph{static line rating} (SLR),  refers to the technology that optimizes the transmission capacity based on the real-time conditions such as ambient temperature and wind speed \cite{douglass1996real}.
It is considered to be more adaptive in maximizing the line potential while keeping the secure grid operation.

From dispatch point of view, DLR can be represented by the changes of transmission limits $F^+$ in Eqn. (\ref{eqn:static_sced_transmission}). It changes the vector $b$ in Eqn. (\ref{eqn:details_of_the_A_W_b}) and thus translate the boundaries of SPRs.

The 3-bus system in Fig. \ref{fig:3Bus2GeneSystem} with different transmission limits is analyzed via MPT 3.0. Compared with the standard transmission limits $[60;60;80]$, when we increase the limits by $10\%$ (Fig. \ref{fig:line_11}), SPR \#3 expands but SPR \#1, \#2 and \#4 shrink; when we decrease the limits by $10\%$ (Fig. \ref{fig:line_09}), SPR \#3 shrinks but SPR \#1, \#2 and \#4 expand.
This verifies the claim that dynamic line ratings only shift the boundaries without altering the shapes of SPRs. 
The implication of having DLR is that SPRs in Fig. \ref{fig:spr_3bus_dlr_mc} are overlapping instead of completely separable in Fig. \ref{fig:SPR_3busNO_mc}. Details of the Monte-Carlo simulation are provided in Section \ref{ssub:3_bus_system_dlr}.

\begin{figure}[htbp]
  \centering
  \begin{subfigure}[t]{0.49\linewidth}
  \centering
  \includegraphics[width=\linewidth]{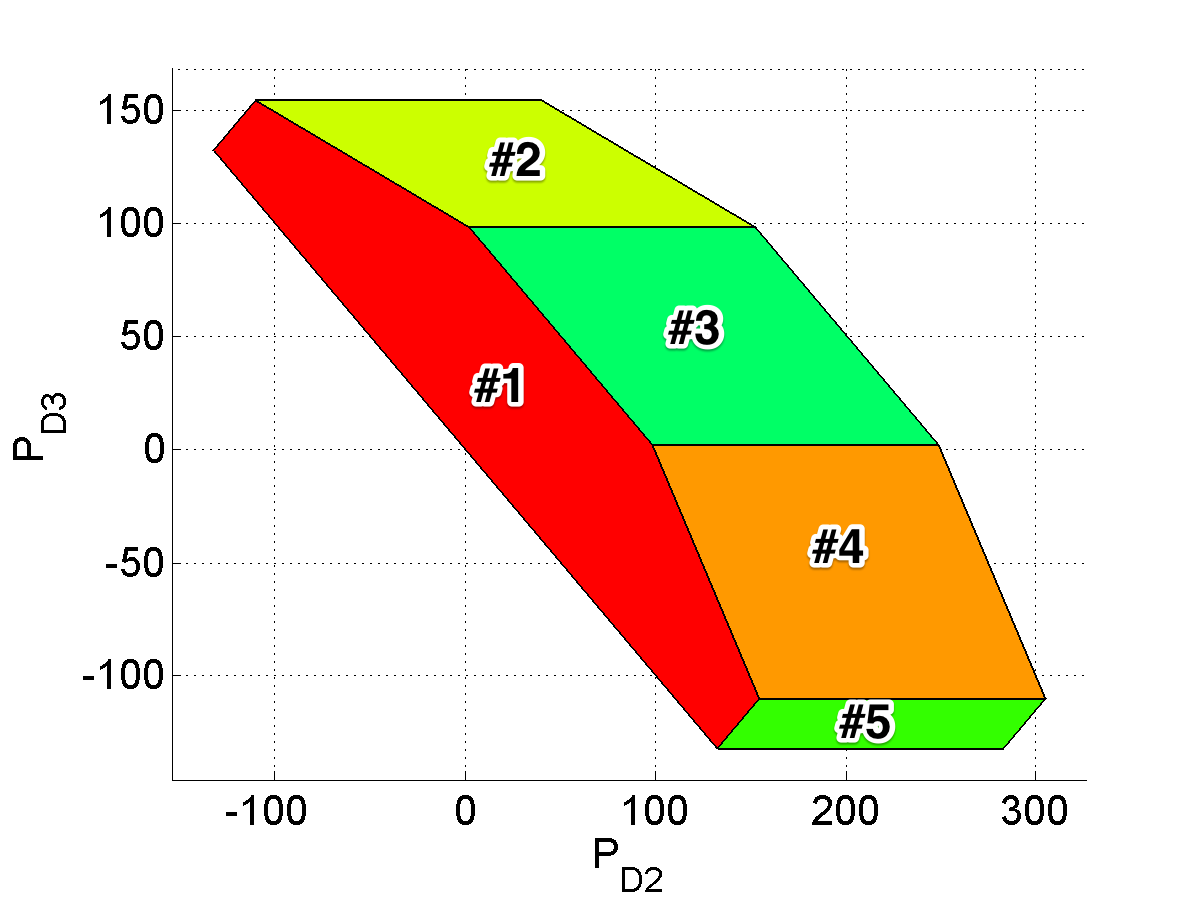} 
  \caption{Line Limits: (66, 66, 88)}
  \label{fig:line_11}
  \end{subfigure}
  \begin{subfigure}[t]{0.49\linewidth}
  \centering
  \includegraphics[width=\linewidth]{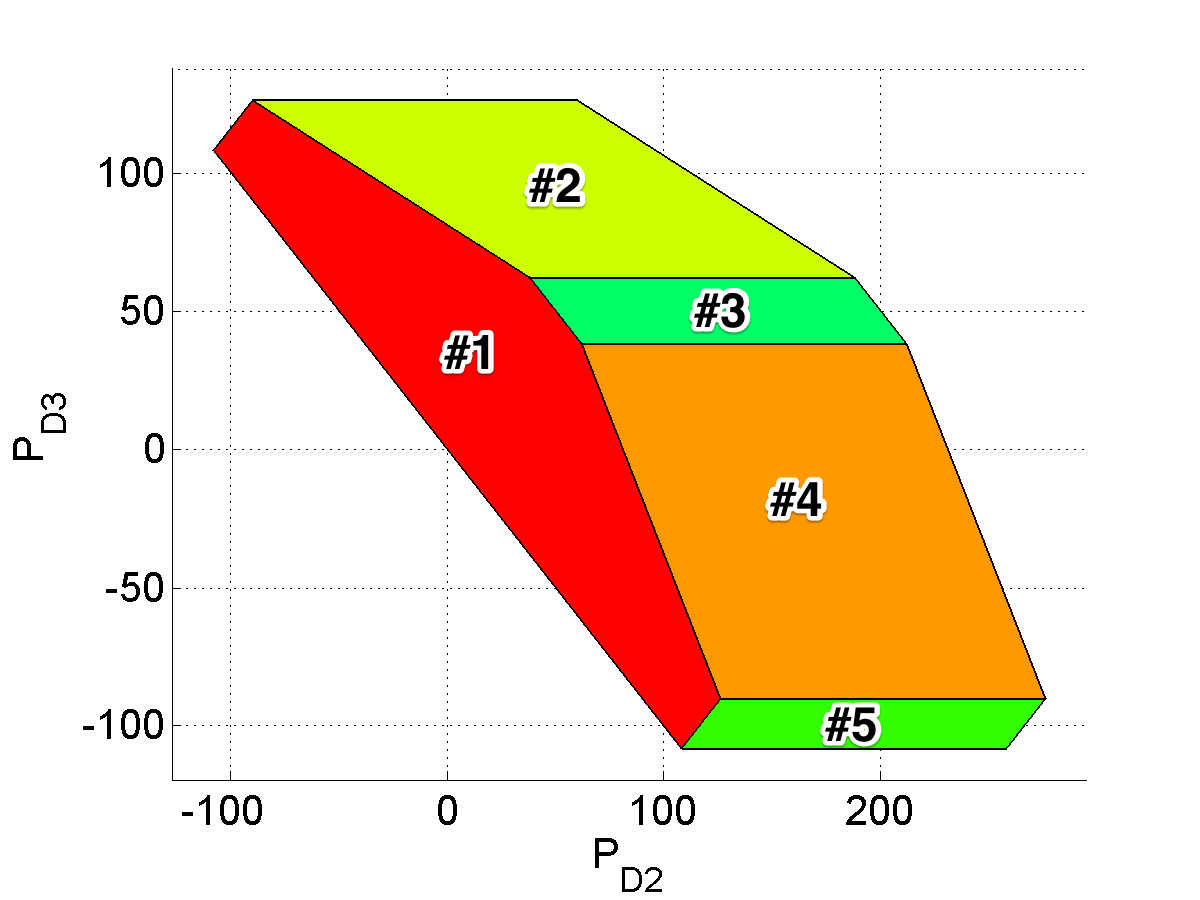} 
  \caption{Line Limits: (54, 54, 72)}
  \label{fig:line_09}
  \end{subfigure}  
  \caption{SPRs of the 3-bus System (Static SCED with DLRs)}
  \label{fig:SPR_3bus_dlr}
\end{figure}

\begin{figure}[htbp]
  \centering
  \includegraphics[width=0.6\linewidth]{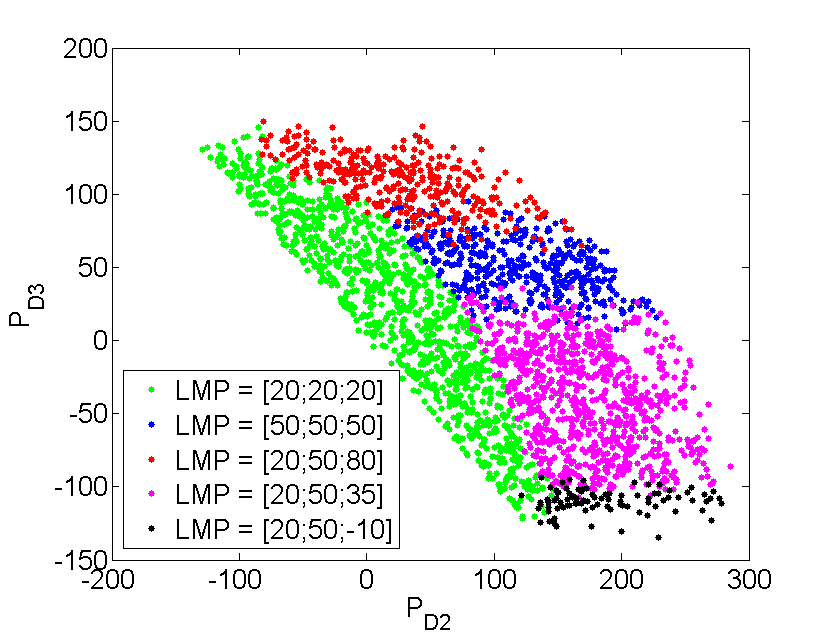}
  \caption{Monte-Carlo Simulation (Static SCED with DLRs)}
  \label{fig:spr_3bus_dlr_mc}
\end{figure}



\subsection{Ramping Constraints} 
\label{sub:ramping_constraints}
The analysis of SPRs can also be generalized to the dispatch models that include inter-temporal constraints such as ramping:
\begin{equation}
\label{eqn:ramp_const}
  P_G^{k-1} - R^- \Delta t \le P_G^k  \le P_G^{k-1} + R^+ \Delta t
\end{equation}
In Eqn.(\ref{eqn:ramp_const}), $R^+$ and $R^-$ represent the ramp up and down constraints of generators.

Adding ramp constraints to the static SCED problem is equivalent with replacing the generation capacity constraints Eqn. (\ref{eqn:static_sced_generation}) with:
\begin{equation}
\label{eqn:revise_generation_limits}
\max\{P_G^-, P_G^{k-1} - R^- \Delta t\} \le P_G^k  \le \min\{P_G^+, P_G^{k-1} + R^+ \Delta t\}
\end{equation}
When the ramp capacity is not binding, i.e. $P_G^- > P_G^{k-1} - R^- \Delta t$ and $P_G^+ < P_G^{k-1} + R^+ \Delta t$, the SCED problem is the same as the case where no ramp constraints are considered. The SPRs would be exactly the same as in Fig. \ref{fig:SPR_3bus_mpt} and \ref{fig:SPR_3busNO_mc}.
However, active ramp constraints change the actual generation constraints, and therefore change the parameter $b$ in Eqn.(\ref{eqn:details_of_the_A_W_b}). 
This leads to parallel shift of the boundaries of SPRs. The impacts of ramping constraints on SPRs is similar with the case of dynamic line ratings.

The 3-bus system, again, is analyzed via both MPT 3.0 and Monte-Carlo simulation. 
Fig. \ref{fig:ramp_3030} and \ref{fig:ramp_100100} demonstrate the cases where ramp constraints are active. SPRs look similar with parallel changes on the boundaries. 
When analyzing the load and LMP data, we will again see the overlapping SPRs (Fig. \ref{fig:spr_ramp_mc}).

\begin{figure}[htbp]
  \centering
  \begin{subfigure}[t]{0.49\linewidth}
  \centering
  \includegraphics[width=\linewidth]{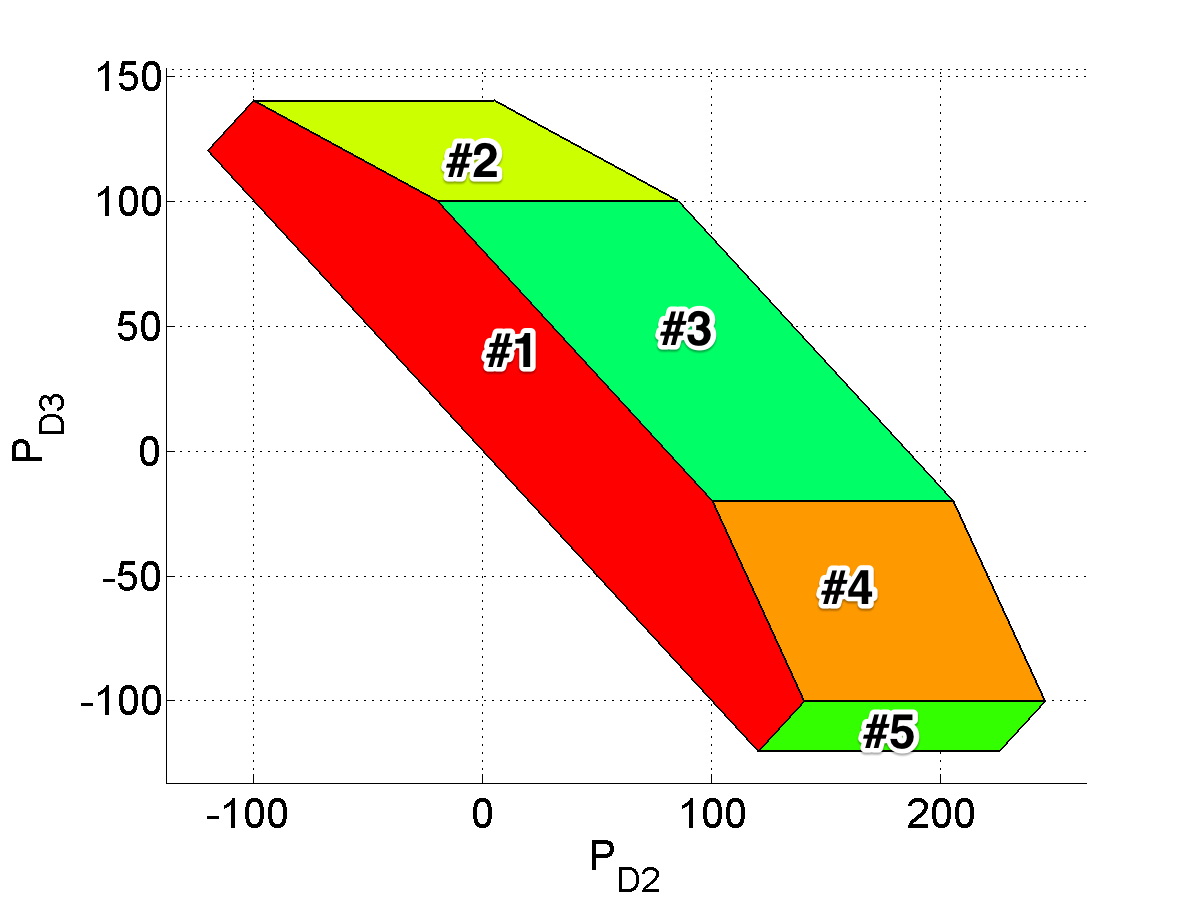}
  \caption{Previous Generation: (30; 30)}
  \label{fig:ramp_3030}
  \end{subfigure}
  \begin{subfigure}[t]{0.49\linewidth}
  \centering
  \includegraphics[width=\linewidth]{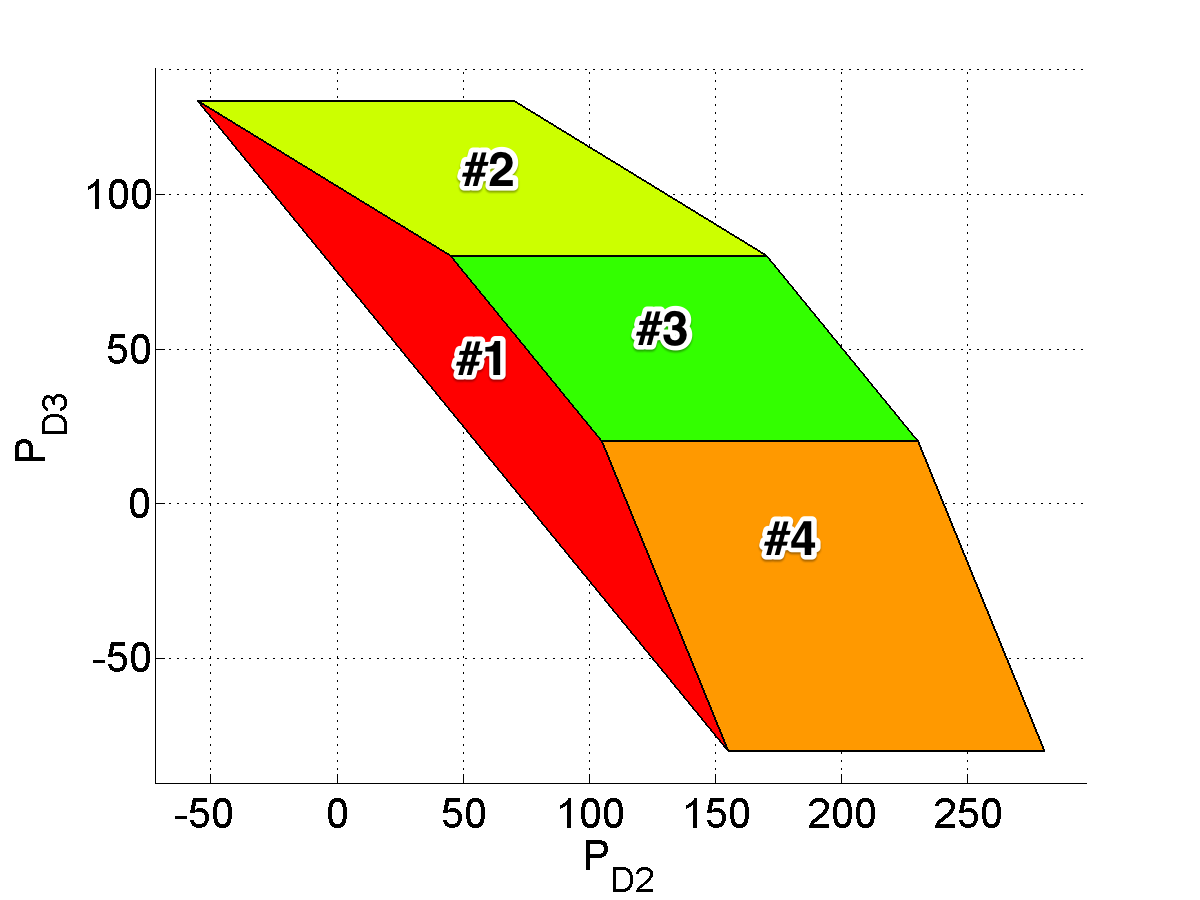}
  \caption{Previous Generation: (100; 100)}
 \label{fig:ramp_100100}
  \end{subfigure}  
  \caption{SPRs of the 3-bus System (SCED with Ramp constraints)}
  \label{fig:spr_ramp_mpt}
\end{figure}

\begin{figure}[htbp]
  \centering
  \includegraphics[width=0.6\linewidth]{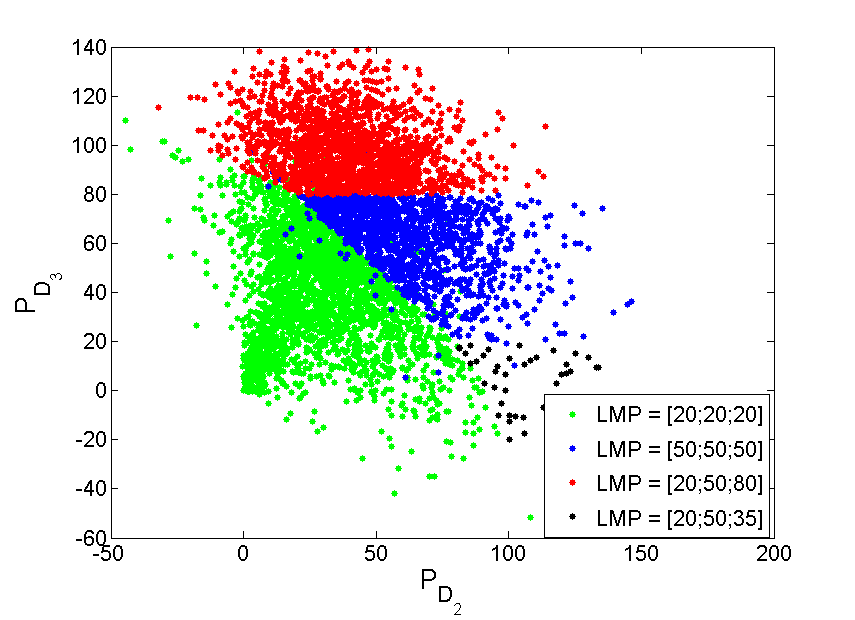}
  \caption{Monte-Carlo Simulation (SCED with Ramp constraints)}
  \label{fig:spr_ramp_mc}
\end{figure}


\section{A Data-driven Approach to Identifying SPRs} 
\label{sec:extended_data_driven_approach}
The SPRs depict the fundamental coupling between loads and LMP vectors. Massive historical data could help market participants estimate SPRs, understand the load-LMP coupling and then forecast LMPs. This section proposes a data-driven method to identify SPRs, which is a significant improvement of the basic method in \cite{Geng2015} by considering varying system parameters and the probabilistic nature of system parameters.

\subsection{The SPR Identification Problem} 
\label{sub:data_driven_approach_revisited}
\subsubsection{SPR Identification as a Classification Problem} 
\label{ssub:model_as_a_classification_problem}
A \emph{classifier} is an algorithm to give a \emph{label} $y$ to each \emph{feature} vector $x$. 
The feature vectors sharing the same labels belong to the same \emph{class}. The objective of the classification problem is to find the best classifier which could classify each feature vector accurately. For the parametric classifiers, there is always a training set, i.e. a group of feature vectors whose labels are known. There are two steps in a classification problem: training and classifying. \emph{Training} usually means solving an optimization problem over the training set to find the best parameters of the classifier. And \emph{classifying} is to classify a new feature vector with the trained classifier.

According to Section \ref{sub:SCED_analysis_via_MLP},  the load vectors in an SPR share many common features (e.g. vectors of LMPs). Theorem \ref{thm:diff_LMPs} proved that the LMP vectors are distinct for different SPRs. Therefore, one SPR can be regarded as a \emph{class} and the LMP vector is the \emph{label} of each class. Theorem \ref{thm:convexSPR} proves the existence of the separating hyperplanes. Since each separating hyperplane labels two SPRs with different sides, it turned out that the separating hyperplanes are classifiers and the key of identifying SPRs is to find optimal hyperplanes, which is exactly the objective of Support Vector Machine (SVM).

\subsubsection{SPR Identification with SVM} 
\label{ssub:spr_identification_with_svm}
Suppose there is a set of labeled load vectors for training and those load vectors belong to only \emph{two} distinct SPRs (labels $y^{(i)} \in \{1,-1\}$).
Then the SPR identification problem with a \emph{binary} SVM classifier (separable case) is stated below:
\begin{subequations}
\begin{align}
  \min_{w,b} & \qquad \frac{1}{2} w^\intercal  w   \label{eqn:separable_SVM_obj} \\ 
  \text{s.t} & \qquad y^{(i)}(w^\intercal  P_D^{(i)}-b) \ge 1, y^{(i)} \in \{-1,1\} \label{eqn:separable_SVM_cons}	
\end{align}
\end{subequations}
The word ``binary'' here specifies only two classes (i.e. SPRs) are being considered. Eqn. (\ref{eqn:separable_SVM_cons}) is feasible only when the two SPRs are not overlapping and there exists at least one hyperplane thoroughly separating them (separable case). For any load vector $P_D$ in the load space, 
$w^\intercal P_D - b = 0$ represents the separating hyperplane where $w$ is the norm vector to the hyperplane. Two lines satisfying $w^\intercal P_D - b = \pm 1$ separate all the training data and formulate an area with no points inside. This empty area is called \emph{margin}. The width of the margin is $2/||w||$, which is the distance between those two lines. The optimal solution refers to the separating hyperplane which maximizes the width of the margin $2/||w||$, therefore the objective of the binary SVM classifier is to minimize the norm of vector $w$.
\begin{figure}[htbp]
  \centering
  \includegraphics[width=0.5\linewidth]{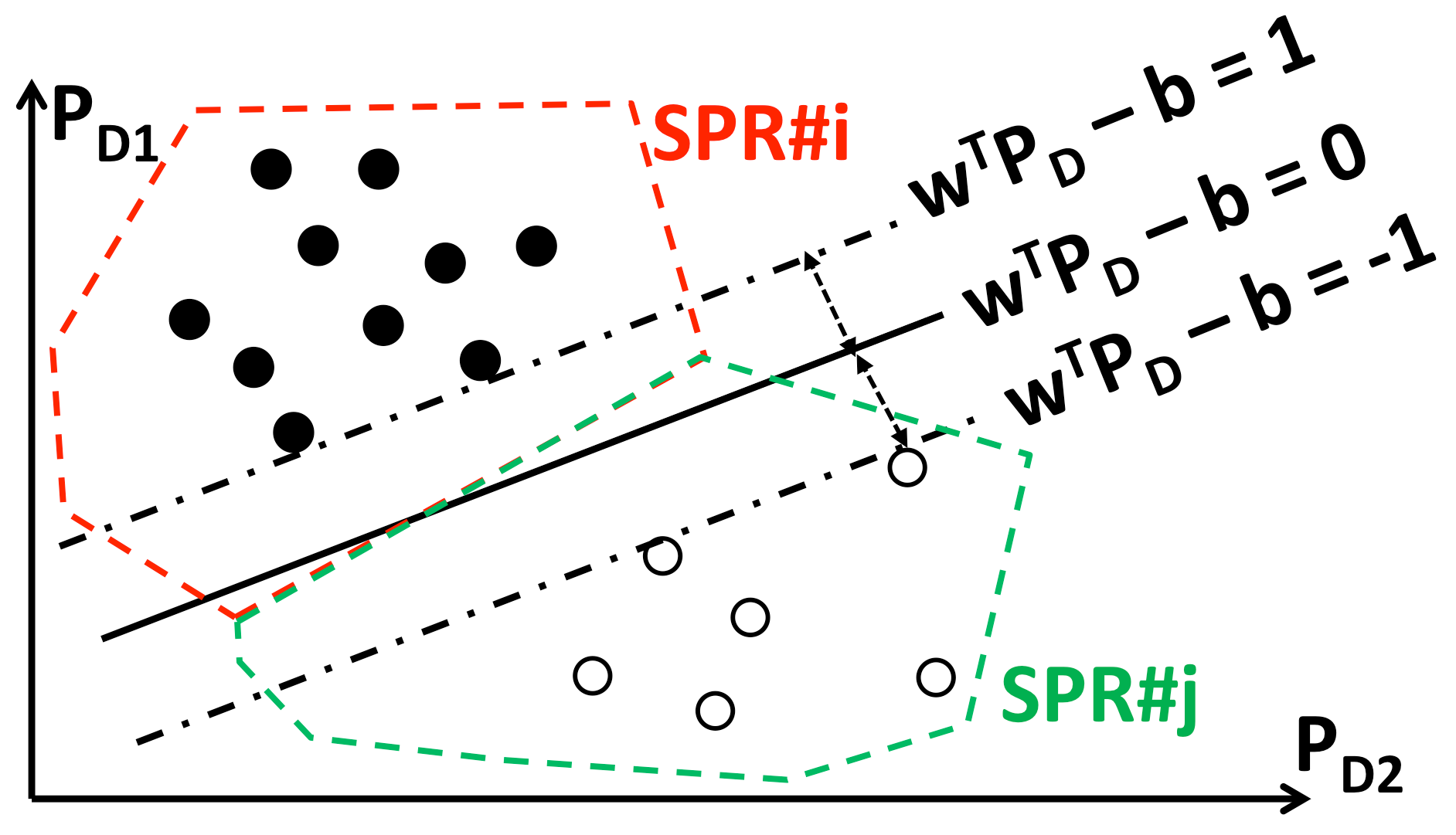}
  \caption{SPR Identification Problem with SVM (Separable Case)}
  \label{fig:SVM_SPR}
\end{figure}



Due to the existence of multiple SPRs, multi-class classifiers are needed. Since Theorem \ref{thm:convexSPR} guarantees the existence of separating hyperplanes between every pair of SPRs, the ``one-vs-one'' multi-class SVM classifier is incorporated in the data-driven approach to identifying SPRs. Detailed procedures are summarized in Section \ref{sub:extended_data_driven_approach}.



\subsection{A Data-driven Approach} 
\label{sub:extended_data_driven_approach}
\subsubsection{SPR Identification with Varying System Parameters} 
\label{ssub:spr_identification_with_varying_system_parameters}
When the system parameters are varying (e.g. dynamic line ratings), two SPRs may overlap with each other. The SPR identification problem is no longer a separable case as in Section \ref{ssub:spr_identification_with_svm}. The SVM classifier needs to incorporate \emph{soft margins} to allow some tolerance of classification error. The slack variable $s_i$ is added to Eqn. (\ref{eqn:separable_SVM_obj}) and penalties of violation $C\sum_i s_i$ are added in the objective function. Large $C$ indicates low extent of tolerance.
\begin{subequations}
\begin{align}
  \min_{w,b,s} & \qquad { \frac{1}{2} w^\intercal  w + C\sum_i s^{(i)} } \label{eqn:non_sep_SVM_obj} \\ 
  \text{s.t} & \qquad y^{(i)}(w^\intercal  P_D^{(i)} -b) \ge 1-s^{(i)} \label{eqn:non_sep_SVM_cons} \\
 			 & \qquad s^{(i)} \ge 0, y^{(i)} \in \{-1,1\} \nonumber	
\end{align}
\end{subequations}
\begin{figure}[htbp]
  \centering
  \includegraphics[width=0.5\linewidth]{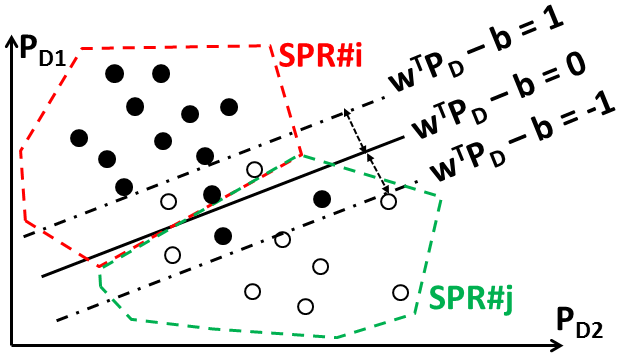}
  \caption{SPR Identification Problem with SVM (Non-Separable Case)}
  \label{fig:SVM_SPR_non}
\end{figure}



\subsubsection{Fitting Posterior Probabilities} 
\label{ssub:fitting_posterior_probabilities}

The posterior probability is the probability that the hypothesis is true given relevant data or observations.
In the classification problem, the posterior probability can be stated as: $\mathbb{P}(\text{class}|\text{input})$.

Estimating the posterior probability is very helpful in practical problems \cite{Platt1999}. 
When identifying SPRs, knowing the posterior probability $\mathbb{P}( y = i | P_D \text{ and } y \in \{1,2,\cdots,n\})$ is not only about knowing the classification result $y = i$ ($P_D$ belongs to SPR\#i), but also understanding the confidence or possible risk. The market participants could accordingly adjust their bidding strategy and reduce possible loss.

Although the posterior probabilities are desired, the standard SVM algorithm provides an uncalibrated value which is not a probability as output \cite{Platt1999}. Modifications are needed to calculate the \emph{binary} posterior probabilities $\mathbb{P}( y = i | P_D \text{ and } y \in \{i,j\})$. 
Common practice is to add a link function to the binary SVM classifier and train the data to fit the link function. Some typical link functions include sigmoid functions \cite{Platt1999} and Gaussian approximations \cite{hastie1998classification}. In this paper, the sigmoid link function is selected due to its general better performance than other choices \cite{Platt1999}. 

In general, there are more than two SPRs. What we really want to know is the \emph{multi-class} posterior probabilities $\mathbb{P}( y=i | P_D \text{ and } y \in \{1,2,\cdots,n\} )$. For short, we will use $\mathbb{P}(y=i|P_D)$ to represent multi-class posterior probabilities. \cite{hastie1998classification} proposed a well-accepted algorithm to calculate multi-class posterior probabilities from pairwise binary posterior probabilities. This algorithm is incorporated in our approach and briefly summarized in Appendix \ref{sec:hastie_tibshirani_s_algorithm}.

\subsubsection{The Data-driven Approach} 
\label{ssub:extended_data_driven_approach}
There are three steps in the proposed data-driven approach (Fig. \ref{fig:flowchart}):
\begin{figure}[htbp]
  \centering
  \includegraphics[width=\linewidth]{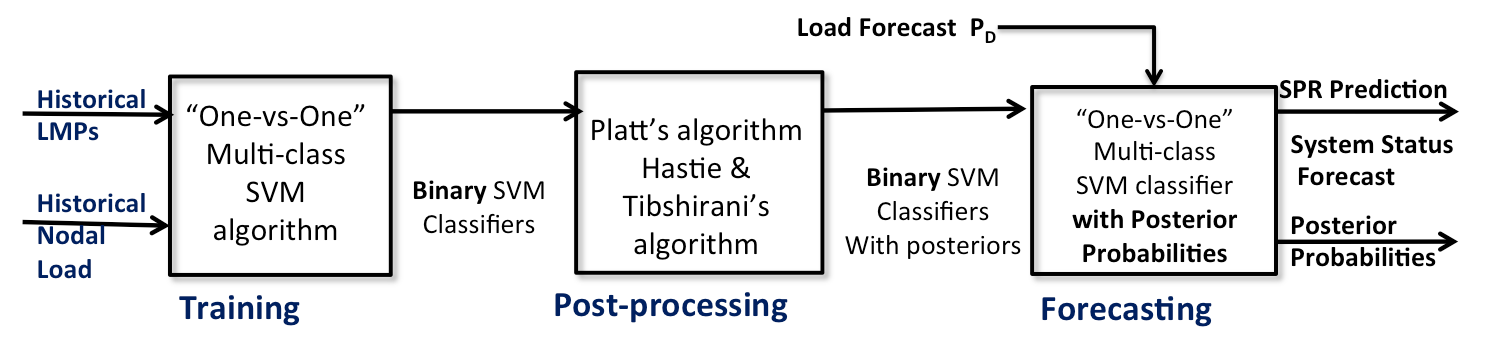}
  \caption{The Data-driven Approach}
  \label{fig:flowchart}
\end{figure}

\paragraph{Training} 
\label{par:step_1}
Suppose there are $n$ different SPRs in the training data set. Each time two SPRs are selected, trained and we get a \emph{binary} SVM classifier. This pairwise training procedure is repeated $C_{n}^2 = n(n-1)/2$ times and we collect $n(n-1)/2$ binary classifiers, namely the $n(n-1)/2$ separating hyperplanes between any two out of $n$ SPRs.

\paragraph{Classifying/Predicting} 
\label{par:step_3}
Given load forecast $P_D$, we could use the ``\emph{max-vote-wins}'' algorithm to get the classification results:
each binary classifier provides a classification result (vote) for the load forecast $P_D$, the SPR which collects the most votes will be the final classification result. The load forecast $P_D$ is therefore pinpointed to an SPR.
The LMP forecast: 
$\hat{\lambda}(P_D) = \lambda^{(i^*)}$ where $i^*$ is the index of the SPR winning the most votes. This step is independent of the data post-processing procedure.
\paragraph{Data Post-processing} 
\label{par:step_2}
Calculate posterior probabilities $\mathbb{P}(y = i | P_D)$ for $i=1,2,\cdots, n$ by applying Platt's algorithm and then Hastie and Tibshirani's algorithm\footnote{Details of these two algorithms are summarized in Appendix \ref{sec:platt_s_algorithm_} and \ref{sec:hastie_tibshirani_s_algorithm}.}.
It is worth noting that the proposed approach is generalizable to many other scenarios with overlapping SPRs in the data, possible extensions are discussed in Section \ref{sub:on_posterior_probabilities}.



\section{Case Study} 
\label{sec:case_study}
In this section, we illustrate the proposed data-driven approaches on two systems.
\subsection{Performance Metrics} 
\label{sub:performance_measurements}
We first introduce the performance metrics.
\subsubsection{5-fold Cross Validation} 
\label{ssub:cross_validation}
To evaluate the performance of the model to an independent data set and avoid overfitting, the \emph{$k$-fold cross validation} technique is being used.
In $k$-fold cross-validation, the overall data set is randomly and evenly partitioned into $k$ subsets. Every time a subset is chosen as validation data set, and the remaining $k-1$ subsets are used for training. This cross-validation process is repeated $k$ times ($k$ folds), and each subset serves as the validation data set once.
The 5-fold cross validation is being used in this paper.


\subsubsection{Classification Accuracy} 
\label{ssub:classification_accuracy_eval}
\emph{Classification accuracy} is the most common criteria to evaluate the performance of classifiers. The classification accuracy $\alpha$ is the ratio of the correctly classified points in the validation data set.
When incorporating $5$-fold cross validation, the classification accuracy of each fold ($\alpha_1, \alpha_2, \cdots, \alpha_5$) is calculated first, then the overall performance of the method is evaluated by the average classification accuracy 
$\bar{\alpha} = (\sum_{i=1}^5 \alpha_i)/5$.

\subsubsection{LMP Forecast Accuracy} 
\label{ssub:lmp_forecast_accuracy}
The proposed approach forecasts the LMP at every bus. The performance of LMP forecast at bus $i$ is evaluated by the nodal LMP forecast accuracy $\beta_i$, which is the average forecast accuracy of all the validation data points ($j=1,2,\cdots,n_v$)
\begin{equation}
\label{eqn: LMP_accu_nodal}
\beta_i =  \frac{1}{n_v} \sum_{j=1}^{n_v}{ \frac{ | \hat{\lambda_i}[j] - \lambda_i[j]| }{ \lambda_i[j] } }
\end{equation}
The overall LMP forecast accuracy $\beta$ evaluates the performance of LMP forecast for the whole system. It is the average of all the nodal LMP forecast accuracy $\beta_i$ ($i=1,2,\cdots,n_b$):
\begin{equation}
\label{eqn: LMP_accu_overall}
\beta =  \frac{1}{n_b} \sum_{i=1}^{n_b} \beta_i
\end{equation}



\subsection{Static SCED with Static Line Ratings} 
\label{sub:case_studies_of_static_sced_with_static_line_ratings}
This section explores the simplest case: static SCED with SLRs. Since \cite{Geng2015} discusses the 3-bus system as well as the IEEE 24-bus system, we only examine the data-driven approach on an 118-bus system. The same dataset generated in this section is used in Section \ref{sub:how_the_nodal_load_levels_help_us_} as well.

\paragraph{System Configuration} 
\label{par:system_configuration}
Most of the system settings follow the \emph{IEEE 118-bus, 54-unit, 24-hour system} in \cite{Technology} but with the following changes: (1) the lower bounds of generations are set to zero, but the upper bounds of generators remain the same as in \cite{Technology}; (2) generation costs are linear. Details of the parameters are summarized in \cite{Geng2015c}.

\paragraph{Load} 
\label{par:Load}
\cite{Technology} also provides an hourly system load profile and a bus load distribution profile. With linear interpolation, the hourly system load profile is modified to be 5-min based. To account for the variability of loads, we assume the load at each bus follows normal distribution $\mathcal{N}(\mu,\sigma)$. The expectation $\mu$ of each nodal load is calculated from the system load profile and bus load distribution profile, the standard deviation $\sigma$ is set to be $10\%$ of the expectation. 1440 (5 days, 5-min based) load vectors are generated, then Matpower \cite{zimmerman2011matpower} solves these 1440 SCED problems and records 1440 LMP vectors. These 1440 load vectors and LMP vectors are the training and validation data.

\paragraph{Simulation Results} 
\label{par:simulation_results}
Results are summarized in Table \ref{tab:classification_accuracy_118bus}.
The classification accuracy is around 67\% but the LMP forecast is satisfying. When the classification result of a load vector is correct, the LMP forecast is correct for every bus, i.e. $\beta = 100\%$. 
It is worth noting that even if the classification fails, the overall LMP forecast still has accuracy about 90\%. This is because the classification errors happen between one SPR and it neighbors. LMPs of adjacent SPRs are similar due to the fact that only one active constraint is different (Lemma \ref{lem:adjacent_SPRs}). Therefore, the LMP forecast result is much more accurate than classification.

\begin{table}[htbp]
  \caption{Results of the 118-bus System (Static SCED with SLR)}
  \label{tab:classification_accuracy_118bus}
  \centering

  \begin{tabular}{l|cc}
  \hline

  \hline
  \textbf{Fold} & \textbf{Classification} & \textbf{LMP Forecast} \\
  \hline
    1 & 64.24\% & 96.82\% \\
    2 & 67.36\% & 96.71\% \\
    3 & 64.93\% & 96.95\% \\
    4 & 71.18\% & 97.34\% \\
    5 & 65.63\% & 96.84\% \\
  \hline
  \textbf{avg} & 66.67\% & 96.93\% \\
  \hline

  \hline
  \end{tabular}
\end{table}


\subsection{Static SCED with Dynamic Line Ratings} 
\label{sub:case_study_dlr}
\subsubsection{3-bus System} 
\label{ssub:3_bus_system_dlr}
We start with an illustrative 3-bus system example. This succinct example provides key insights and visualization of the proposed method.

\paragraph{Data} 
\label{par:3bus_data}
The parameters of the 3-bus system are presented in Fig. \ref{fig:3Bus2GeneSystem}.
The data set is generated using Matpower with the following assumptions: 
(1) the load vector is evenly distributed in the load space;
and (2) the transmission limits $F$ is time-varying: for simplicity, we utilize the following model to calculate the real-time transmission limits $F$:
\begin{equation}
\label{eqn:model_dlr}
F = (1+\xi) F_0
\end{equation}
$F_0$ is the ``standard'' transmission limits and $F_0 = [60;60;80]$. It is the same as the case of static line ratings.  
$\xi \sim N(0,0.1)$ represents the major factor (e.g. ambient temperature or wind speed) that impacts the transmission capacities. All the data generated is visualized in Fig. \ref{fig:spr_3bus_dlr_mc}.




\paragraph{Simulation Results} 
\label{par:classification}
Table \ref{tab:classification_accuracy_3bus} summarizes the classification and LMP forecast accuracies. The accuracies are around 95\% because of the overlapping SPRs.

\begin{table}[htbp]
  \caption{Results of 3-bus System (5-fold Validation)}
  \label{tab:classification_accuracy_3bus}
  \centering

  \begin{tabular}{l|cccc}
  \hline

  \hline
  \textbf{Fold} & \textbf{Classification} & \textbf{LMP Forecast}  \\
  \hline
  1 & 93.967\% & 96.218\% \\
  2 & 93.236\% & 96.054\% \\
  3 & 94.150\% & 95.767\%\\
  4 & 95.612\% & 96.700\%\\
  5 & 94.150\% & 96.405\%\\
  \hline
  \textbf{avg} & 94.23\% & 96.23\%\\
  \hline

  \hline
  \end{tabular}
\end{table}


\paragraph{Posterior Probabilities} 
\label{par:posterior_probability_estimation}
The posterior probabilities are visualized. 
The posterior probabilities of an SPR compose a surface (Fig. \ref{fig:3bus_dlr_post_lmp_205030} and Fig. \ref{fig:3bus_dlr_post_lmp_205030}). When putting all the 5 surfaces of 5 SPRs together (shown in Fig. \ref{fig:3bus_dlr_post_all}), the five surfaces intersect with each other and formulate some ``mountains'' and ``valleys''. The ``mountains'' correspond to the inner parts of SPRs, where the overlapping of SPRs is almost impossible to happen. And the  ``valleys'' always locate at the boundaries among SPRs.

\begin{figure}[htbp]
  \centering
  \begin{subfigure}[t]{0.49\linewidth}
  \centering
  \includegraphics[width=\linewidth]{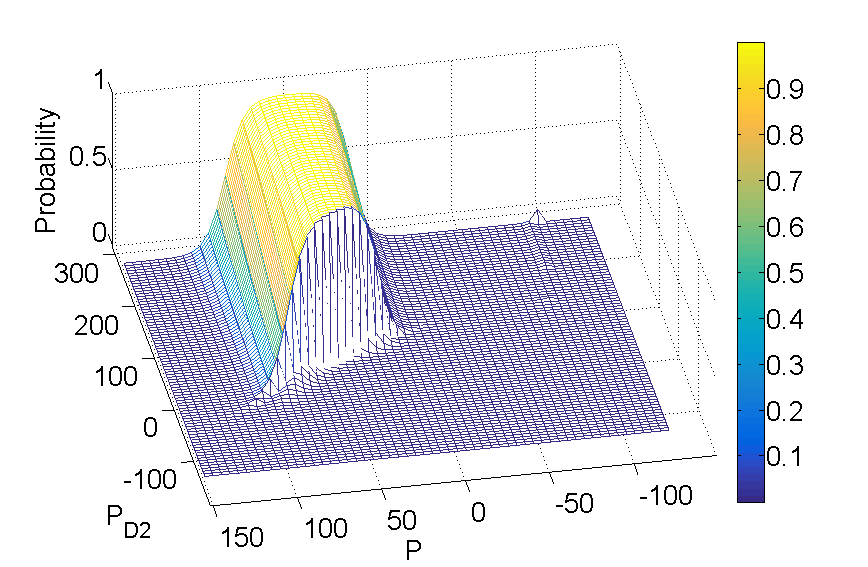}
  \caption{SPR\#3: LMP = (50,50,50)}
  \label{fig:3bus_dlr_post_lmp_505050}
  \end{subfigure}
  \begin{subfigure}[t]{0.49\linewidth}
  \centering
  \includegraphics[width=\linewidth]{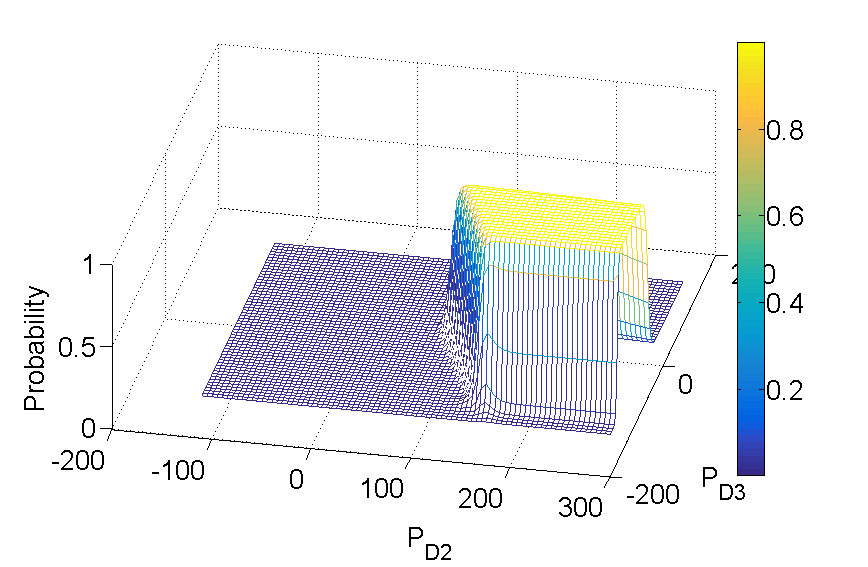}
  \caption{SPR\#4: LMP = (20,50,35)}
  \label{fig:3bus_dlr_post_lmp_205030}
  \end{subfigure} 
  \caption{Posterior Probabilities of Two SPRs}  
  \label{fig:visualize_posteriors} 
\end{figure}

\begin{figure}[htbp]
  \centering
  \includegraphics[width=0.8\linewidth]{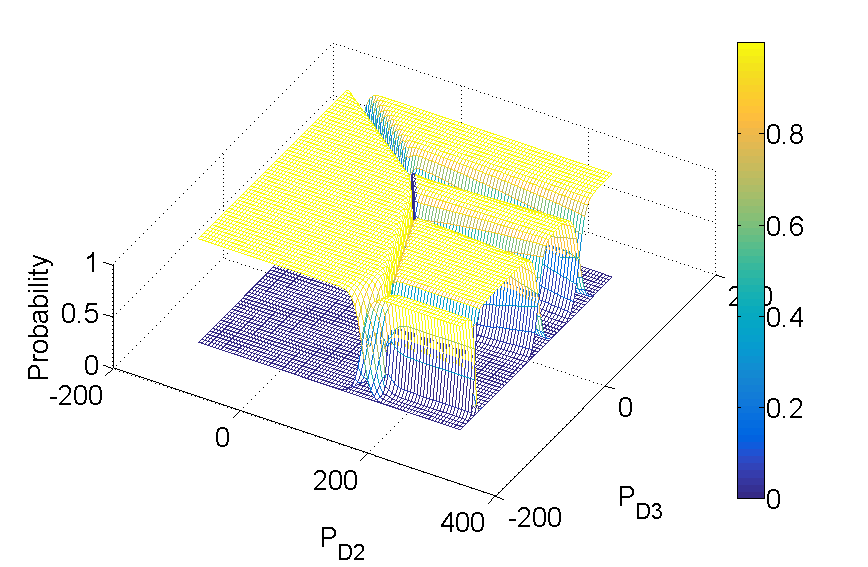} 
  \caption{Posterior Probability Surfaces}
  \label{fig:3bus_dlr_post_all} 
\end{figure}




\subsubsection{118-bus System} 
\label{ssub:118_bus_system_dlr}
A more comprehensive case study is conducted on the 118-bus system to evaluate the performance and computational burden of the data-driven approach on a complex system with realistic settings.

\paragraph{System Configuration} 
\label{par:system_configuration}
The only difference from the system configuration in Section \ref{sub:case_studies_of_static_sced_with_static_line_ratings} is about transmission limits. To consider DLR, we use the same model as Eqn. (\ref{eqn:model_dlr}). $F_0$ is the same as the transmission limits in \cite{Technology} and $\xi \sim N(0,0.1)$.




\paragraph{Performance} 
\label{par:classification_accuracy}
The algorithm is implemented using the Statistics and Machine Learning Toolbox of Matlab.
Table \ref{tab:avg_cal_time_per_step} summarizes the computation time of each step in the data-driven approach on a PC with Intel i7-2600 8-core CPU@3.40GHz and 16GB RAM memory. There are 181 SPRs found in 1152 points for training, $C_{181}^2 = 16290$ SVM classifiers are trained in 58.72 seconds. On average, one SVM classifier is trained within $0.004$ seconds. This is because most of the SPRs are completely separable, these cases will be solved in an extremely short time. Those adjacent SPRs are overlapping and are the major source of the computational burden.

\begin{table}[htbp]
  \caption{Average Computation Time (in seconds)}
  \label{tab:avg_cal_time_per_step}
  \centering

  \begin{tabular}{l|cccc}
  \hline

  \hline
  \textbf{Steps} & \textbf{Computation Time (s)}\\
  \hline
  \textbf{(a) training}  & 58.73 \\
  \textbf{(b) predicting} (288 points)  & 26.8504  \\
  \textbf{(c) data post-processing} & 701.22 \\
  \hline

  \hline
  \end{tabular}
\end{table}

\begin{table}[htbp]
  \caption{Results of the 118-bus System (Dynamic Line Rating)}
  \label{tab:classification_accuracy_118bus}
  \centering

  \begin{tabular}{l|cc}
  \hline

  \hline
  \textbf{Fold} & \textbf{Classification} & \textbf{LMP Forecast} \\
  \hline
    1 & 61.11\% & 95.11\% \\
    2 & 59.38\% & 94.53\% \\
    3 & 60.76\% & 95.24\% \\
    4 & 51.39\% & 93.34\% \\
    5 & 55.90\% & 94.22\% \\
  \hline
  \textbf{avg} & 57.71\% & 94.49\% \\
  \hline

  \hline
  \end{tabular}
\end{table}







\subsection{Case Studies with Ramp Constraints} 
\label{sub:ramp_constraints}
\paragraph{Settings} 
\label{par:settings}
The parameters of the 118-bus system are the same as in Section \ref{sub:case_studies_of_static_sced_with_static_line_ratings}. And the ramp capacities of generators follows the simplified assumption below: each generator could ramp up (down) to its generation limits in 15 min. For example, a generator with $G^+ = 200$MW and $G^- = 125$MW, its ramp capacity is: $R^+ = R^- = (200-125)/15 = 5$MW/min. This setting is called $R_0$ in Table \ref{tab:118_bus_diff_ramp}.
Due to the temporal coupling of SCED with ramp constraints, a daily load curve is necessary.
The settings of loads are the same as in Section \ref{sub:case_studies_of_static_sced_with_static_line_ratings}. 1440 SCED problems are solved consecutively with Matpower, and 1440 load vectors and LMP vectors are recorded.
\paragraph{Simulation Results} 
\label{par:simulation_results}
The classification and LMP forecast accuracies are summarized in Table. \ref{tab:118_bus_diff_ramp}.
With the ramp settings above, the classification and LMP forecast are satisfying. 
However, different ramp settings would change the results dramatically. As shown in Table. \ref{tab:118_bus_diff_ramp}: when generators ramp up/down 2 times faster ($R/R_0 = 2$), the ramp constraints would rarely be active, then it is the same as static SCED; when generators ramp up/down 2 times slower ($R/R_0 = 0.5$), the actual generation upper/lower bounds are determined by previous dispatch results and ramp constraints. Generation limits become time-varying and the SPRs are overlapping. This explains the unsatisfying results when the system is lack of ramp capacities. Furthermore, varying SPRs could also explain the price spikes during ramping up hours in the morning and ramping down hours in the early evening.

\begin{table}[htbp]
  \caption{Results on 118-bus System with Different Ramp Settings}
  \label{tab:118_bus_diff_ramp}
  \centering

  \begin{tabular}{l|ccc}
  \hline

  \hline
  $R/R_0$ & 0.5 & 1 & 2 \\
  \hline
  \textbf{LMP Forecast} & 44.57\% & 85.10\% & 96.33\% \\
  \hline

  \hline
  \end{tabular}
\end{table}

\section{The Impact of Nodal Load Information} 
\label{sec:impact_of_nodal_load_information}
We would like to point out that one possible contribution of this paper is to consider the LMP changes due to nodal
load variations. This section dedicates to a detailed discussion about the impact of nodal load information on the understanding of LMP changes. We first demonstrate the benefits of having nodal load information in Section \ref{sub:how_the_nodal_load_levels_help_us_}; then Section \ref{sub:partial_load_information} illustrates the effects of incomplete load information and the attempts to solve the issue.

To concentrate on the effects that incomplete load information brings, we make the following assumptions: (1) transmission limits are constant, no dynamic line ratings are being considered; (2) ramp constraints are not taken into account.

\subsection{On Nodal Load Levels} 
\label{sub:how_the_nodal_load_levels_help_us_}
Previous literature such as \cite{Li2007} studied the impact of system load levels on the LMPs. An important concept ``critical load level'' is defined as the system load level where the step changes of LMPs happen. Many LMP forecast methods were proposed based on identifying CLLs. But the definition of CLL assumes that the nodal load levels of all the buses change \emph{proportionally}. This assumption constrains the load vectors in the load space to be on a straight line, and the CLLs are indeed the intersection points of the straight line with the boundaries of SPRs.

We would like to point out that one possible contribution of this paper is to consider the LMP changes due to \emph{nodal-level} load variations.
Contrary to CLL-based methods, which solve a one-dimensional problem, the proposed SVM-based method could explore all the dimensions of the load space and is indeed a generalization of the CLL-based method.

Consider the SPR identification problem with only one feature vector: the total demand of the system. 
Fig. \ref{fig:cll_ID} illustrates the problem formulation. Since only the total demand $P_D = P_{D_1} + P_{D_2}$ is available, the load vectors in the original SPRs are projected to the axis of total demand. Because this is a one-dimension problem, the SVM classifier degenerates to the case that there is only one decision variable $b$, the direction of the separating hyperplane $w$ is represented by the positivity of $b$. The objective becomes finding the optimal value $b$ which has the least overlapping points of different LMPs.
\begin{subequations}
\begin{align}
  \min_{b,s} & \qquad { \sum_i s^{(i)} } \label{eqn:SVM2CLL_obj} \\ 
  \text{s.t} & \qquad y^{(i)}( P_D^{(i)} -b) \ge 1-s^{(i)} \label{eqn:SVM2CLL_cons} \\
  & \qquad  s^{(i)} \ge 0, y^{(i)} \in \{-1,1\} \nonumber 
\end{align}
\end{subequations}

\begin{figure}[htbp]
  \centering
  \includegraphics[width=0.5\linewidth]{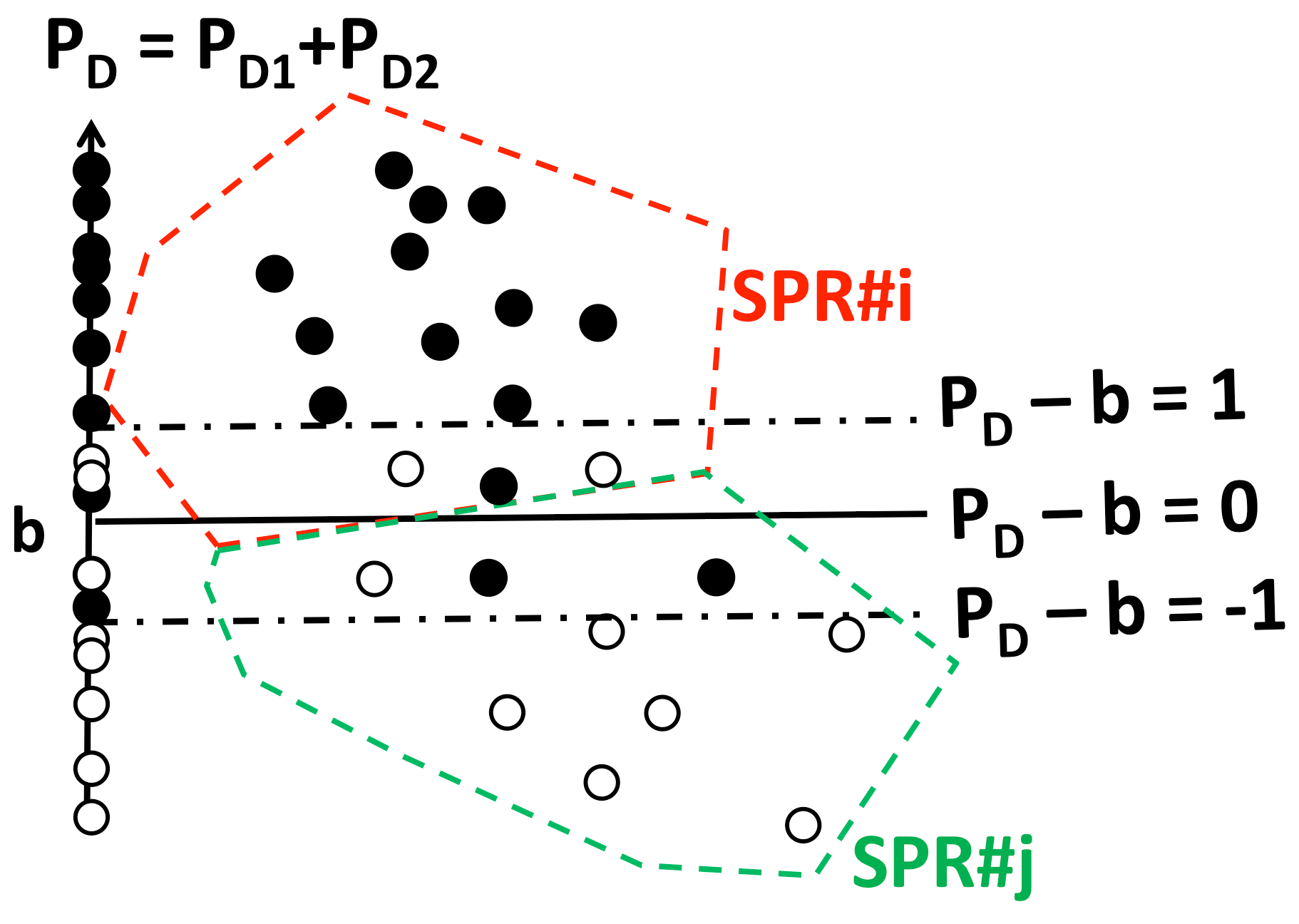}
  \caption{Identifying Critical Load Levels}
  \label{fig:cll_ID}
\end{figure}

We compare this CLL-based method and SVM-based method on the 3-bus system and 118-bus system. Results are demonstrated in Table. \ref{tab:LMP_forecast_lse_3bus_alllmp}, \ref{tab:cll_svm_118bus} and Fig. \ref{fig:svm_cll_nodal_compare}. The performance of both methods are close for the nodal LMP forecast of many buses, 
but the CLL-based method failed to provide correct forecast of some specific buses (e.g. bus 23 in Fig \ref{fig:svm_cll_nodal_compare}), while the SVM-based method provides much better results. The SVM-based method is also better on forecasting high prices.

\begin{table}[htbp]
  \caption{Comparison of CLL and SVM (118-bus system)}
  \label{tab:cll_svm_118bus}
  \centering

  \begin{tabular}{l|cc}
  \hline

  \hline
  \textbf{LMP Forecast} & CLL & SVM  \\ 
  \hline
  \textbf{Overall} & 94.82\% & 95.95\%\\
  \textbf{Price $> 45$ \$/MWh} & 88.86\%  & 96.32\%\\
  \textbf{Worst Forecast (Bus No.)} & 73.92\% (23) & 88.17\% (23)\\
  \hline

  \hline
  \end{tabular}
\end{table}

\begin{figure}[htbp]
  \centering
  \includegraphics[width=0.6\linewidth]{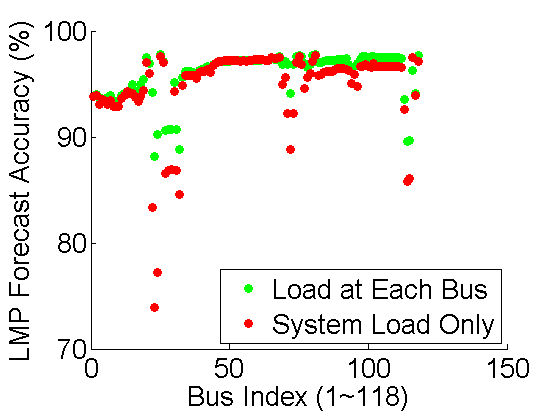}
  \caption{Nodal LMP Forecast Accuracy}
  \label{fig:svm_cll_nodal_compare}
\end{figure}



\subsection{Incomplete Load Information} 
\label{sub:partial_load_information}
In practice, LSEs or other market participants may not have the complete information about load levels at all buses. We investigate the performance of the algorithm by assuming LSEs have access to only: (1) the total system-level load; and (2) the nodal load levels in its own area.

To better illustrate the problem formulation. We add a load $P_{D_1}$ at bus 1 to the 3-bus system in Fig.\ref{fig:3Bus2GeneSystem}\footnote{If there are still two loads in the system, knowing system-level load $P_{D_2}+P_{D_3}$ and $P_{D_2}$ is equivalent with knowing $P_{D_2}$ and $P_{D_3}$.}. Modified system is shown in Fig.\ref{fig:3Bus2Gen3Load}. 
Assume there are three LSEs in the system. LSE \#$i$ at bus $i$ has access to the following information: (1) load at bus $i$: $P_{D_i}$; and (2) system-level load: $\sum_{i=1}^3 P_{D_i}$.
\begin{figure}[htbp]
  \centering
  \includegraphics[width=0.7\linewidth]{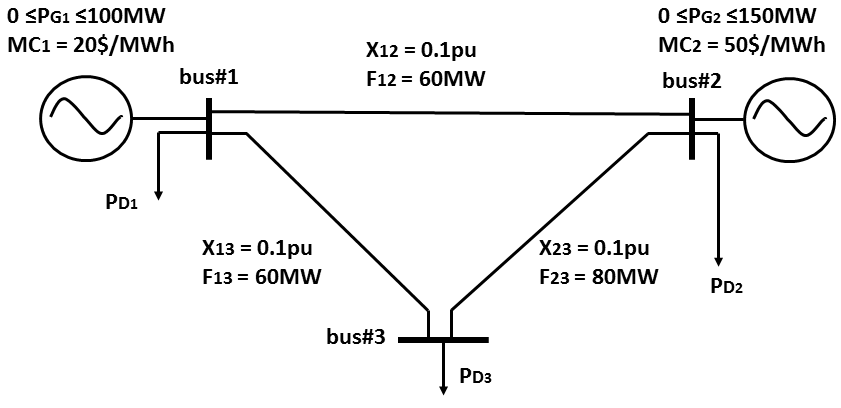}
  \caption{3-bus system with 3 loads}
  \label{fig:3Bus2Gen3Load}
\end{figure}

With incomplete load information, the SPR identification problem becomes more difficult. 
For example, LSE 2 observes two SPRs which almost completely overlap with each other (blue and red in Fig. (\ref{fig:LSE2_3bus})). 
Since the one-to-one mapping of the LMP vectors and SPRs is not effected by the incomplete load information, this is still a classification problem. The data-driven approach can still be applied but the feature vectors are the system load and a subset of nodal load levels, instead of load levels at every bus in Section \ref{sec:extended_data_driven_approach}.
\begin{figure}[htbp]
  \centering
  \begin{subfigure}[t]{0.49\linewidth}
  \centering
  \includegraphics[width=\linewidth]{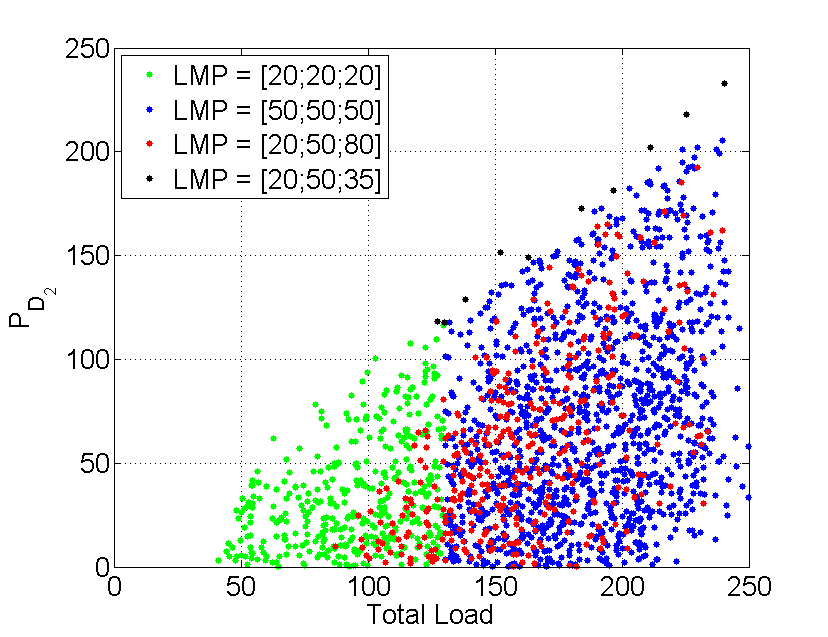}
  \caption{LSE 2}
  \label{fig:LSE2_3bus}
  \end{subfigure}
  \begin{subfigure}[t]{0.49\linewidth}
  \centering
  \includegraphics[width=\linewidth]{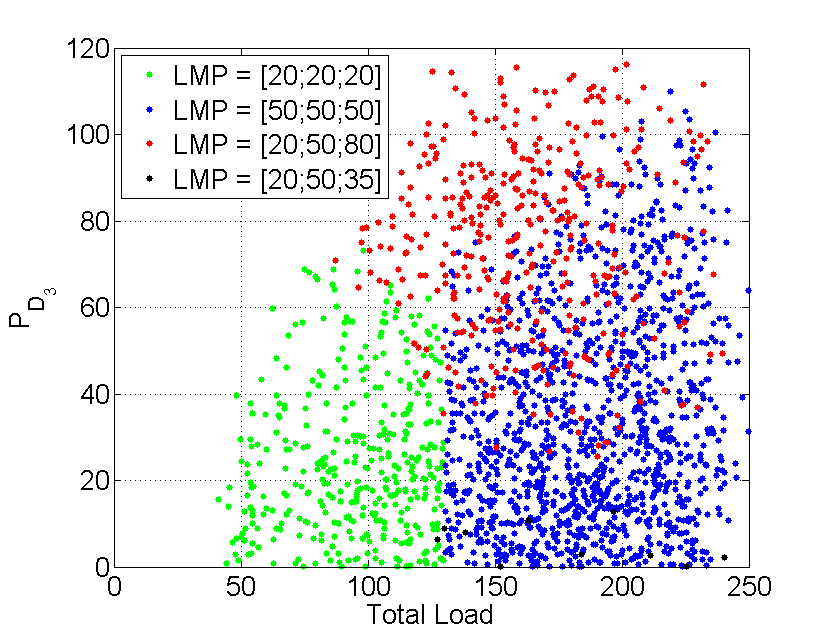}
  \caption{LSE 3}
  \label{fig:LSE3_3bus}
  \end{subfigure}
  \caption{LSEs}
  \label{fig:LMP_LSE}
\end{figure}

Simulation results are summarize in Table.\ref{tab:LMP_forecast_lse_3bus_alllmp}.  
The results indicate that classification accuracy goes down to around around 50\% while the LMP forecast accuracy is still satisfactory. 
This could be explained by the following observations: (1) Fig. \ref{fig:LSE2_3bus} and \ref{fig:LSE3_3bus} are obtained by projecting the 3D SPRs to a lower dimension space. Since the projection is a linear transformation, although the SPRs are overlapping, their boundaries remain linear; (2) the LSEs may care more about their own LMPs. For example, Fig. \ref{fig:LSE2_3bus} could be re-colored by the LMPs at bus 2 (Fig. \ref{fig:LSE2_own_LMP.png}). Since there are only two possibilities of LMPs at bus 2 (20 and 50), there are only two colored regions in Fig. \ref{fig:LSE2_own_LMP.png}. Even with relatively low accuracy of the overall classification, the forecast of LMPs at bus 2 is still accurate.
\begin{table}[htbp]
  \caption{Results 3-bus System}
  \label{tab:LMP_forecast_lse_3bus_alllmp}
  \centering

  \begin{tabular}{l|cccc}
  \hline

  \hline
  \textbf{LSE} & \textbf{LMP@Bus 1} & \textbf{LMP@Bus 2} & \textbf{LMP@Bus 3} & \textbf{Overall}\\
  \hline
  1  & 86.08\%  &97.45\%  & 88.53\%  & 90.69\%\\
  2  &70.69\% & 96.13\% & 89.31\%   &  85.38\%\\
  3  &87.91\% & 98.65\% & 93.53\%  & 93.53\%\\
  \hline
  CLL & 69.48\% & 97.15\% & 89.24\% & 85.29\% \\
  \hline

  \hline
  \end{tabular}
\end{table}

When forecasting a subset of nodal LMPs becomes the major concern, it might be more computationally efficient to formulate the problem in a way as Fig. \ref{fig:LSE2_own_LMP.png} shows. The number of classes decreases significantly and so does the computational burden. But the new colored regions might be the union of SPRs. Though the colored regions in Fig. \ref{fig:LSE2_own_LMP.png} are convex, the union of convex sets are usually non-convex. Because of this, the SVM with linear kernel may not be the best choice. Choosing the best classifier would depend upon the feature of the regions, and will be part of the future work.

\begin{figure}[htbp]
  \centering
  \begin{subfigure}[t]{0.49\linewidth}
  \centering
  \includegraphics[width=\linewidth]{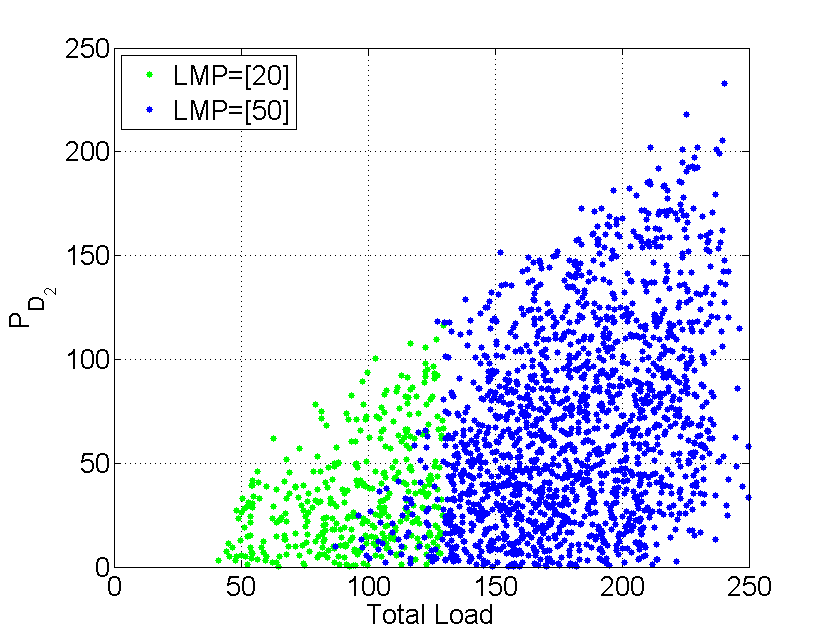}
  \caption{$P_{D_2}$ and system load}
  \label{fig:LSE2_own_LMP.png}
  \end{subfigure}
  \begin{subfigure}[t]{0.49\linewidth}
  \centering
  \includegraphics[width=\linewidth]{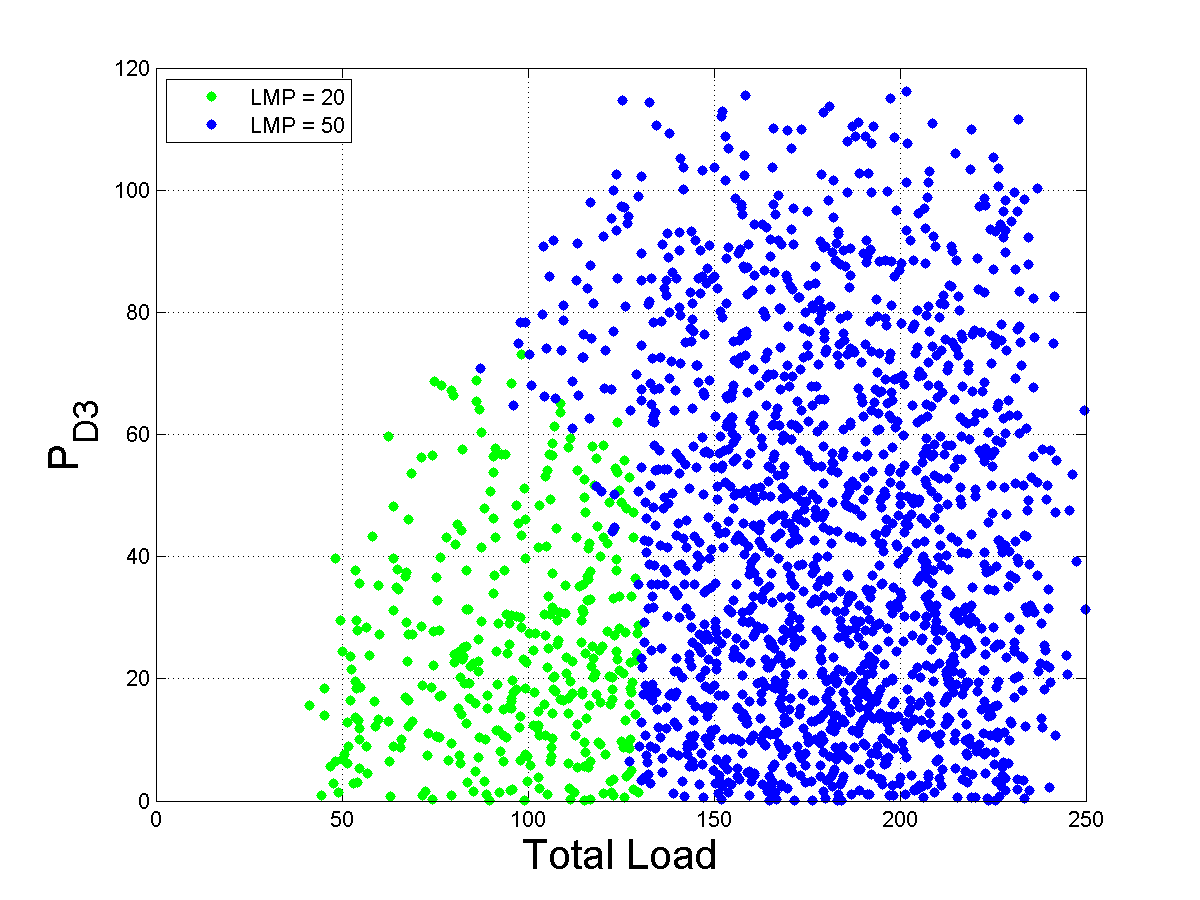}
  \caption{$P_{D_3}$ and system load}
  \label{fig:Load3LMP2}
  \end{subfigure}
  \caption{LMP at bus 2}
\end{figure}

Similar with the case of DLRs or ramp constraints, overlapping SPRs implies uncertainties and the posterior probabilities are necessary. The posterior probabilities for LSE \#2 and \#3 are visualized, respectively.
\begin{figure}[htbp]
  \centering
  \begin{subfigure}[t]{0.49\linewidth}
  \centering
  \includegraphics[width=\linewidth]{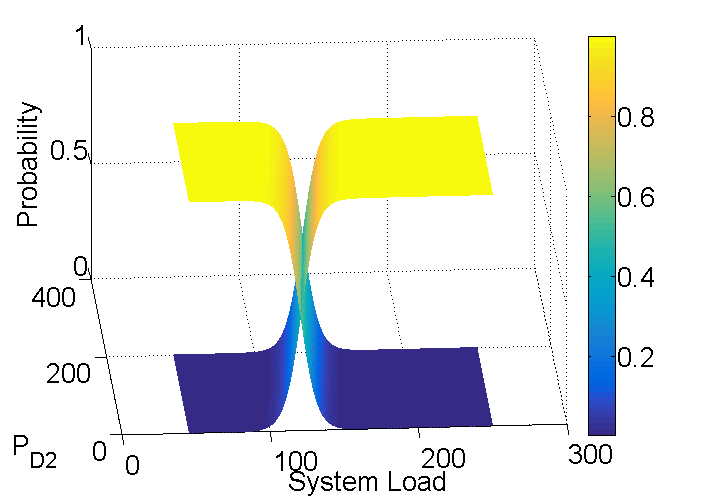}
  \caption{LSE 2}
  \end{subfigure}
  \begin{subfigure}[t]{0.49\linewidth}
  \centering
  \includegraphics[width=\linewidth]{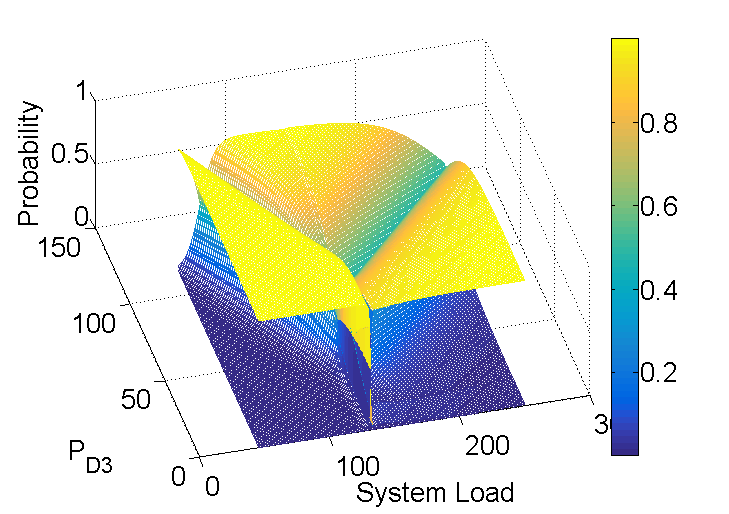} 
  \caption{LSE 3}
  \end{subfigure}
  \caption{Posterior Probability Surfaces}
  \label{fig:posterior_prob_3lses}
\end{figure}


Because of the relatively small resistances of transmission lines, the loss components of LMPs are usually small compared with the other two components. 
Geometrically speaking, each LMP vector is a point in the LMP space and the LMPs of the same SPR form a cluster. The center of the cluster contains the energy, congestion component and the average loss component, 
the deviations from the center represent varying loss components due to different line flows.
We could run a clustering algorithm (e.g. K-means) on the LMP data in order to find out the centers of those clusters. Then the LMP vectors of the same cluster are regarded as the LMPs of the same SPR. By doing so, the SPR identification problem is modeled as a classification problem. The LMP forecast is the forecast of energy components, congestion components and the average loss components.


\section{Discussions} 
\label{sec:discussions}

\subsection{On Posterior Probabilities} 
\label{sub:on_posterior_probabilities}
When dealing with uncertainties, it is natural to analyze the data in a probabilistic manner. The calculation of posterior probabilities is essential and provides the quantification of possible risks. We only propose the method to calculate posterior probabilities in this paper, but quantification of the posterior probabilities could yield many interesting applications. For example, LSEs could consider demand response mechanisms to partially change the load vector and thus shift from high price SPRs. Market participants could also estimate the price volatilities due to renewables in a system. Further discussions on how to utilize the posterior probabilities for specific applications are our future work.


\subsection{On the Computational Cost} 
\label{sub:data_requirement_of_the_data_driven_approach}

The theoretical analysis reveals that the load space could be partitioned into many SPRs. This overall structure of the load space could help solve the SCED problem and shift part of the online computational burden to offline \cite{Jia}. The total number of SPRs could help evaluate computational burden to some extent.

With MPT 3.0, the exact number of SPRs of some IEEE benchmark systems are calculated. Though the total number of SPRs is finite\footnote{A loose upper bound is $2^{n_g-1}\times C_{n_g+n_l}^{n_g-1}$.}, it grows extremely fast with the scale of the system. However, with the Monte-Carlo simulation, we found much less SPRs than the theoretical results. \cite{Zhou2011} points out that because of the regular patterns of loads, only some subsets of the complete theoretical load space could be achievable thus helpful in practice. Therefore, only a small subset of the SPRs is meaningful to be analyzed. This suggests the great potential of reducing the computational burden.
The proposed approach is also parallel computation-friendly, which could be very useful when dealing with large-scale simulations.

\begin{table}[htbp]
  \caption{Number of SPRs of Some Benchmark Systems}
  \label{tab:number_spr}
  \centering

  \begin{tabular}{l|c|c}
  \hline

  \hline
  \textbf{System Info} & \textbf{MPT 3.0} & \textbf{Simulation (8640 points)} \\
  \hline
  3 Bus System (Fig.\ref{fig:3Bus2GeneSystem}) & 5 & 4 \\
  IEEE 6 Bus System & 20 & 7 \\
  IEEE 9 Bus System & 15 &  7\\
  IEEE 14 Bus System & 1470 & 50 \\
  IEEE 24 Bus System & $\sim 10^6$ & 445 \\
  IEEE 118 Bus System & - & 971 \\
  \hline

  \hline
  \end{tabular}
\end{table}


\subsection{On Generation Offer Prices} 
\label{sub:on_generation_offer_prices}
The marginal costs of generators are fluctuating due to many factors such as oil prices. This leads to the changes of generation
offer prices c in the SCED formulation. Intuitively, the SPRs would change with respect to large offer price variations. Eqn. \ref{eqn:determine_y} in Lemma \ref{lem:complementary_slackness} quantifies the variation of offer prices: for a system pattern $\pi = (\mathcal{B}, \mathcal{N})$, the corresponding SPR $\mathcal{S}_{\pi}$  would remain the same as long as the generation cost vector $c$ satisfies Eqn. (\ref{eqn:determine_y}).

An illustrative example is provided below. A diesel turbine is added at bus 3 in Fig. \ref{fig:3Bus2GeneSystem}, the new 3-bus 3-generator 2-load system is shown in Fig. \ref{fig:3BusSystem3Gene}. Suppose the offer price of the diesel turbine is varying due to the fluctuations of oil prices. Fig. \ref{fig:c_205065} shows the SPRs when the offer price of the new
generator is 65; when the offer price increases from 65 to 100, three SPRs are different while the others remain the same\footnote{More specifically, we can calculate the condition from Eqn. (\ref{eqn:determine_y}): if the offer price of the new generator satisfies $c_3 < 2c_2 - c_1 = 80$, then the SPRs in Fig. \ref{fig:c_205065} would remain the same}.
This shows that the SPRs have some extent of robustness to the varying generation offer prices.

\begin{figure}[htbp]
  \centering
  \includegraphics[width=0.6\linewidth]{3BusSystem3Gene.png}
  \caption{The 3-bus 3-generator 2-load System}
  \label{fig:3BusSystem3Gene}
\end{figure}

\begin{figure}[htbp]
  \centering
  \begin{subfigure}[t]{0.49\linewidth}
  \centering
  \includegraphics[width=\linewidth]{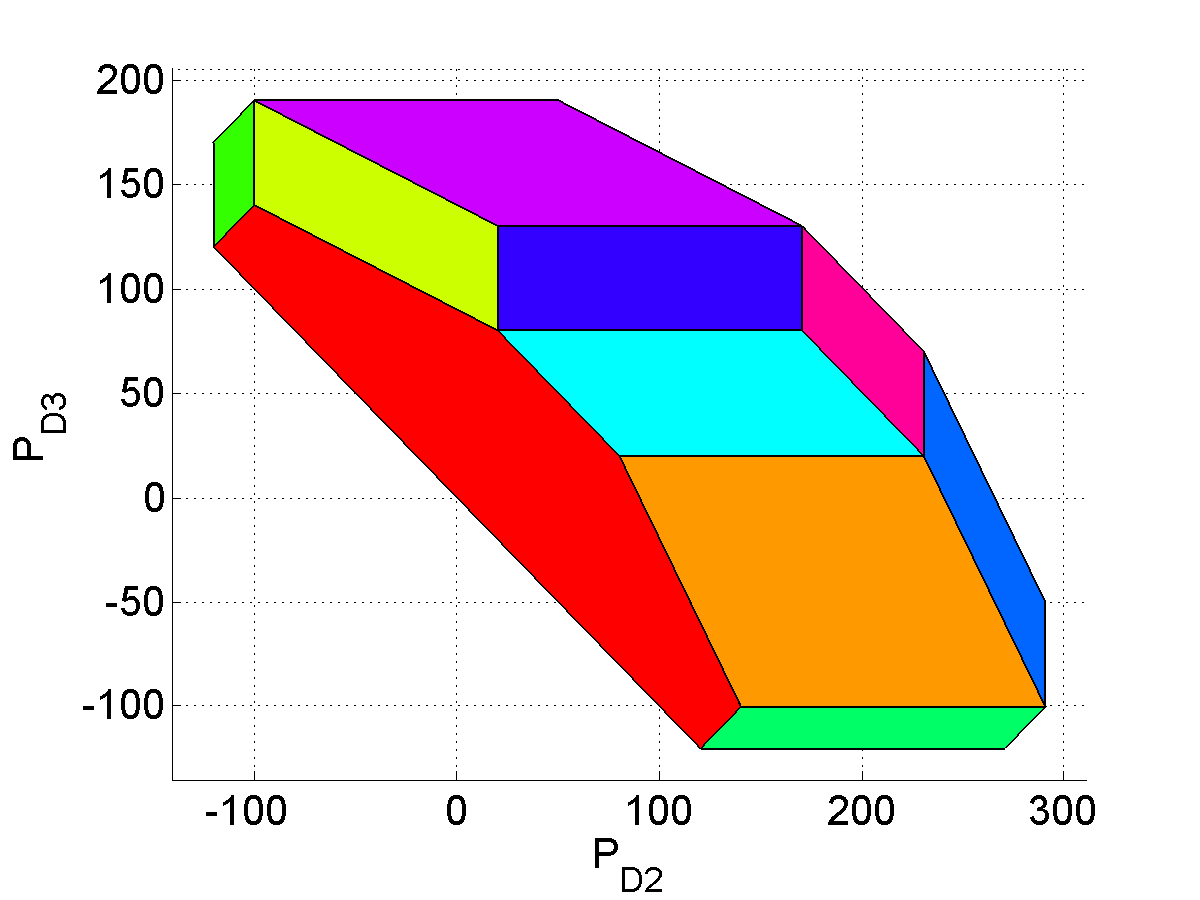}
  \caption{$c = (20,50,65)$}
  \label{fig:c_205065}
  \end{subfigure}
  \begin{subfigure}[t]{0.49\linewidth}
  \centering
  \includegraphics[width=\linewidth]{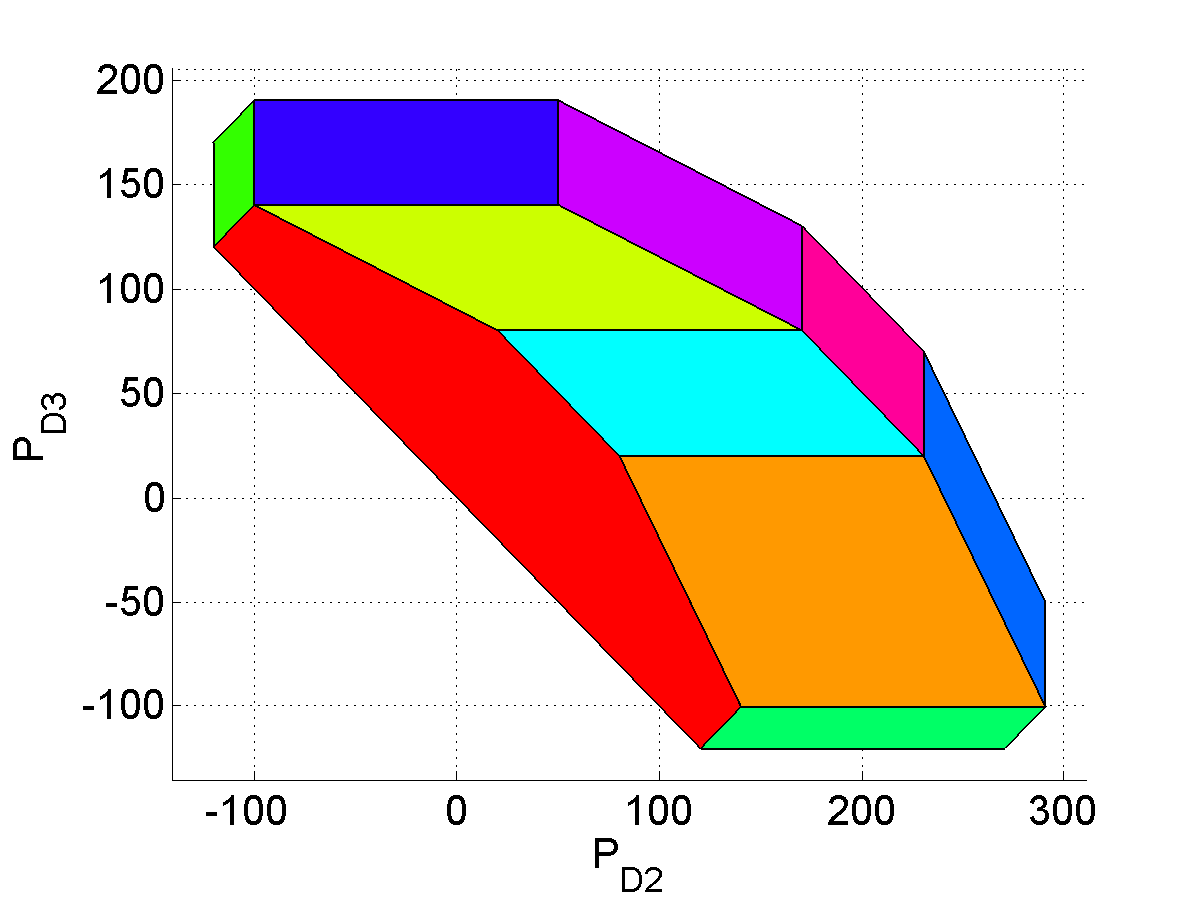}
  \caption{$c = (20,50,100)$}
  \label{fig:c_2050100}
  \end{subfigure}
  \caption{System Pattern Regions with Different Generation Offer Prices}
  \label{fig:SPRs_with_diff_offer_price}
\end{figure}


\subsection{LMPs with Loss Components} 
\label{sub:lmps_with_loss_components}
Since the line losses are not explicitly modeled in the SCED formulation, all the theoretical analysis is conducted on the ``lossless''LMP vectors. The LMP forecast discussed above is the forecast of the energy components and congestion components. In the reality, the proposed method could be applied directly on the markets not considering line losses (e.g. ERCOT), and the electricity markets providing the energy component, congestion component and loss component separately (e.g. MISO). There are many possible methods to forecast the loss components, but that is a separate story.

There are Economic Dispatch models with line losses explicitly modeled (e.g. \cite{Litvinov2004}), similar analysis using MLP theory could be conducted but it is beyond the scope of this paper.



\section{Conclusions} 
\label{sec:conclusion}
In this paper, we examine the fundamental coupling between nodal load levels and LMPs in real-time SCED.
It is shown that the load space can be partitioned into convex system pattern regions, which are one-to-one mapped with distinct LMP vectors. 
Based on the theoretical results, we propose a data-driven learning algorithm for market participants to identify SPRs. 
Identifying SPRs is modeled as a classification problem, and the proposed data-driven approach is built upon a ``one-vs-one'' multi-class SVM classifier. 
The proposed algorithm is shown to be capable of estimating SPRs solely from historical data without knowing confidential system information such as network topology and bidding curves.
The approach is shown to be extensible towards considering dynamic line ratings, line losses and partial load information. 
Simulation results based on the IEEE 118-bus system demonstrates that the proposed algorithm is effective in understanding the past and predicting the future.

This paper is a first step towards developing theoretically
rigorous and computationally feasible algorithms to analyzing
the market prices as a result of varying loading levels. Future
work should investigate: (1) the system pattern regions with
different unit commitment results and system topologies; (2)
the impacts of multi-interval temporal constraints on the
system pattern regions. Another important avenue of research
is to develop efficient learning algorithm to process a large
amount of historical data in near real-time market operations.

\appendices
\section{Platt's Algorithm} 
\label{sec:platt_s_algorithm_}
\emph{Platt's algorithm} \cite{Platt1999} is used to calculate the posterior probabilities of a \emph{binary} SVM classifier. According to \cite{Platt1999}, training data is fitted to a sigmoid function by minimizing the negative log likelihood function.
\begin{subequations}
\begin{align}
  \min_{A,B} & \qquad -\sum_k t_k \log(r_k) + (1-t_k)\log(1-r_k) \label{eqn:log_likelihood} \\
  \text{where} & \qquad r_k = \mathbb{P}(y=i | P_D^{(k)} \text{ and } y\in \{i,j\} ) \nonumber \\
    & \qquad = (1+e^{A P_D^{(k)}+B})^{-1} \label{eqn:sigmoid}   
\end{align}
\end{subequations}
$P_D^{(k)}$ is the $k$th load vector in the training set, and the parameter $t_k$ associated with $P_D^{(k)}$ is calculated by Eqn. (\ref{eqn:calculate_t}), where $N_{+}$ ($N_{-}$) is the number of positive (negative) examples.
\begin{equation}
\label{eqn:calculate_t}
t_k = \left\{
\begin{array}{ll}
    \frac{N_+ +1}{N_+ + 2} & \text{if } y_k = +1 \\
    \frac{1}{N_- + 2} & \text{if } y_k = -1
\end{array}
\right.
\end{equation}
\section{Hastie \& Tibshirani's Algorithm} 
\label{sec:hastie_tibshirani_s_algorithm}
\emph{Hastie \& Tibshirani's Algorithm} \cite{hastie1998classification} is widely accepted to calculate the \emph{multi-class} posterior probabilities. 

The basic idea behind Hastie \& Tibshirani's Algorithm is pretty straightforward: with the \emph{multi-class} posterior probabilities (i.e. $\hat{p_i} = \mathbb{P}(y=i | P_D \text{ and } i \in \{1,2,\cdots,n\} )$), we could estimate the \emph{binary} posterior probabilities:
\begin{equation}
  \hat{\mu_{ij}} = \frac{\hat{p_i}}{\hat{p_i}+\hat{p_j} }
\end{equation}
If $\hat{p_i}$s are correctly estimated, then the estimation of \emph{binary} posterior probabilities $\hat{\mu_{ij}}$ should be identical to the calculated binary posterior probabilities $r_{ij}$s from Platt's algorithm. Therefore the objective of Hastie \& Tibshirani's Algorithm is to minimize the Kullback-Leibler distance between $\hat{\mu_{ij}}$ and $r_{ij}$. Details are summarized below:

The \emph{first step} of Hastie \& Tibshirani's Algorithm is to calculate the \emph{binary} posterior probabilities $r_{ij}$ based on Platt's algorithm. 

The \emph{second step} of Hastie \& Tibshirani's Algorithm is to run the following algorithm until convergence:
\begin{enumerate}
  \item Start with the initial guess for the $\hat{p_i}$ and $\hat{\mu_{ij}} = \hat{p_i}/(\hat{p_i}+\hat{p_j} ) $.
  \item Repeat this ($i=1,2,\cdots,n,1,2,\cdots$) until convergence:
  \begin{equation}
  \label{eqn:iteration_posterior}
    \hat{p_i} \leftarrow \hat{p_i}\frac{\sum_{j\ne i} n_{ij}r_{ij}}{\sum_{j\ne i} n_{ij}\mu_{ij}}
  \end{equation}
  Then renormalize $\hat{p_i} \leftarrow \hat{p_i}/\sum_{j=1}^n {\hat{p_j}}$ and recompute $\hat{\mu_{ij}} = \hat{p_i}/(\hat{p_i}+\hat{p_j} ) $.
  \item If $\mathbf{\hat{p}}/\sum{\hat{p_i}}$ converges to the same $\mathbf{\hat{p}}$, then the algorithm stops, the vector $\mathbf{\hat{p}}$ will be the estimated multi-class posterior probabilities.
\end{enumerate}


\section{Proof of Theorem \ref{thm:diff_LMPs}} 
\label{sec:proof_diff_LMP_diff_SPR}
This section provides the details of the proof of the theorem ``different system pattern regions (SPRs) have different LMP vectors''.

\subsection{Basics of the SCED} 
\label{sub:basics_of_the_sced}
Consider the SCED problem in the form of Eqn. (\ref{eqn:ED_primal}). Its Lagrangian $L: \mathbb{R}^{n_b}\times \mathbb{R} \times \mathbb{R}^{n_l}\times \mathbb{R}^{n_l}\times \mathbb{R}^{n_g}\times \mathbb{R}^{n_g} \rightarrow \mathbb{R}$ is:
\begin{eqnarray}
  & & L(P_G,\lambda_1, \mu_+, \mu_-, \eta_+, \eta_-) \nonumber \\
  &=&  c^\intercal P_G + \lambda_1 (\mathbf{1}^\intercal P_G - \mathbf{1}^\intercal P_D) \nonumber \\
  &+& \mu_+^\intercal (H P_G - H P_D - F_+) - \mu_-^\intercal (H P_G - H P_D - F_-) \nonumber \\
  &+& \eta_+^\intercal (P_G - G_+) - \eta_-^\intercal (P_G - G_-)
\end{eqnarray}
According to KKT conditions, $\mu_+, \mu_-, \eta_+, \eta_- \ge 0$ and 
\begin{equation}
  \nabla_{P_G}^\intercal L = 0 \Rightarrow c + \lambda_1 \mathbf{1} + H^\intercal \mu + \eta = 0 \label{eqn:kkt_nabla_is_0}
\end{equation}
where $\mu = \mu_+ - \mu_-$ and $\eta = \eta_+ - \eta_-$.
The LMP vector $\lambda$ can be calculated:
\begin{equation}
  \lambda = -\nabla_{P_D}^\intercal L = \lambda_1 \mathbf{1} + H^\intercal \mu
\end{equation}
This is consistent with \cite{Wu1996}.


\subsection{Preparation} 
\label{sub:preparation}
\subsubsection{On General Mathematics} 
\label{ssub:two_lemmas}
The lemma and definition below lie the foundation of the proof of Theorem \ref{thm:diff_LMPs}.
\begin{lem}[Convex Piecewise Linear Functions With Parallel Segments]
\label{lem:convex_piecewise_linear_function_parallel_segments}
Assume the piecewise linear function $f: \mathbb{R}^{n}\rightarrow \mathbb{R}$ is composed of $m$ linear functions $f_k = c_k^\intercal x$ where $k = 1,2,\cdots, m$. Let $D = \text{dom} f$, $D_k = \text{dom } f_k$\footnote{It is obvious that $D = \cup_k D_k$.  }, and assume $D$ and $D_i$ are closed convex sets \footnote{The word ``closed'' indicates for $i\ne j$: $D_i \cap D_j \ne \emptyset$, but $\text{relint } D_i \cap \text{relint } D_j = \emptyset$}. 
If $f$ is convex and has two parallel segments: $f_i$ and $f_j$ ($i\ne j)$ with $\nabla^\intercal f_i = \nabla^\intercal f_j$, then $f_i$ and $f_j$ have to be on the same hyperplane. Namely:
\begin{enumerate}
\item $\forall x_i \in \text{relint } D_i, \forall x_j \in \text{relint } D_j$, $f(x_j) = f(x_i) + \nabla^\intercal f(x_i)\cdot (x_j - x_i)$.
\item $\forall x_i \in \text{relint } D_i, \forall x_j \in \text{relint } D_j$, if the convex combination of $x_i$ and $x_j$ belongs to $D_k$ instead of $D_i$ or $D_j$ ($i\ne j \ne k$), then $\nabla^\intercal f_k = \nabla^\intercal f_i = \nabla^\intercal f_j$ and $f(x_k) = f(x_i) + \nabla^\intercal f(x_i)\cdot (x_k - x_i)$.
\end{enumerate}
\end{lem}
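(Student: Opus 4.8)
The plan is to reduce everything to the elementary one-dimensional fact that a convex piecewise-linear function of a single variable whose slope at the left endpoint equals its slope at the right endpoint must be affine. First I would record the standard structural observation that each affine piece is actually a global minorant of $f$: pick $y\in\text{relint } D_k$; since $f$ is convex and agrees with the affine map $f_k$ on the full-dimensional cell $D_k$, the gradient $\nabla f_k$ is a subgradient of $f$ at $y$, so $f(x)\ge f(y)+\nabla^\intercal f_k\,(x-y)=f_k(x)$ for every $x\in D$. Hence $f=\max_k f_k$ on $D$.

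For the first claim, write the two parallel pieces as $f_i(x)=c^\intercal x+d_i$ and $f_j(x)=c^\intercal x+d_j$ with the common gradient $c=\nabla f_i=\nabla f_j$. Taking any $x_j\in\text{relint } D_j$ gives $f_j(x_j)=f(x_j)=\max_k f_k(x_j)\ge f_i(x_j)$, forcing $d_j\ge d_i$; the symmetric inequality at a point of $\text{relint } D_i$ gives $d_i\ge d_j$, so $d_i=d_j$ and $f_i\equiv f_j$. In particular $f(x_j)=f_j(x_j)=f_i(x_i)+\nabla^\intercal f(x_i)\,(x_j-x_i)$ for all $x_i\in\text{relint } D_i$ and $x_j\in\text{relint } D_j$, which is statement (1). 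It also follows that $f$ is affine along the whole segment $[x_i,x_j]$: the restriction $g(t)=f(x_i+t(x_j-x_i))$ is a convex one-dimensional piecewise-linear function whose slope near $t=0$ equals $c^\intercal(x_j-x_i)$ (since $f=f_i$ near $x_i\in\text{relint } D_i$) and whose slope near $t=1$ equals the same quantity, so monotonicity of the one-sided derivatives of a convex function forces the slope of $g$ to be the constant $c^\intercal(x_j-x_i)$ on all of $[0,1]$.

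For the second claim I would exploit exactly this picture. Suppose the convex combination $x_k=(1-t_0)x_i+t_0x_j$ lies in $D_k$ with $k\ne i\ne j$; arranging (after a mild perturbation of $x_i,x_j$) that $x_k\in\text{relint } D_k$, the segment crosses $\text{int } D_k$, and on the corresponding sub-interval we have $g(t)=f_k(x_i+t(x_j-x_i))$ with slope $\nabla^\intercal f_k\,(x_j-x_i)$. Comparing with the constant slope just computed yields $(\nabla f_k-\nabla f_i)^\intercal(x_j-x_i)=0$, and the value identity $f(x_k)=f(x_i)+\nabla^\intercal f(x_i)\,(x_k-x_i)$ is immediate from the affineness of $g$. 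To upgrade the one scalar equation to the full gradient equality $\nabla f_k=\nabla f_i$, I would vary $x_i$ inside $\text{relint } D_i$ and $x_j$ inside $\text{relint } D_j$: because these cells are full-dimensional, the perturbed segment still meets $\text{int } D_k$, the resulting differences $x_j-x_i$ range over a nonempty open subset of $\mathbb{R}^{n}$ and hence span it, so $\nabla f_k-\nabla f_i$ is orthogonal to a spanning set and must vanish. Then $\nabla f_k=\nabla f_i=\nabla f_j$ as required.

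The main obstacle I anticipate is not the convexity step but the bookkeeping around relative interiors and possibly lower-dimensional cells: one must justify that $f$ genuinely coincides with a single affine piece on an honest relative neighbourhood of the points used (this is where full-dimensionality of $D_i$, $D_j$, $D_k$ enters and makes the $\max$-representation and the subgradient argument legitimate), and that the segment can be chosen and then perturbed so that it crosses the interior of $D_k$ rather than merely touching its boundary. Degenerate configurations — a cell of lower dimension, or a convex combination that meets only $\partial D_k$ — either need to be excluded by hypothesis or handled as separate boundary cases.
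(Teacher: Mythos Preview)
Your proposal is correct and aligns with the paper's approach: the paper does not give a detailed argument but simply states that the lemma ``could be easily proved by applying the first-order conditions and the definition of convex functions,'' which is exactly the subgradient/minorant machinery you deploy. You have supplied considerably more detail than the paper --- in particular the $f=\max_k f_k$ representation, the one-dimensional restriction, and the perturbation step for part (2) --- but the underlying idea is the same, and your explicit flagging of the relative-interior and full-dimensionality caveats is appropriate.
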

Lemma \ref{lem:convex_piecewise_linear_function_parallel_segments} could be easily proved by applying the first-order conditions and the definition of convex functions.

\begin{defn}[Adjacent Sets]
\label{defn:adjacent_sets}
Given two closed set $D_i$ and $D_j$, and $\text{dim}(D_i) = \text{dim}(D_j) = d \ge 2$. 
We say $D_i$ and $D_j$ are \emph{adjacent} if $D_i \cap D_j \ne \emptyset$ and $\text{dim}(D_i \cap D_j) = d-1$.
\end{defn}


\subsubsection{On the Features of SPRs} 
\label{ssub:on_the_features_of_sprs}
According to the literatures on Multi-parametric Linear Programming theory:
\begin{lem}
If the problem is not degenerate, then the partition of the load space is unique, and $S_{\pi}$  is an open polyhedron of the same dimension as $\mathbb{D}$ \cite{filippi1997geometry}. This indicates the dimensions of all the SPRs are the same.
\end{lem}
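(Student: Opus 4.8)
The plan is to derive both assertions from the explicit description of a single SPR that is already available in Eqn.~(\ref{eqn:SPR_analytical_form}), together with the uniqueness of the optimal primal--dual pair that non-degeneracy provides (the remarks following Lemma~\ref{lem:complementary_slackness}). First I would dispose of the uniqueness-of-the-partition claim: under non-degeneracy, for every $P_D\in\mathcal{D}$ the linear program in Eqn.~(\ref{eqn:SCED_MLP_Primal}) at parameter $P_D$ has a unique primal optimum $P_G^*$ and a unique optimal dual $y^*$, so the index set on which $s^*=0$ (equivalently $y^*>0$) is intrinsic to $P_D$. Hence $P_D\mapsto \pi(P_D)=(\mathcal{B}(P_D),\mathcal{N}(P_D))$ is a well-defined map, and the resulting family $\{\mathcal{S}_\pi\}$ is a genuine, and therefore unique, decomposition of $\mathcal{D}$.

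For the structural claim, fix an attained pattern $\pi=(\mathcal{B},\mathcal{N})$ and a point $P_D^0\in\mathcal{S}_\pi$. Non-degeneracy gives $|\mathcal{B}|=n_g$, $A_{\mathcal{B}}$ invertible, $P_G^*(P_D)=A_{\mathcal{B}}^{-1}(b+WP_D)_{\mathcal{B}}$, and a dual optimum $y^*$ determined by $A_{\mathcal{B}}^\intercal y_{\mathcal{B}}=-c$, $y_{\mathcal{N}}=0$ --- crucially $y^*$ does \emph{not} depend on $P_D$, and by Definition~\ref{defn:optimal_partition} one has the \emph{strict} inequalities $y_{\mathcal{B}}^*>0$ and, at $P_D^0$, $s_{\mathcal{N}}^*>0$, i.e. the relations of Eqn.~(\ref{eqn:N_primal}) hold strictly. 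I would then prove the set identity $\mathcal{S}_\pi=\{P_D:(\mathbf{I}_{\mathcal{N}}A(\mathbf{I}_{\mathcal{B}}A)^{-1}\mathbf{I}_{\mathcal{B}}-\mathbf{I}_{\mathcal{N}})(b+WP_D)<0\}$: the inclusion ``$\subseteq$'' is precisely the analytical-form lemma, and for ``$\supseteq$'' I would check that whenever $P_D$ satisfies those strict inequalities the triple $(P_G^*(P_D),s^*,y^*)$ is primal feasible, dual feasible, complementary, with $y_{\mathcal{B}}^*>0$ and $s_{\mathcal{N}}^*>0$, hence is the optimal pair with optimal partition exactly $\pi$ (using the uniqueness established in the first paragraph).

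The identity exhibits $\mathcal{S}_\pi$ as the solution set of finitely many \emph{strict} affine inequalities, which is by definition an open polyhedron. For the dimension, all of these inequalities already hold strictly at $P_D^0$, so, being affine, they hold on a full neighborhood of $P_D^0$; therefore $\mathcal{S}_\pi$ contains an open set and $\dim\mathcal{S}_\pi=\dim\mathbb{D}$. Since $\pi$ was an arbitrary attained pattern, every SPR is full-dimensional, so all SPRs share the common dimension $\dim\mathbb{D}$. The part I expect to require the most care is the accounting around non-degeneracy: one must genuinely use it to obtain \emph{strict} positivity of $y_{\mathcal{B}}^*$ and $s_{\mathcal{N}}^*$ (not just ``$\ge 0$'') and the uniqueness of the optimal pair --- these are exactly the hypotheses that fail in a degenerate problem, where critical regions can collapse to lower dimension and the partition of the load space can cease to be well-defined.
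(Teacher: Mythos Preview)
Your proposal is correct. Note, however, that the paper does not actually prove this lemma: it is stated with a citation to \cite{filippi1997geometry} and used as a black box from the multi-parametric LP literature. What you have done is supply a self-contained argument using only the machinery already developed in the paper --- the analytical form of the SPRs in Eqn.~(\ref{eqn:SPR_analytical_form}) and the non-degeneracy remarks after Lemma~\ref{lem:complementary_slackness}. Your two-inclusion argument for the set identity is the right idea (the ``$\supseteq$'' direction via constructing and verifying the KKT triple is exactly what is needed), and the observation that strict affine inequalities satisfied at a point remain satisfied on a neighborhood is the clean way to get full-dimensionality. The only place to be slightly more explicit is in invoking strict complementarity: you want both $y_{\mathcal{B}}^*>0$ \emph{and} $s_{\mathcal{N}}^*>0$ simultaneously, so that the primal and dual characterizations of the optimal partition in Definition~\ref{defn:optimal_partition} agree --- you flag this, but it is worth stating that this equivalence is precisely what non-degeneracy buys.
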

\begin{lem}
\label{lem:affine_solution_MLP}
The optimal value function $f^*(P_D) = c^\intercal  P_G^*(P_D)$ is convex and piecewise affine over $\mathcal{D}$, and affine in each SPR. The optimal solution $P_G^*$ within an SPR is an affine function of the load vector $P_D$ \cite{gal1995postoptimal}.
\end{lem}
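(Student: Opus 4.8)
The plan is to build the lemma from three independent ingredients that are essentially already on the table: (a) affineness of $P_G^\ast$, and hence of $f^\ast$, on each SPR; (b) global piecewise affineness over $\mathcal{D}$; (c) global convexity. For (a), fix an SPR $\mathcal{S}_\pi$ with pattern $\pi=(\mathcal{B}_\pi,\mathcal{N}_\pi)$. For $P_D\in\mathcal{S}_\pi$, complementary slackness (Lemma \ref{lem:complementary_slackness}, Eqn.\,(\ref{eqn:B_primal})) gives $A_{\mathcal{B}_\pi}P_G^\ast=(b+WP_D)_{\mathcal{B}_\pi}=\mathbf{I}_{\mathcal{B}_\pi}(b+WP_D)$. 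Non-degeneracy — the standing hypothesis inherited from the MLP fact quoted just above, which is also what makes the optimal partition unique and every SPR full-dimensional — forces the square matrix $A_{\mathcal{B}_\pi}$ to be invertible (this is precisely the Remark after Lemma \ref{lem:complementary_slackness}), so
\[
P_G^\ast(P_D)=A_{\mathcal{B}_\pi}^{-1}\mathbf{I}_{\mathcal{B}_\pi}(b+WP_D),
\]
which is an affine function of $P_D$; hence $f^\ast(P_D)=c^\intercal P_G^\ast(P_D)=c^\intercal A_{\mathcal{B}_\pi}^{-1}\mathbf{I}_{\mathcal{B}_\pi}(b+WP_D)$ is affine on $\mathcal{S}_\pi$ as well. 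I would then extend both formulas to $\overline{\mathcal{S}_\pi}$ using continuity of the LP optimal solution and optimal value in the right-hand side.

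For (b): the index set $\mathcal{J}=\{1,\dots,n_c\}$ is finite, so there are only finitely many candidate patterns $\pi$ and hence finitely many nonempty SPRs; their closures cover $\mathcal{D}$, which is a convex polyhedron by \cite{Gal1972}. Since the LP optimal value $f^\ast$ is always single-valued and is, by part (a), affine on each of these finitely many cells and agrees with the common value on shared facets, $f^\ast$ is piecewise affine and continuous on $\mathcal{D}$. For (c): write $f^\ast(P_D)=v(b+WP_D)$ with $v(r):=\min\{c^\intercal P_G: A P_G\le r\}$. Given $r_0,r_1$ with minimizers $P_{G,0},P_{G,1}$ and $\theta\in[0,1]$, the point $\theta P_{G,1}+(1-\theta)P_{G,0}$ is feasible for $r_\theta:=\theta r_1+(1-\theta)r_0$, so $v(r_\theta)\le \theta v(r_1)+(1-\theta)v(r_0)$; thus $v$ is convex, and it is finite everywhere relevant because the box $P_G^-\le P_G\le P_G^+$ makes the feasible region compact. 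Composing the convex $v$ with the affine map $P_D\mapsto b+WP_D$ and restricting to the convex set $\mathcal{D}$ leaves $f^\ast$ convex.

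The hard part is not any one of these steps — each is short — but the seam between the algebraic description (the closed form $P_G^\ast=A_{\mathcal{B}_\pi}^{-1}(b+WP_D)_{\mathcal{B}_\pi}$, valid only relative-interior-wise) and the geometric one (finitely many full-dimensional polyhedral SPRs tiling $\mathcal{D}$): one must argue carefully that "affine on the relative interior of each SPR" genuinely glues into a single continuous convex piecewise-affine function, with no gaps and no multivalued patches on the lower-dimensional interfaces where the pattern degenerates. That gluing is exactly where non-degeneracy (uniqueness of the optimal partition, via the MLP result already quoted and \cite{filippi1997geometry}) and continuity of the parametric LP optimum are used. Alternatively, the entire statement is standard parametric-LP theory and can simply be cited to \cite{gal1995postoptimal} (and \cite{Borrelli2003}); the sketch above is the self-contained version of that citation.
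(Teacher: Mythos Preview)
Your proposal is correct. The paper itself does not prove this lemma at all: it simply states the result and cites \cite{gal1995postoptimal}, treating it as a standard fact from multi-parametric LP theory. Your sketch therefore goes well beyond what the paper does, supplying a self-contained argument (affineness on each SPR via $P_G^\ast=A_{\mathcal{B}_\pi}^{-1}(b+WP_D)_{\mathcal{B}_\pi}$, finiteness of patterns for piecewise affineness, and convexity of the LP value function in the right-hand side) in place of a bare citation. The only caveat is that your careful discussion of the ``seam'' --- gluing across lower-dimensional faces and invoking non-degeneracy plus continuity of the parametric optimum --- is exactly the content that the paper outsources to \cite{gal1995postoptimal} and \cite{filippi1997geometry}, so you are not diverging from the paper's approach so much as unpacking the reference it relies on.
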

The 3-bus 3-generator 2-load system in Fig. \ref{fig:3BusSystem3Gene} is analyzed via MPT 3.0. The optimal value function and primal solutions are demonstrated in Fig. \ref{fig:affine_solution_MLP}. This verifies Lemma \ref{lem:affine_solution_MLP}.

\begin{figure}[htbp]
  \centering
\begin{subfigure}[t]{0.49\linewidth}
  \includegraphics[width=\linewidth]{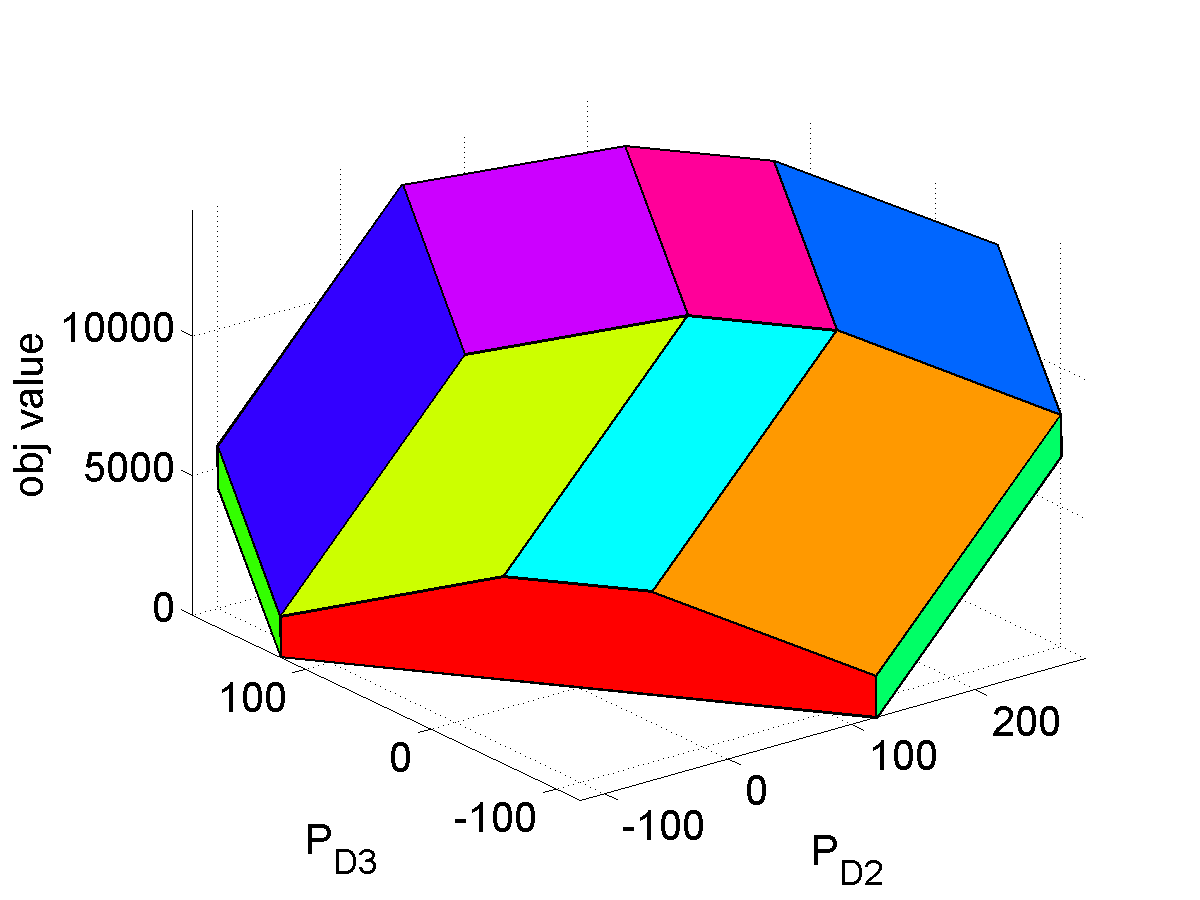}
  \caption{Optimal Value Function}
\end{subfigure}
\begin{subfigure}[t]{0.49\linewidth}
  \includegraphics[width=\linewidth]{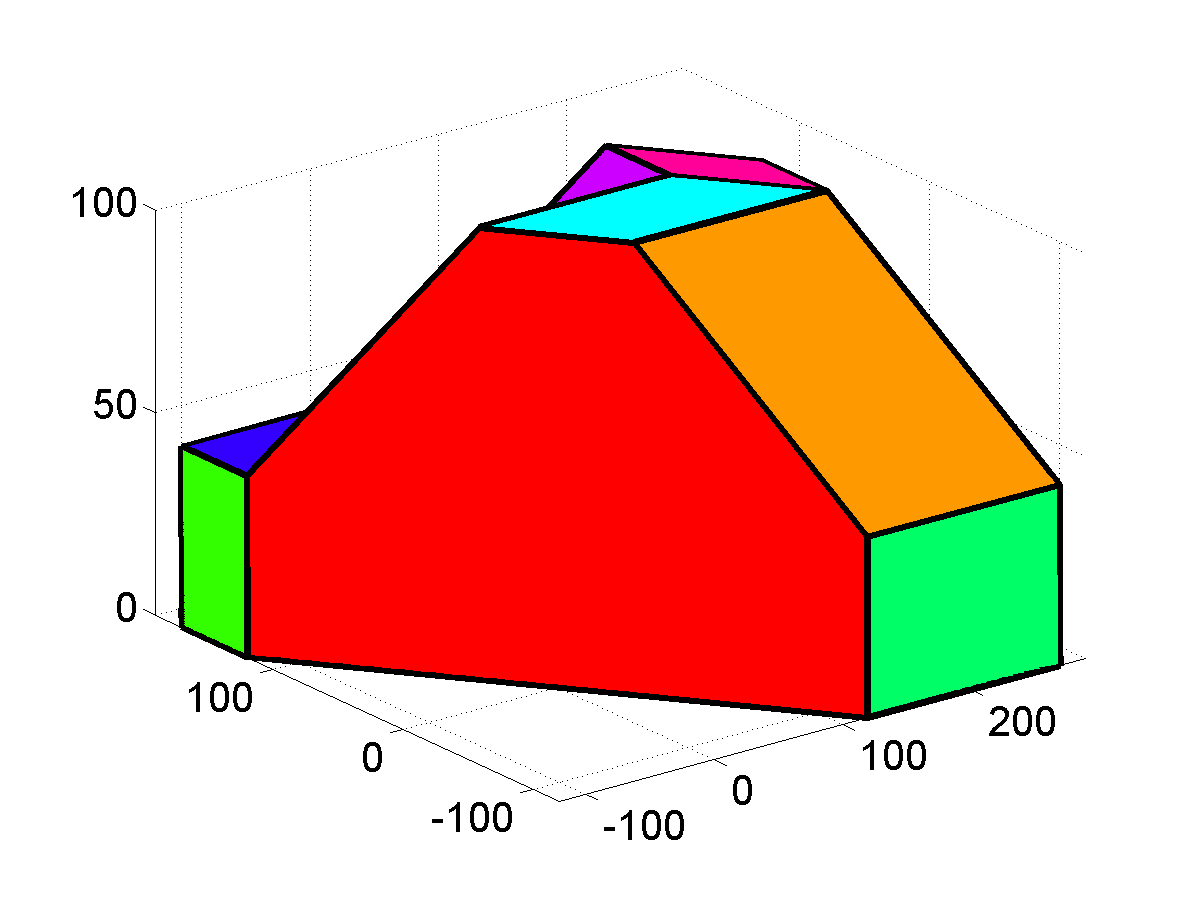}
  \caption{Primal Solution}
\end{subfigure}
  \caption{Piecewise Affine Optimal Value Function and Primal Solution}
  \label{fig:affine_solution_MLP}
\end{figure}

\begin{lem}[System Patterns of Adjacent SPRs]
\label{lem:adjacent_SPRs}
Given two system pattern regions (SPRs) $S_i$ and $S_j$ and their system patterns $\pi_i = (\mathcal{B}_i, \mathcal{N}_i ) $ and $\pi_j = (\mathcal{B}_j, \mathcal{N}_j )$. If $S_i$ and $S_j$ are \emph{adjacent}\footnote{See Definition \ref{defn:adjacent_sets}.}, then $\mathcal{B}_i$ and $\mathcal{B}_j$ only differ in one entry.
\end{lem}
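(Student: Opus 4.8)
The plan is to argue by contradiction: suppose $S_i$ and $S_j$ are adjacent (so $\dim(S_i\cap S_j)=d-1$ where $d=\dim\mathbb{D}$) but $\mathcal{B}_i$ and $\mathcal{B}_j$ differ in more than one entry. The key structural fact I would exploit is Lemma \ref{lem:affine_solution_MLP}: the optimal value function $f^*(P_D)=c^\intercal P_G^*(P_D)$ is convex and piecewise affine, affine on each SPR, with gradient $\nabla f^*|_{S_\pi} = -W^\intercal y^*_\pi$ (equivalently determined through the LMP vector / Eqn.~(\ref{eqn:LMP})) constant on each SPR. The common facet $S_i\cap S_j$ is a $(d-1)$-dimensional polyhedron, so it spans an affine hyperplane $\mathcal{H}$; on this facet the two affine pieces $f^*|_{S_i}$ and $f^*|_{S_j}$ must agree, which forces $(\nabla f^*|_{S_i}-\nabla f^*|_{S_j})$ to be orthogonal to the direction space of $\mathcal{H}$, i.e. it is a (possibly zero) multiple of the normal to $\mathcal{H}$.

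Next I would translate the combinatorial hypothesis into the analytic one. On $S_i$ the binding set $\mathcal{B}_i$ determines $P_G^*$ via $A_{\mathcal{B}_i}P_G^* = (b+WP_D)_{\mathcal{B}_i}$ (Remark after Lemma \ref{lem:complementary_slackness}), and similarly on $S_j$. Crossing the facet $\mathcal{H}$, some constraints leave $\mathcal{B}$ and others enter. I would show that each single constraint that is active on the facet but whose slack becomes positive on one side contributes exactly one defining hyperplane of that side's SPR whose normal (pulled back through $W$) is the normal of $\mathcal{H}$; conversely, if $\mathcal{B}_i$ and $\mathcal{B}_j$ differed in $\ge 2$ entries, the facet $S_i\cap S_j$ would have to lie in the intersection of $\ge 2$ distinct such boundary hyperplanes, hence have dimension $\le d-2$, contradicting adjacency. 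Concretely, from Eqn.~(\ref{eqn:SPR_analytical_form}) the boundary of $S_\pi$ between $S_\pi$ and the region where constraint $r\in\mathcal{N}_\pi$ becomes binding is the zero set of the $r$-th component of $(\mathbf{I}_{\mathcal{N}}A(\mathbf{I}_{\mathcal{B}}A)^{-1}\mathbf{I}_{\mathcal{B}}-\mathbf{I}_{\mathcal{N}})(b+WP_D)$; two such zero sets for $r\neq r'$ are generically transverse, so requiring the facet to sit in both drops a dimension.

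The main obstacle I anticipate is ruling out the \emph{degenerate-looking but non-degenerate} coincidence in which two (or more) distinct boundary hyperplanes of $S_i$ happen to coincide, so that removing two constraints from $\mathcal{B}_i$ still only cuts down dimension by one. This is exactly the situation addressed by Lemma \ref{lem:convex_piecewise_linear_function_parallel_segments}: if two adjacent affine pieces of the convex function $f^*$ are "parallel" (same gradient) they must in fact lie on a common hyperplane, which would make the two regions part of a single SPR — contradicting that $S_i$ and $S_j$ are distinct SPRs with distinct system patterns. So I would invoke Lemma \ref{lem:convex_piecewise_linear_function_parallel_segments} (together with the non-degeneracy assumption, which guarantees $A_{\mathcal{B}}$ is invertible and the partition is unique) to force the boundary hyperplanes to be genuinely distinct, and then the dimension-counting argument in the previous paragraph closes the proof. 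I would finish by noting that since exactly one constraint's status changes, $\mathcal{B}_i$ and $\mathcal{B}_j$ differ in exactly one entry, and — as used in the remark after Table \ref{tab:classification_accuracy_118bus} — the associated LMP vectors of adjacent SPRs therefore differ only through that single constraint.
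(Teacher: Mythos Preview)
Your core argument --- assume $\mathcal{B}_i$ and $\mathcal{B}_j$ differ in $k\ge 2$ entries, observe that the common facet must then satisfy $k$ additional linear equations, hence has dimension $\le d-k<d-1$, contradicting adjacency --- is exactly the paper's proof. The paper gives nothing more than this dimension count, dispatching the linear-dependence corner case in a one-line footnote (``eliminate redundant constraints'').

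Everything you build around that core is unnecessary, and one piece of it is actually wrong. The first paragraph, invoking Lemma~\ref{lem:affine_solution_MLP} and the gradient of $f^*$, plays no role in the dimension argument and can be dropped entirely. More importantly, your invocation of Lemma~\ref{lem:convex_piecewise_linear_function_parallel_segments} to rule out coinciding boundary hyperplanes is a conflation of two distinct phenomena. Lemma~\ref{lem:convex_piecewise_linear_function_parallel_segments} concerns two affine pieces of $f^*$ having the \emph{same gradient} (i.e.\ two SPRs with the same LMP vector); the obstacle you are trying to address is two \emph{facet-defining inequalities} of a single SPR $S_i$ having parallel normals in $P_D$-space. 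These are unrelated: the normal to a facet of $S_i$ is a row of $(\mathbf{I}_{\mathcal{N}}A(\mathbf{I}_{\mathcal{B}}A)^{-1}\mathbf{I}_{\mathcal{B}}-\mathbf{I}_{\mathcal{N}})W$, whereas the gradient of $f^*$ on $S_i$ is determined by $y_{\mathcal{B}_i}^*$ through Eqn.~(\ref{eqn:determine_y}). Two facet normals coinciding says nothing about LMPs on either side being equal, so Lemma~\ref{lem:convex_piecewise_linear_function_parallel_segments} does not apply. The honest fix is the paper's: under non-degeneracy the facet inequalities are non-redundant, so distinct rows give distinct hyperplanes, and the dimension count goes through.
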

\begin{proof}
Lemma \ref{lem:adjacent_SPRs} is a direct conclusion from Definition \ref{defn:adjacent_sets}.
If $S_i$ and $S_j$ are \emph{adjacent} but $\mathcal{B}_i$ and $\mathcal{B}_j$ differ in $k\ge 2$ entries. Then $S_i \cap S_j$ is depicted by $k$ linear constraints. If the constraints are linear independent\footnote{If they are linear dependent, we can always eliminate the redundant constraints, which will not make any difference.}, then $\text{dim}  S_i \cap S_j = \text{dim}  S_i - k < \text{dim}  S_i - 1$. This is contradictory with the definition of \emph{adjacent} sets, which requires $\text{dim}  S_i \cap S_j = \text{dim}  S_i -1$.
\end{proof}


\subsection{Sketch of the Proof} 
\label{sub:sketch_of_the_proof}
The optimal value function of SCED problem $f^*$ is convex and piecewise affine over $\mathcal{D}$ and affine in each SPR (Lemma \ref{lem:affine_solution_MLP}). According to the definition of LMPs, the LMP vector is the norm vector of the affine segment of $f^*$. 
If two different SPRs have the same LMP, then their norm vectors are the same, thus $f^*$ has two parallel linear segments. Lemma \ref{lem:convex_piecewise_linear_function_parallel_segments} claims the existence of two adjacent SPRs \footnote{May not be exactly same as previous two.} with the same LMP vectors. The system patterns of those two adjacent SPRs, according to Lemma \ref{lem:adjacent_SPRs}, differ in only one entry. 
There are only two possibilities: (1) those two SPRs have one different binding generation constraints, but the binding transmission constraints are the same; (2) those two SPRs have one different binding transmission constraints, but the binding generation constraints are the same.

\subsection{Proof of Theorem \ref{thm:diff_LMPs}} 
\label{sub:proof_of_theorem_ref_thm_diff_lmps}
A complete proof is provided in this section.

Assume there exist two SPRs ($i,j$) which have the same LMP vector $\lambda^{(i)} = \lambda^{(j)}$.
Notice that this equality $\lambda^{(i)} = \lambda^{(j)}$ is true for each entry. The LMPs of the slack bus (i.e. bus \#1) are the same: $\lambda_1^{(i)} = \lambda_1^{(j)}$. In other words, the \emph{energy components} of the LMP vectors are the same. According to $H^\intercal \mu = \lambda - \lambda_1 \mathbf{1}$, the \emph{congestion components} are also the same:
\begin{equation}
H^\intercal \mu^{(i)} = H^\intercal \mu^{(j)}
\end{equation}

According to Eqn. (\ref{eqn:kkt_nabla_is_0}):
\begin{equation}
\label{eqn:multiplier_gen} 
  \eta^{(i)} =  -c - \lambda^{(i)} = -c - \lambda^{(j)} = \eta^{(j)}
\end{equation}
$\eta^{(i)} = \eta^{(j)} $ means the marginal generators (which are ON) of the two SPRs ($i,j$) are also exactly the same.

Also, according to the analysis in Appendix \ref{sub:sketch_of_the_proof}, there are two \emph{adjacent} SPRs ($i$ and $k$\footnote{$k\ne j$ is possible.}) with the same LMP vectors. And the system pattern of adjacent SPRs only differ in one entry. This indicates that there is only one different binding constraints: either one different congested line or one different marginal generator.

We will discuss these two cases separately:


\subsubsection{Case 1: same generation shadow prices represent different system patterns} 
\label{ssub:case1_same_trans_diff_gen}
For SPR $i$ and $k$, similar arguments would show that $\lambda^{(i)} = \lambda^{(k)}$ indicates $\eta^{(i)} = \eta^{(k)}$. If SPR $i$ and $k$ are different, then the same vector $\eta = \eta^{(i)} = \eta^{(k)}$ represents two different sets of marginal generators. This is possible only when there are some ``equivalent'' generators. For those ``equivalent'' generators, if we increase the output of one and decrease the other one by the same amount, the total generation cost remains the same and optimal. That means both generation outputs are optimal, the SCED problem has infinite optimal solutions. This is the case that SCED is degenerate and the Lagrange multipliers cannot be uniquely determined.





\subsubsection{Case 2: same congestion component represent different congested lines} 
\label{ssub:case_2_marginal_generators_are_the_same_but_there_is_one_different_congested_line_}
We will show this case is not possible.


There is one different congested line between SPR $i$ and SPR $j$. Since we can label the congested lines with any non-repetitive numbers, assume line $1$ is congested in SPR $i$ but not congested in SPR $j$. Similarly, line $2$ is congested in SPR $j$ but not congested in SPR $i$. And the index set of all the other lines congested in both SPR $i$ and SPR$j$ is denoted by $\mathcal{C}$.

In our previous settings, the matrix $A_{\mathcal{B}}$, which relates with binding constraints, has the structure:
\begin{equation}
A_{\mathcal{B}} = 
  \begin{bmatrix}
    \text{supply-demand balance: } \mathbf{1}_{n_b}^\intercal  \\
    \text{shift factor matrix related with congested lines} \\
    \text{matrix related with generation constraints}
  \end{bmatrix}
\end{equation}
We rearrange the structure of matrix $A_{\mathcal{B}}$ as follows\footnote{This step will not make any difference to the theoretical results, but will significantly simplify the notations.}:
\begin{equation}
A_{\mathcal{B}} = 
  \begin{bmatrix}
    \text{supply-demand balance: } \mathbf{1}_{n_b}^\intercal  \\
    \text{matrix about generation constraints: }  G\\
    \text{shift factor matrix of commonly congested lines: } H_{\mathcal{C}} \\
    \text{shift factor matrix of uniquely congested lines: } H_U\\
  \end{bmatrix}
\end{equation}
For SPR $i$ and SPR $k$:
\begin{equation}
A_{\mathcal{B}_i} = 
  \begin{bmatrix}
    \mathbf{1}_{n_b}^\intercal \\
    G \\
    H_{\mathcal{C}} \\
    h_1 \\
  \end{bmatrix} 
  = \begin{bmatrix}
    E \\
    H_{\mathcal{C}} \\
    h_1 \\
  \end{bmatrix}, 
A_{\mathcal{B}_k} = 
  \begin{bmatrix}
    \mathbf{1}_{n_b}^\intercal \\
    G \\
    H_{\mathcal{C}} \\
    h_2 \\
  \end{bmatrix}
\begin{bmatrix}
    E \\
    H_{\mathcal{C}} \\
    h_2 \\
  \end{bmatrix} 
\end{equation}
where $[\mathbf{1}_{n_b}^\intercal; G]$ is common for both SPR $i$ and SPR $k$, we use matrix $E = [\mathbf{1}_{n_b}^\intercal; G]$ to represent it. $H_{\mathcal{C}}$ is the shift factor matrix related with lines congested in both SPR $i$ and SPR $k$. $h_1$ is the row of line $1$ in the shift factor matrix $H$, $h_2$ is the row of line $2$ in the shift factor matrix $H$. $h_1, h_2 \in \mathbb{R}^{1\times n_b} $ are row vectors, where $n_b$ is the number of buses.

The structure of $A_{\mathcal{B}_i}^{-1}$ and $A_{\mathcal{B}_k}^{-1}$ is as follows:
\begin{eqnarray}
  A_{\mathcal{B}_i}^{-1} &=& 
  \begin{bmatrix}
    B_1^{(i)} & B_2^{(i)} & \beta_3^{(i)}
  \end{bmatrix}\\
  A_{\mathcal{B}_k}^{-1} &=& 
  \begin{bmatrix}
    B_1^{(k)} & B_2^{(k)} & \beta_3^{(k)}
  \end{bmatrix}
\end{eqnarray}
where $\beta_3^{(k)}, \beta_3^{(i)} \in \mathbb{R}^{n_b \times 1}$.

Since $A_{\mathcal{B}_k} A_{\mathcal{B}_k}^{-1} = \mathbf{I}$:
\begin{equation}
\label{eqn:inverse_AB_k}
  A_{\mathcal{B}_k} \cdot A_{\mathcal{B}_k}^{-1} = 
  \begin{bmatrix}
    E B_1^{(k)} & E B_2^{(k)} & E \beta_3^{(k)} \\
    H_{\mathcal{C}} B_1^{(k)} & H_{\mathcal{C}} B_2^{(k)} & H_{\mathcal{C}} \beta_3^{(k)} \\
    h_2 B_1^{(k)} & h_2 B_2^{(k)} & h_2 \beta_3^{(k)}
  \end{bmatrix}
  = \begin{bmatrix}
    \mathbf{I} & \mathbf{0} & \mathbf{0} \\
    \mathbf{0} & \mathbf{I} & \mathbf{0} \\
    \mathbf{0} & \mathbf{0} & 1
  \end{bmatrix}
\end{equation}

An interesting observation:
\begin{eqnarray}
  & & (A_{\mathcal{B}_k}^\intercal)^{-1} A_{\mathcal{B}_i}^\intercal = (A_{\mathcal{B}_k}^{-1})^\intercal A_{\mathcal{B}_i}^\intercal = (A_{\mathcal{B}_i}A_{\mathcal{B}_k}^{-1})^\intercal  \nonumber \\
  &= &
  \begin{bmatrix}
    E B_1^{(k)} & E B_2^{(k)} & E \beta_3^{(k)} \\
    H_{\mathcal{C}} B_1^{(k)} & H_{\mathcal{C}} B_2^{(k)} & H_{\mathcal{C}} \beta_3^{(k)} \\
    h_1 B_1^{(k)} & h_1 B_2^{(k)} & h_1 \beta_3^{(k)}
  \end{bmatrix}^\intercal 
  = 
  \begin{bmatrix}
    \mathbf{I} & \mathbf{0} & \mathbf{0} \\
    \mathbf{0} & \mathbf{I} & \mathbf{0} \\
    h_1 B_1^{(k)} & h_1 B_2^{(k)} & h_1 \beta_3^{(k)}
  \end{bmatrix}^\intercal
\end{eqnarray}
Multiply $y_{\mathcal{B}_i}$ on both sides:
\begin{eqnarray}
& & (A_{\mathcal{B}_k}^\intercal)^{-1}A_{\mathcal{B}_i}^\intercal \times y_{\mathcal{B}_i} 
= ((A_{\mathcal{B}_k}^\intercal)^{-1}A_{\mathcal{B}_i}^\intercal) \times y_{\mathcal{B}_i} \nonumber \\
&=& \begin{bmatrix}
    \mathbf{I} & \mathbf{0} & (B_1^{(k)})^\intercal h_1^\intercal  \\
    \mathbf{0} & \mathbf{I} & (B_2^{(k)})^\intercal h_1^\intercal  \\
    \mathbf{0} & \mathbf{0} & h_1 \beta_3^{(k)}
  \end{bmatrix}
  \begin{bmatrix}
    \alpha \\
    \mu_{\mathcal{C}}^{(i)}  \\
    \mu_1
  \end{bmatrix} 
  = \begin{bmatrix}
    \alpha \\
    \mu_{\mathcal{C}}^{(i)} \\
    \mathbf{0}
  \end{bmatrix} + \mu_1
  \begin{bmatrix}
    (B_1^{(k)})^\intercal h_1^\intercal  \\
    (B_2^{(k)})^\intercal h_1^\intercal  \\
    h_1 \beta_3^{(k)}
  \end{bmatrix}
\end{eqnarray}
Also:
\begin{equation}
  (A_{\mathcal{B}_k}^\intercal)^{-1}A_{\mathcal{B}_i}^\intercal \times y_{\mathcal{B}_i} = (A_{\mathcal{B}_k}^\intercal)^{-1}(A_{\mathcal{B}_i}^\intercal \times y_{\mathcal{B}_i})
  =(A_{\mathcal{B}_k}^\intercal)^{-1}\times (-c) = y_{\mathcal{B}_k} = 
  \begin{bmatrix}
    \alpha \\
    \mu_{\mathcal{C}}^{(k)}  \\
    \mu_2
  \end{bmatrix}
\end{equation}
Therefore:
\begin{equation}
  \begin{bmatrix}
    \alpha \\
    \mu_{\mathcal{C}}^{(k)}  \\
    \mu_2
  \end{bmatrix} = 
  \begin{bmatrix}
    \alpha \\
    \mu_{\mathcal{C}}^{(i)} \\
    \mathbf{0}
  \end{bmatrix} + \mu_1
  \begin{bmatrix}
    (B_1^{(k)})^\intercal h_1^\intercal  \\
    (B_2^{(k)})^\intercal h_1^\intercal  \\
    h_1 \beta_3^{(k)}
  \end{bmatrix}
\end{equation}


We get the following equations:
\begin{eqnarray}
  \mu_{\mathcal{C}}^{(k)} - \mu_{\mathcal{C}}^{(i)} &=& \mu_1 (B_2^{(k)})^\intercal h_1^\intercal \label{eqn:congested_multipliers_ik} \\
  \mu_2 &=& \mu_1 h_1 \beta_3^{(k)} \label{eqn:unqiue_congested_multipliers_ik} 
\end{eqnarray}
From the assumption $\lambda^{(i)} = \lambda^{(k)}$ we get $H^\intercal (\mu^{(i)} - \mu^{(k)}) = \mathbf{0}$. Since the shadow prices of the non-congested lines are zero:
\begin{equation}
\label{eqn:zero_congestion_diff}
  \mathbf{0} = H^\intercal (\mu^{(i)} - \mu^{(k)})
  = H_{\mathcal{C}}^\intercal (\mu_{\mathcal{C}}^{(i)} - \mu_{\mathcal{C}}^{(k)}) 
  + \mu_1 h_1^\intercal - \mu_2 h_2^\intercal 
\end{equation}
Using Eqn.(\ref{eqn:congested_multipliers_ik}):
\begin{equation}
\label{eqn:null_space_H_transpose}
  \mathbf{0}  = -\mu_1 H_{\mathcal{C}}^\intercal (B_2^{(k)})^\intercal h_1^\intercal + \mu_1 h_1^\intercal - \mu_2 h_2^\intercal
\end{equation}
From $(A_{\mathcal{B}_k})^{-1}A_{\mathcal{B}_k} = \mathbf{I}$:
\begin{equation}
\label{eqn:inverse_ABk}
  B_1^{(k)}E + B_2^{(k)}H_{\mathcal{C}} + \beta_3^{(k)}h_2 = \mathbf{I}
\end{equation}
Using Eqn. (\ref{eqn:inverse_ABk}), (\ref{eqn:unqiue_congested_multipliers_ik}) and $\mu_1 > 0$, Eqn. (\ref{eqn:null_space_H_transpose}) becomes:
\begin{equation}
  E^\intercal (B_1^{(k)})^\intercal h_1^\intercal = \mathbf{0}
\end{equation}
Since $E B_1^{(k)} = \mathbf{I}$, $(B_1^{(k)})^\intercal = (E E^\intercal)^{-1} E$:
\begin{equation}
  \mathbf{0} = E E^\intercal (B_1^{(k)})^\intercal h_1^\intercal = E h_1^\intercal = 0
\end{equation}
Given the structure of matrix $E$, we get the following equation:
\begin{equation}
  h_1 \mathbf{1}_{n_b} = 0
\end{equation}
This is not possible given the feature of the shift factor matrix $H$.

Therefore it is not possible that different congestion patterns have the same LMP vector.

\section{More Details About the SPRs of the 3-bus System in Fig. \ref{fig:3BusSystem3Gene} } 
\label{sec:analytical_form_of_the_sprs_of_the_3_bus_system_in_fig_}
This section provide complete details about the SPRs of the 3-bus 3-generator 2-load system in Fig. \ref{fig:3BusSystem3Gene}.
10 SPRs are visualized in Fig. \ref{fig:3Bus3GeneSystem_SPR}, and their corresponding system pattern, analytical form and LMP vectors are summarized in Table \ref{tab:details_spr}.

To better illustrate the concept \emph{system pattern} in Table \ref{tab:details_spr}, we provide the detailed formulation of the SCED problem of the 3-bus system below. According to definition \ref{defn:optimal_partition}, system pattern partitions all the constraints into two sets: binding constraints $\mathcal{B}$ and non-binding constraints $\mathcal{N}$. In Table \ref{tab:details_spr}, we only use the binding constraints to represent the system pattern\footnote{Since the non-binding constraints are just the complement of the index set $\{1,2,\cdots,14\}$.}, and each binding constraint is represented by its index. Also it is worth noticing that the supply-demand balance constraint is rewritten to be two inequality constraints, therefore it has indices $1$ and $2$\footnote{Since this constraint is always binding, therefore $1$ and $2$ are actually equivalent. One of them is redundant.}

\begin{equation*}
\begin{aligned}
& \underset{P_{G_1},P_{G_2},P_{G_3}}{\text{minimize}} & & 20P_{G_1}+50P_{G_2}+100P_{G_3} &\\
& \text{subject to} & & P_{G_1}, P_{G_2}+P_{G_3} = P_{D_2}+P_{D_2} & :1,2\\
& & & -\frac{2}{3}(P_{G_2}-P_{D_2}) -\frac{1}{3}(P_{G_3}-P_{D_3}) \le 60 &:3 \\
& & & -\frac{1}{3}(P_{G_2}-P_{D_2}) -\frac{2}{3}(P_{G_3}-P_{D_3}) \le 60 &:4 \\
& & & \frac{1}{3}(P_{G_2}-P_{D_2}) -\frac{1}{3}(P_{G_3}-P_{D_3}) \le 80 &:5 \\
& & & \frac{2}{3}(P_{G_2}-P_{D_2}) +\frac{1}{3}(P_{G_3}-P_{D_3}) \le 60 &:6 \\
& & & \frac{1}{3}(P_{G_2}-P_{D_2}) +\frac{2}{3}(P_{G_3}-P_{D_3}) \le 60 &:7 \\
& & & -\frac{1}{3}(P_{G_2}-P_{D_2}) +\frac{1}{3}(P_{G_3}-P_{D_3}) \le 80 &:8 \\
& & & P_{G_1} \le P_{G_1}^+ & :9\\
& & & P_{G_2} \le P_{G_2}^+ & :10\\
& & & P_{G_3} \le P_{G_3}^+ & :11\\
& & & P_{G_1}^- \le P_{G_1} & :12\\
& & & P_{G_2}^- \le P_{G_2} & :13\\
& & & P_{G_3}^- \le P_{G_3} & :14\\
\end{aligned}
\end{equation*}

\begin{figure}[htbp]
  \centering
  \includegraphics[width=0.6\linewidth]{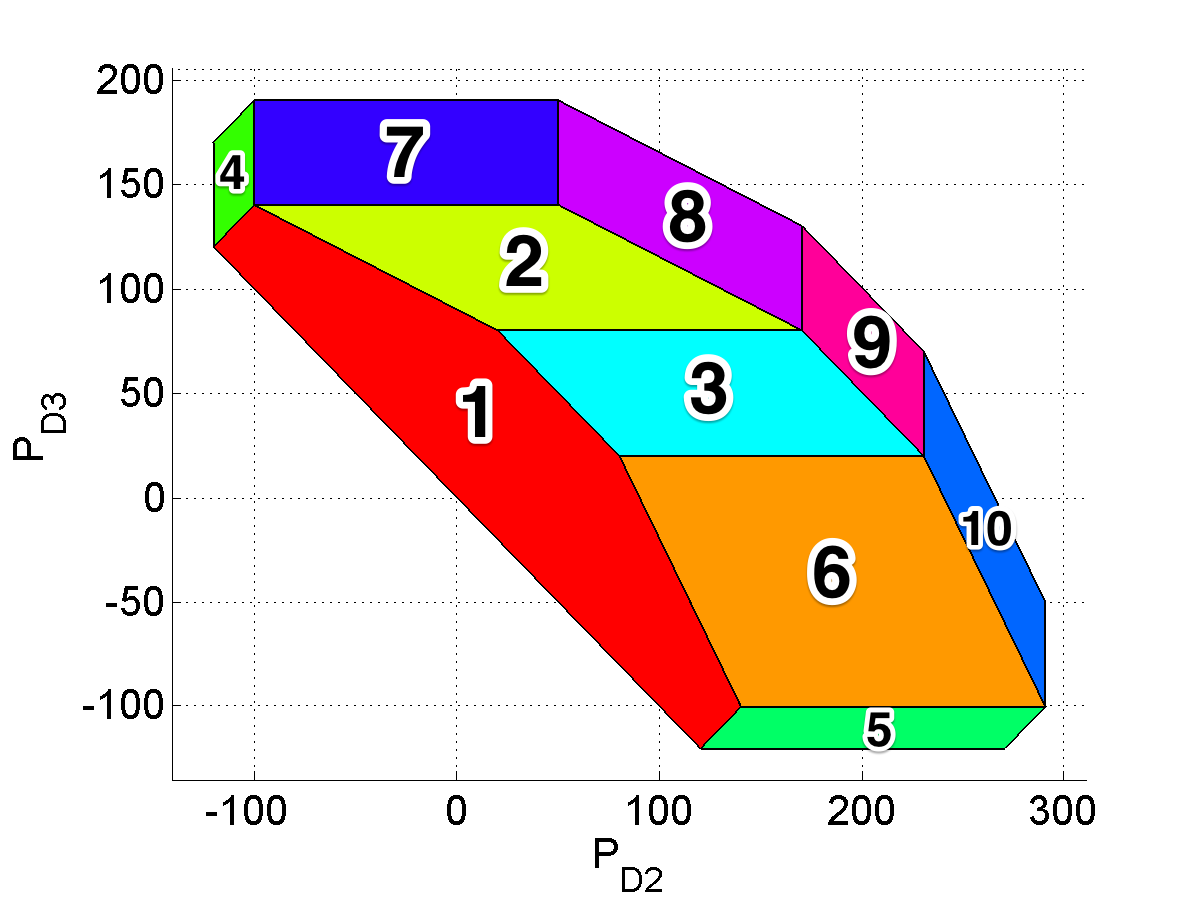}
  \caption{SPRs of the 3-bus System in Fig. \ref{fig:3BusSystem3Gene}}
  \label{fig:3Bus3GeneSystem_SPR}
\end{figure}

\begin{longtable}{l|ccc}
  \caption{Details of the SPRs in Fig. \ref{fig:3Bus3GeneSystem_SPR}}
  \label{tab:details_spr}   \\

  \hline

  \hline
  \textbf{SPR No.} & \textbf{System Pattern} & \textbf{Analytical Form of The SPRs} & \textbf{LMPs} \\
  \hline
  & & & \\
  1& 
$\begin{bmatrix}
    1\\
     2\\
    13\\
    14
\end{bmatrix}$
  & $\begin{bmatrix}  
    0.8944  &  0.4472\\
    0.4472  &  0.8944\\
   -0.7071 &   0.7071\\
    0.7071  & -0.7071\\
    0.7071 &   0.7071\\
   -0.7071  & -0.7071 
  \end{bmatrix}\begin{bmatrix}
    P_{D_2} \\ P_{D_3}
  \end{bmatrix} \le \begin{bmatrix}
       80.4984\\
   80.4984\\
  169.7056\\
  169.7056\\
   70.7107\\
         0
  \end{bmatrix}$ & 
  $\begin{bmatrix}
    20\\
    20\\
    20
  \end{bmatrix}$ \\
  & & & \\
  2& 
$\begin{bmatrix}
  1\\
  2\\
  4\\
  14
\end{bmatrix}$
   & $\begin{bmatrix}  
         0   & 1.0000\\
         0  & -1.0000\\
    0.4472  &  0.8944\\
   -0.4472  & -0.8944
  \end{bmatrix}\begin{bmatrix}
    P_{D_2} \\ P_{D_3}
  \end{bmatrix} \le \begin{bmatrix}
  140.0000\\
  -80.0000\\
  147.5805\\
  -80.4984
  \end{bmatrix}$ &   $\begin{bmatrix}
    20\\
    50\\
    80
  \end{bmatrix}$ \\ 
  & & & \\
  3& 
  $\begin{bmatrix}
  1\\
  2\\
  9\\
  14
\end{bmatrix}$
& $\begin{bmatrix}  
         0  & -1.0000\\
         0  &  1.0000\\
    0.7071  &  0.7071\\
   -0.7071  & -0.7071
  \end{bmatrix}\begin{bmatrix}
    P_{D_2} \\ P_{D_3}
  \end{bmatrix} \le \begin{bmatrix}
  -20.0000\\
   80.0000\\
  176.7767\\
  -70.7107
  \end{bmatrix}$ &   $\begin{bmatrix}
    50\\
    50\\
    50
  \end{bmatrix}$ \\
  & & & \\
  4& 
  $\begin{bmatrix}
  1\\
  2\\
  3\\
  14
\end{bmatrix}$
& $\begin{bmatrix}  
    1.0000 &        0\\
    0.7071 & -0.7071\\
   -0.7071 &   0.7071\\
   -1.0000  &       0
  \end{bmatrix}\begin{bmatrix}
    P_{D_2} \\ P_{D_3}
  \end{bmatrix} \le \begin{bmatrix}
 -100.0000\\
 -169.7056\\
  205.0610\\
  120.0000
  \end{bmatrix}$ &   $\begin{bmatrix}
    20\\
    50\\
    35
  \end{bmatrix}$ \\
  & & & \\
  5&
  $\begin{bmatrix}
  1\\
  2\\
  8\\
  14
\end{bmatrix}$
 &$\begin{bmatrix}  
         0  &  1.0000\\
    0.7071 &  -0.7071\\
         0 &  -1.0000\\
   -0.7071  &  0.7071
  \end{bmatrix}\begin{bmatrix}
    P_{D_2} \\ P_{D_3}
  \end{bmatrix} \le \begin{bmatrix}
 -100.0000\\
  275.7716\\
  120.0000\\
 -169.7056
  \end{bmatrix}$ &   $\begin{bmatrix}
    20\\
    50\\
    -10
  \end{bmatrix}$ \\
  & & & \\
  6&
  $\begin{bmatrix}
  1\\
  2\\
  5\\
  13
\end{bmatrix}$
 &$\begin{bmatrix}  
    1.0000   &      0\\
    0.7071  & -0.7071\\
   -0.7071  &  0.7071\\
   -1.0000  &       0
  \end{bmatrix}\begin{bmatrix}
    P_{D_2} \\ P_{D_3}
  \end{bmatrix} \le \begin{bmatrix}
 -100.0000\\
 -169.7056\\
  205.0610\\
  120.0000
  \end{bmatrix}$ &  $\begin{bmatrix}
    20\\
    -60\\
    100
  \end{bmatrix}$ \\
  & & & \\
  7&
  $\begin{bmatrix}
  1\\
  2\\
  4\\
  5
\end{bmatrix}$
 &$\begin{bmatrix}  
     0   &-1\\
     1  &  0\\
     0  &   1\\
    -1  &   0
  \end{bmatrix}\begin{bmatrix}
    P_{D_2} \\ P_{D_3}
  \end{bmatrix} \le \begin{bmatrix}
 -140.0000\\
   50.0000\\
  190.0000\\
  100.0000
  \end{bmatrix}$ &   $\begin{bmatrix}
    20\\
    50\\
    100
  \end{bmatrix}$ \\
  & & & \\
  8&
  $\begin{bmatrix}
  1\\
  2\\
  4\\
  10
\end{bmatrix}$
 & $\begin{bmatrix}  
   -1.0000  &       0\\
   -0.4472  & -0.8944\\
    1.0000 &        0\\
    0.4472 &   0.8944
  \end{bmatrix}\begin{bmatrix}
    P_{D_2} \\ P_{D_3}
  \end{bmatrix} \le \begin{bmatrix}
  -50.0000\\
 -147.5805\\
  170.0000\\
  192.3018
  \end{bmatrix}$&   $\begin{bmatrix}
    20\\
    60\\
    100
  \end{bmatrix}$ \\
  & & & \\
  9&
  $\begin{bmatrix}
  1\\
  2\\
  9\\
  10
\end{bmatrix}$
 & $\begin{bmatrix}  
    1.0000 &        0\\
   -1.0000 &        0\\
   -0.7071 &  -0.7071\\
    0.7071 &   0.7071
  \end{bmatrix}\begin{bmatrix}
    P_{D_2} \\ P_{D_3}
  \end{bmatrix} \le \begin{bmatrix}
  230.0000\\
 -170.0000\\
 -176.7767\\
  212.1320
  \end{bmatrix}$ &   $\begin{bmatrix}
    100\\
    100\\
    100
  \end{bmatrix}$  \\   
  & & & \\
10 &
$\begin{bmatrix}
  1\\
  2\\
  3\\
  10
\end{bmatrix}$
 &$\begin{bmatrix}  
    1.0000  &       0\\
   -0.8944  & -0.4472\\
   -1.0000  &       0\\
    0.8944  &  0.4472
  \end{bmatrix}\begin{bmatrix}
    P_{D_2} \\ P_{D_3}
  \end{bmatrix} \le \begin{bmatrix}
  290.0000\\
 -214.6625\\
 -230.0000\\
  237.0232
  \end{bmatrix}$ &   $\begin{bmatrix}
    20\\
    180\\
    100
  \end{bmatrix}$ \\
  & & & \\
  \hline

  \hline
\end{longtable}


\section{System Pattern Regions in 3D Space} 
\label{sec:system_pattern_regions_in_3d_space}
For better illustration, we only visualized the 2-dimension SPRs in previous sections. But the SPRs are usually polyhedrons in high-dimension space. The visualization of the 3-dimension SPRs could help readers get more intuition on the high-dimension SPR/polyhedron.
One load is added to the 3-bus system in Fig. \ref{fig:3Bus2GeneSystem}, and the new 3-bus 2-generator 3-load system is shown in Fig. \ref{fig:3Bus2Gen3Load}. Since there are 3 loads in the system, the SPRs locate in the 3D space. Similar with the 2D case, the SPRs are polyhedrons and there exists a separating hyperplane between any two SPRs.

\begin{figure}[htbp]
  \centering
  \includegraphics[width=0.6\linewidth]{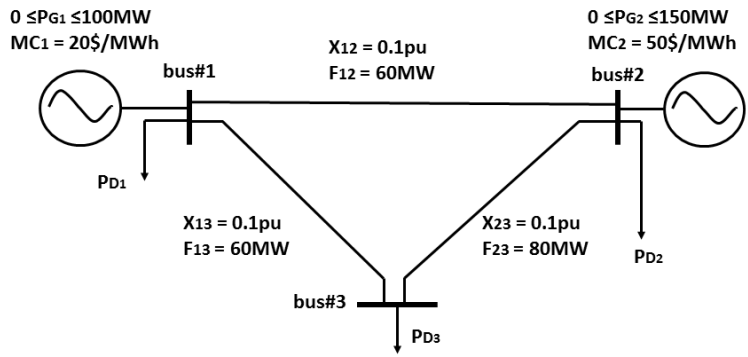}
  \caption{3-bus 2-generator 3-load System}
  \label{fig:3Bus2Gen3Load}
\end{figure}

\begin{figure}[htbp]
  \centering
  \includegraphics[width=\linewidth]{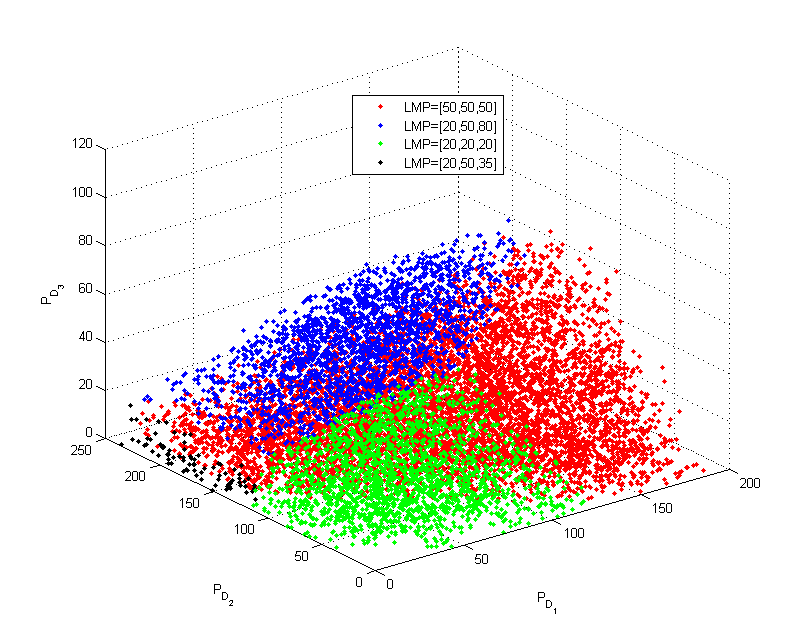}
  \caption{3D System Pattern Regions}
\end{figure}

\begin{figure}[htbp]
  \centering
  \includegraphics[width=0.45\linewidth]{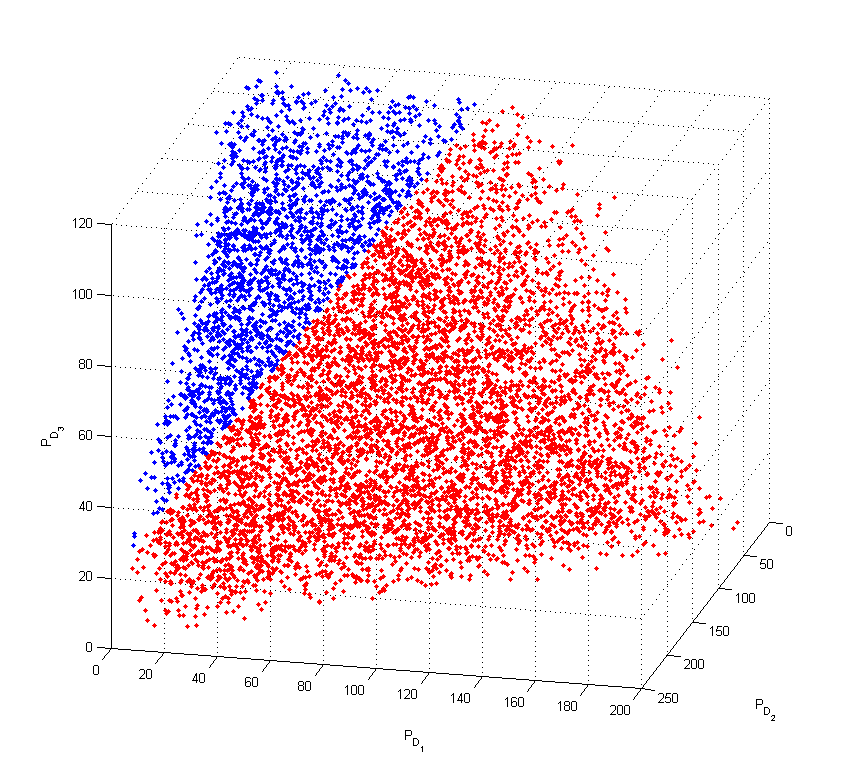}
  \includegraphics[width=0.45\linewidth]{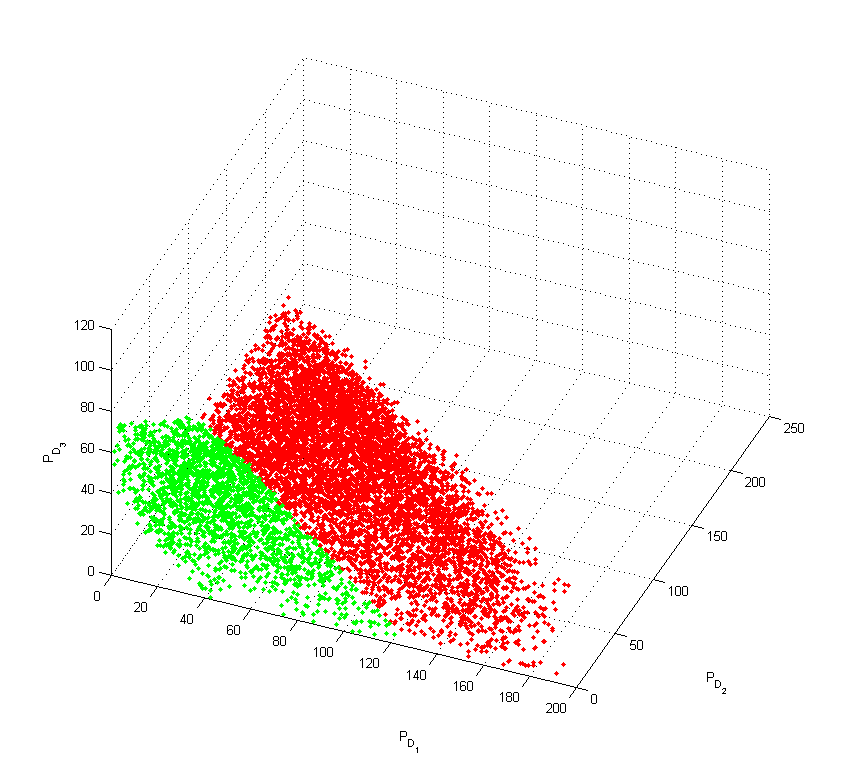} \\
  \includegraphics[width=0.45\linewidth]{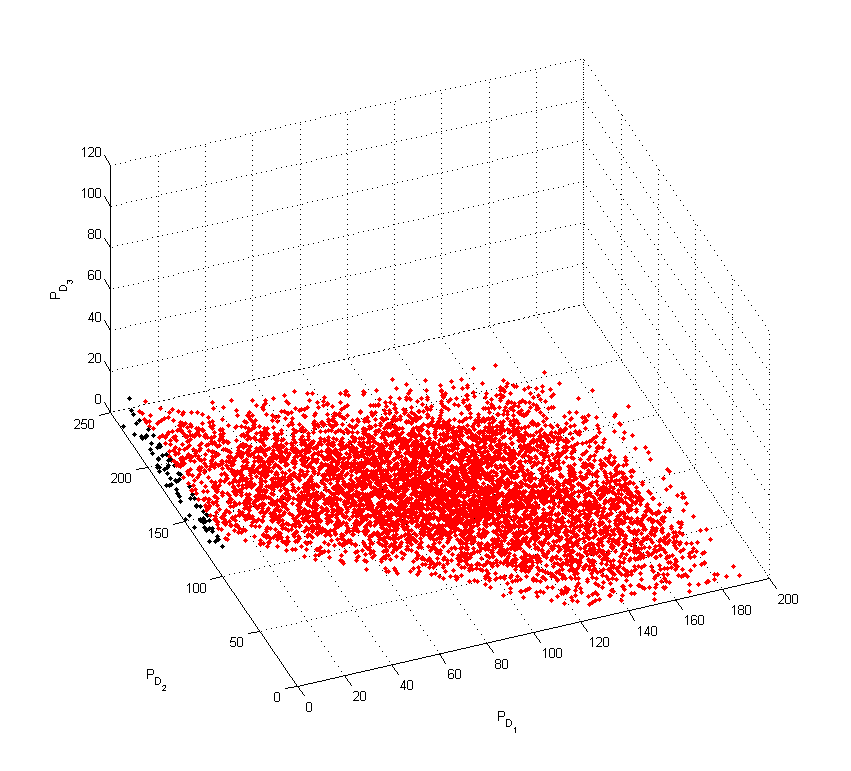} 
  \includegraphics[width=0.45\linewidth]{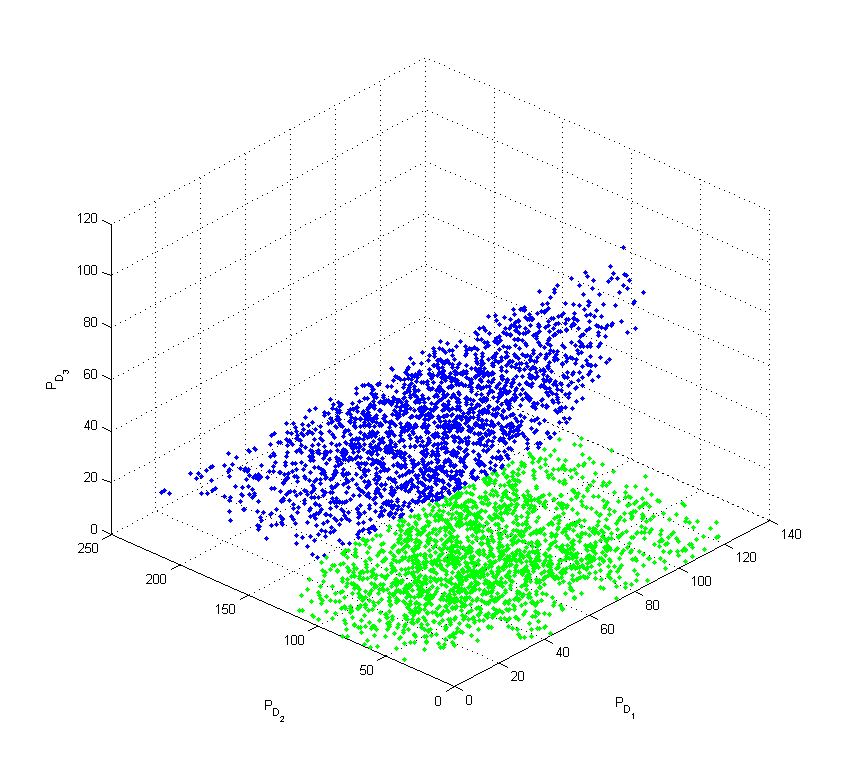} \\
  \includegraphics[width=0.45\linewidth]{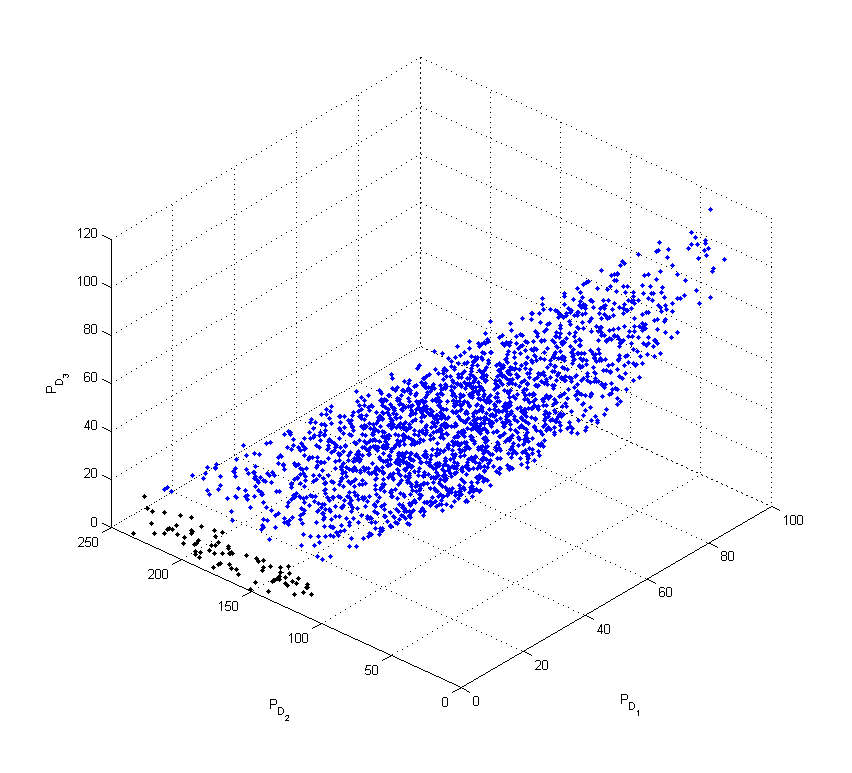}
  \includegraphics[width=0.45\linewidth]{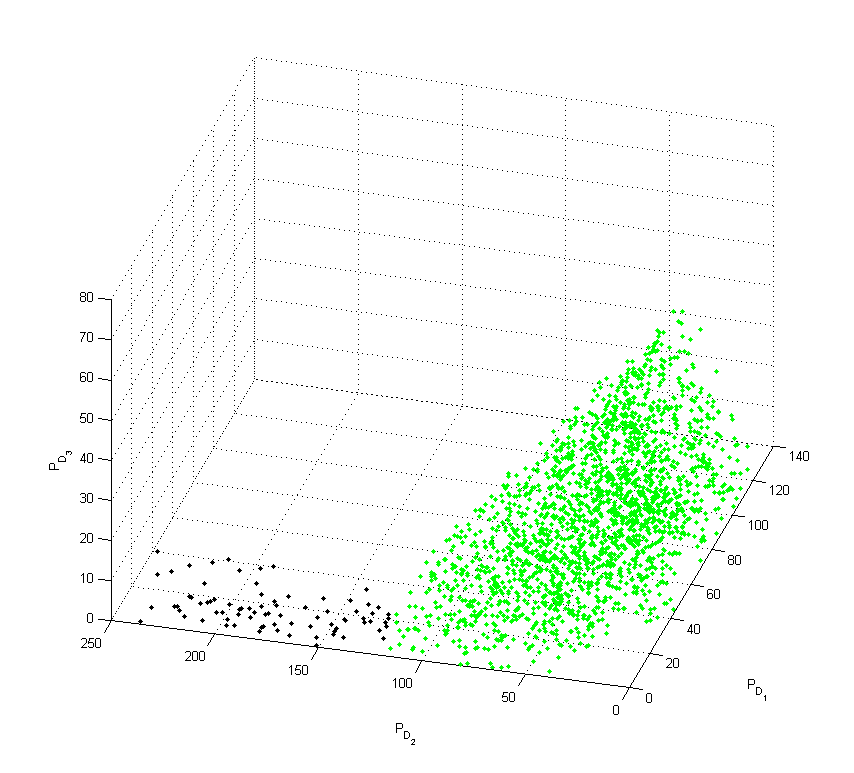}
  \caption{Pairwise Visualization of System Pattern Regions}
\end{figure}

\newpage
\bibliographystyle{IEEEtran}
\bibliography{MyCollection}

\end{document}